\documentclass[lettersize,journal,10pt]{IEEEtran}
\usepackage{amsmath,amsfonts}
\usepackage{algorithmic}
\usepackage{algorithm}
\usepackage{array}
\usepackage{textcomp}
\usepackage{stfloats}
\usepackage{url}
\usepackage{verbatim}
\usepackage{graphicx}
\usepackage{cite}
\usepackage[nolist,nohyperlinks]{acronym}
\begin{acronym}
    \acro{HAP}{High-altitude platform station}
    \acro{NTN}{non-terrestrial network}
    \acro{LEO}{low Earth orbit}
    \acro{SCR-CP}{Small-Circle Ring Cox Process}
    \acro{TBS}{Terrestrial Base Stations}
    \acro{CCDF}{complementary cumulative density function}
    \acro{CDF}{cumulative density function}
    \acro{PDF}{probability density function}
    \acro{PPP}{Poisson point process}
    \acro{BPP}{binomial point process}
    \acro{SINR}{signal-to-interference-plus-noise ratio}
    \acro{SIR}{signal-to-interference ratio}
    \acro{LOS}{line-of-sight}
    \acro{NLOS}{non line-of-sight}
    \acro{PGFL}{probability generating functional}
    \acro{5G}{fifth generation}
    \acro{PLP}{Poisson line process}
    \acro{BLP}{binomial line process}
    \acro{SG}{Stochastic geometry}
    \acro{PLE}{path-loss exponent}
    \acro{NOMA}{non-orthogonal multiple access}
    \acro{UE}{user equipment}
    \acro{CEE}{coverage energy efficiency}
    \acro{SCR-PCP}{small-circle ring Poisson Cox process}
    \acro{SCR-BCP}{small-circle ring binomial Cox process}
\end{acronym}

\usepackage{subfig}

\usepackage{xcolor}
\usepackage{float}
\usepackage{svg}
\usepackage{pdfpages}
\usepackage{comment}

\usepackage{graphicx,wrapfig,lipsum}
\usepackage{rotating}
\usepackage{amsmath, amssymb, amsthm}
\usepackage{stmaryrd}
\usepackage{tikz}
\usepackage{verbatim}
\usepackage{url}
\usepackage[utf8]{inputenc}
\usepackage[english]{babel}

\newtheorem{theorem}{Theorem}[]

\newtheorem{remark}{Remark}[]

\newtheorem{lemma}[]{Lemma}

\usepackage{multicol}

\usepackage{scalerel}
\usepackage{multicol}
\usepackage{setspace}
\newtheorem{definition}{Definition}
\usepackage{booktabs}
\usepackage{bbm}
\usepackage[normalem]{ulem}
\usepackage{color}
\usepackage{dsfont}
\usepackage{bm}
\usepackage{setspace}

\usetikzlibrary{automata}
\usetikzlibrary{arrows}
\usetikzlibrary{shapes.geometric, arrows}
\usepackage[]{footmisc}
\usetikzlibrary{positioning}
\usetikzlibrary{calc}
\usetikzlibrary{arrows}
\usepackage{dirtytalk}
\allowdisplaybreaks

\usepackage{lipsum}

\usepackage[margin = 0.3cm]{caption}
\usepackage{mathtools}
\usepackage{csquotes}

\makeatletter
\newcommand{\vast}{\bBigg@{3}}
\newcommand{\Vast}{\bBigg@{4}}
\makeatother

\usepackage{multirow}
\usepackage{tabularx}

\begin{document}

\title{A Spherical Stochastic Geometry Framework for Patrol-Based HAPs Network: Coverage and Energy Efficiency Analysis}

\author{
\IEEEauthorblockN{Mohammad~Taha~Shah}, {\it Graduate Student Member, IEEE}, \IEEEauthorblockN{ Mohamed-Slim~Alouini}, {\it Fellow, IEEE}
\thanks{Mohammad Taha Shah and Mohamed-Slim Alouini are with the CEMSE division, KAUST, Thuwal, Saudi Arabia, 23955-6900; Email: \{mohammadtaha.shah, slim.alouini\}@kaust.edu.sa}
\vspace{-0.5cm}}

\maketitle

\begin{abstract}
This paper develops a stochastic-geometry framework for high-altitude platform station (HAPs) networks in which platforms execute cyclic patrol trajectories anchored to designated service regions. We introduce two small-circle ring Cox process models on the spherical Earth. In the small-circle ring Poisson Cox process (SCR-PCP), platforms form one-dimensional Poisson point processes on localized patrol rings, whereas in the small-circle ring binomial Cox process (SCR-BCP), each ring contains a fixed number of uniformly distributed platforms. We establish the isotropy of both models and derive spatial statistics, including the distributions of the nearest-anchor, nearest-ring, and nearest-HAPs distances, together with the joint serving-distance and serving-ring-angle distribution required for SCR-BCP analysis. Building on these results, we derive coverage probability expressions under nearest-HAPs association by decomposing aggregate interference into same-ring and other-ring components and characterizing their conditional Laplace transforms. To account for the flight dynamics of patrol-based HAPs, we integrate a steady‑circular‑flight propulsion model with the communication analysis and introduce a coverage energy efficiency (CEE) metric. This yields an analytical condition for the energy-optimal patrol radius that balances coverage performance against the propulsion cost of circular flight. Numerical results reveal fundamental differences between intensity-driven (SCR-PCP) and finite-fleet (SCR-BCP) deployments and demonstrate that patrol geometry, platform density, and cruising velocity should be jointly optimized to achieve energy-efficient HAPs operation.
\end{abstract}

\begin{IEEEkeywords}  
Stochastic geometry, unmanned aerial vehicles, high-altitude platform systems, Cox process, Energy optimization.
\end{IEEEkeywords}

\section{Introduction}
\subsection{Background}
\acp{HAP} are emerging as a pivotal component of future \acp{NTN}, designed to complement terrestrial infrastructure and \ac{LEO} satellite constellations in beyond-5G and 6G architectures \cite{kurt2021vision,abbasi2024haps}. Operating in the stratosphere at altitudes between 20 and 50 km, \acp{HAP} combine key advantages of both satellite and terrestrial systems: they provide the wide coverage footprint and dominant \ac{LOS} conditions of spacecraft while retaining the lower latency, agility, and upgradeability of \acp{TBS} \cite{kurt2021vision,lou2025coverage}. As a result, \acp{HAP} are increasingly viewed as an attractive solution for extending broadband connectivity to rural areas, maritime regions, and island communities, as well as a resilient intermediate layer in vertical heterogeneous networks and space-air-ground-sea integrated architectures \cite{lin2025connectivity,lin2026haps,xu2023space,yuan2023joint}.

As \ac{HAP} deployments evolve from single-platform demonstrations to large-scale constellations, performance evaluation should scale accordingly. Network-level analysis is required to quantify aggregate interference, spatial reuse, and macro-diversity as functions of deployment density and system parameters \cite{huang2024system,bliss2024orchestrating}. Stochastic geometry has therefore become a standard framework for NTN analysis, yielding tractable expressions for coverage, rate, and connectivity \cite{haenggi2013stochastic,andrews2011tractable}. However, accurately applying these tools to \acp{HAP} requires accounting for their unique operational characteristics. Unlike terrestrial base stations or orbiting satellites, \acp{HAP} are aerodynamically constrained platforms that continuously loiter over designated service regions.

Realistic \ac{HAP} deployments exhibit three defining characteristics that are not simultaneously captured by existing models: geographically anchored service regions, bounded patrol trajectories, and operation on a spherical Earth where curvature and horizon effects are non-negligible. These features motivate the need for a network-scale framework that jointly captures spherical geometry, anchor-based deployment, and cyclic patrol motion.

\subsection{Motivation for SCR-PCP and SCR-BCP}
To capture the deployment characteristics discussed above, we introduce two \emph{small-circle ring Cox process} models for \ac{HAP} networks. In both models, \ac{HAP} anchors\footnote{Nominal loitering centers associated with fixed geographic service regions.} form a homogeneous \ac{PPP} on a spherical shell, and each anchor generates a localized patrol ring in the corresponding tangent plane. This construction preserves the anchor--patrol structure inherent to realistic \ac{HAP} deployments while retaining analytical tractability. The two models differ only in the point process used to populate each patrol ring and represent two complementary deployment regimes: intensity-driven loading and dimensioned finite fleets. The distinction is summarized in Table~\ref{tab:scr_models}.
\begin{table*}[t]
\centering
\small
\begin{tabular}{|p{2.2cm}|p{4cm}|p{10cm}|}
\hline
\textbf{Model} & \textbf{Intra-ring structure} & \textbf{Representative scenarios} \\ \hline
\textbf{SCR-PCP} & 1-D \ac{PPP} with linear intensity & Dynamic or partially coordinated fleets; random ON/OFF activity; maintenance overlap; multi-operator sharing; demand-driven densification. \\ \hline
\textbf{SCR-BCP} & Binomial with fixed $n_{\rm H}$ points per ring & Dimensioned fleets with a prescribed number of platforms per region; primary and backup platforms loitering on essentially the same orbit; companion \acp{HAP} providing relaying, sensing, or RIS functions on the same trajectory; strictly CapEx-limited designs where the per-ring population is an explicit planning variable. \\ \hline
\end{tabular}
\caption{Comparison of SCR-PCP and SCR-BCP models.}
\label{tab:scr_models}
\end{table*}

In the \emph{\ac{SCR-PCP}}, the platforms on each patrol ring form a one-dimensional \ac{PPP} with linear intensity $\nu$. This model is appropriate when the number of \emph{active} platforms associated with a service region fluctuates over time due to platform availability, maintenance overlap, multi-operator sharing, or demand-driven activation. Beyond its practical relevance, the Poisson assumption provides a canonical stochastic-geometry baseline that enables tractable characterization of distance distributions, coverage probability, and \ac{CEE}. Consequently, \ac{SCR-PCP} captures deployments in which the instantaneous patrol-ring load is random and intensity-driven.

In the \emph{\ac{SCR-BCP}}, each patrol ring contains $n_{\rm H}$ platforms that are independently and uniformly distributed along the trajectory. This setting reflects engineered deployments in which the fleet assigned to a service region is explicitly dimensioned. Importantly, \ac{SCR-BCP} does not assume that multiple platforms are always required; the case $n_{\rm H}=1$ is naturally included. Rather, the model enables systematic analysis of scenarios in which redundancy, capacity augmentation, or heterogeneous payload support introduces additional co-orbital platforms. Examples include primary and backup \acp{HAP} sharing the same patrol orbit, temporary densification over traffic hotspots, and mixed-mode deployments in which communication platforms operate alongside relay, sensing, or RIS payloads.

In such deployments, the per-ring population becomes a planning variable rather than a random outcome. As a result, finite-population effects and persistent same-ring interference emerge naturally and cannot be captured by a purely Poisson model. \ac{SCR-BCP} incorporates these effects and enables quantification of how co-orbital platforms influence interference, coverage, and energy-aware patrol design. Together, \ac{SCR-PCP} and \ac{SCR-BCP} provide a geometry-exact framework for modeling patrol-based \ac{HAP} networks on the spherical Earth. While \ac{SCR-PCP} represents intensity-driven deployments with random per-ring loading, \ac{SCR-BCP} captures dimensioned fleets with prescribed per-region populations. The two models therefore bracket a broad range of practical operating regimes while preserving the anchor-based topology and cyclic patrol dynamics characteristic of realistic \ac{HAP} systems.

\subsection{Related Work}
\ac{SG} and related mathematical tools have been widely applied to \ac{NTN}s, but existing studies fall into three broad categories, each capturing only part of the structure exhibited by realistic \ac{HAP} deployments.

\subsubsection{High-Fidelity Single-Node Models}
The first direction focuses on realistic system constraints and propagation environments, but simplifies the broader network geometry. For example, Javed \textit{et al.} \cite{javed2023interdisciplinary,javed2024system} provide an interdisciplinary framework that jointly optimizes communication performance, size-weight-and-power (SWaP) constraints, and \ac{NOMA} designs for solar-powered \acp{HAP}. Similarly, \cite{jang2026analytical} derives coverage and ergodic capacity for a single \ac{HAP} flying on a circular trajectory over an urban area, while \cite{lin2026haps} formulates channel-specific models for maritime and island topographies. These contributions offer detailed physical-layer insights but restrict attention to isolated platforms, precluding the analysis of network-wide interference, spatial reuse, and multi-platform handover dynamics.

\subsubsection{Static Network-Scale Models}
A second direction extends the analysis to the network scale by treating \acp{HAP} as stationary points within a broader integrated architecture. Works such as \cite{lou2025coverage,wei2023spectrum,shamsabadi2022handling,matracia2023uav} model \acp{HAP} and \acp{TBS} as independent \acp{PPP} to analyze spectrum sharing, interference management, and tier interactions. Furthermore, \cite{lin2025connectivity} uses percolation theory to characterize large-scale connectivity phase transitions, while \cite{tarhouni2025performance,song2022cooperative} evaluate \acp{HAP} acting as static relays in cooperative SAGIN architectures. Authors in \cite{tripleC2025} explore the Triple-C paradigm for \ac{TBS}-\ac{HAP}-\ac{LEO} integration, emphasizing that \acp{HAP} are expected to operate as a dense, structured intermediate layer between terrestrial and orbital networks. While these models successfully capture multi-tier interference, approximating \acp{HAP} as static \ac{PPP} nodes fundamentally obscures the spatial correlation and temporal macro-diversity induced by their continuous aerodynamic loitering.

\subsubsection{Structured Orbit Models for Satellites}
The third category recognizes that \ac{NTN}s often exhibit structured spatial configurations that cannot be adequately represented by uniform point processes. For satellite systems, \cite{okati2020downlink,talgat2020nearest,talgat2020stochastic} derive coverage and rate for \ac{LEO} constellations, and \cite{jung2022performance} incorporates shadowed-Rician fading to better represent satellite channels. Initial modeling approaches in \cite{okati2020downlink,talgat2020nearest,talgat2020stochastic} used spherical binomial or Poisson processes, but these have recently been superseded by structured Cox models. In particular, \cite{choi2024novel,choi2025stochastic} introduce orbit-aware Cox and spherical \ac{SG} models that generate Walker constellations, linking orbital mechanics to coverage performance. Alongside these developments, \cite{wang2025non,wang2025modeling} quantify the necessity of using spherical rather than planar geometry for high-altitude networks to avoid non-negligible projection errors. However, these structured models are tailored to orbital mechanics, where satellites traverse circular orbits, and cannot be directly mapped to \acp{HAP}. Unlike satellites, \ac{HAP} trajectories are locally bounded patrol rings tied to specific anchor points (e.g., cities or designated maritime corridors). While Cox processes have been successfully applied to terrestrial vehicular networks on random line processes \cite{chetlur2018coverage,choi2018poisson}, they have not yet been adapted to localized aerial patrol structures on the spherical Earth.

The above three directions highlight complementary but incomplete perspectives: single-node \ac{HAP} studies capture detailed aerodynamics and channel physics but lack network-scale interference; static \ac{PPP}-based models capture large-scale interference but erase patrol-induced correlation and temporal macro-diversity; and orbit-aware Cox models capture structured motion on the sphere but are tailored to satellite great-circle trajectories rather than locally anchored patrol rings. Consequently, there is currently no network-scale mathematical framework that simultaneously (i) respects the global spherical geometry, (ii) preserves the anchor-based topology of \ac{HAP} deployments, and (iii) embeds the cyclic patrol motion of the platforms in a way that remains tractable for coverage analysis. The \ac{SCR-PCP} and \ac{SCR-BCP} proposed in this paper are designed precisely to fill this gap. Conceptually, they play for \acp{HAP} the same role that orbit-aware Cox models play for satellite constellations \cite{choi2024novel}, substituting large-radius orbital tracks with localized patrol rings anchored to a \ac{PPP} of service centers. By encoding the patrol-ring structure directly into the point process and offering both a Poisson and a binomial intra-ring kernel, \ac{SCR-PCP} and \ac{SCR-BCP} preserve the geometric correlation between serving links and interferers and enable geometry-exact expressions for distance and coverage.

\subsection{Contributions}
The main contributions are summarized as follows.
\begin{itemize}
    \item \textbf{Small-circle ring Cox models for \ac{HAP} networks:} We introduce two spherical Cox-process models for patrol-based \ac{HAP} deployments. In the \ac{SCR-PCP}, platforms form a one-dimensional \ac{PPP} on each patrol ring, whereas in the \ac{SCR-BCP}, each ring contains a fixed number $n_{\rm H}$ of uniformly distributed platforms. The proposed framework captures anchor-based deployment, cyclic patrol motion, and spherical geometry while representing both intensity-driven and finite-fleet operating regimes.
    \item \textbf{Isotropy and geometry-exact spatial statistics:} We establish the isotropy of the proposed models and derive spatial statistics, including the distributions of the nearest-anchor, nearest-ring, and nearest-\ac{HAP} distances. For \ac{SCR-BCP}, we additionally characterize the joint distribution of the serving distance and serving-ring angle, which is subsequently used in the coverage analysis.
    \item \textbf{Coverage probability analysis:} We derive coverage probability expressions for both \ac{SCR-PCP} and \ac{SCR-BCP}. The analysis decomposes aggregate interference into same-ring and other-ring components and yields conditional Laplace transforms that capture the distinct structures of the two models.
    \item \textbf{Energy-aware patrol optimization:} We integrate fixed-wing flight mechanics with \ac{SG} performance analysis by introducing a \ac{CEE} framework. Using a steady-circular-flight propulsion model, we derive the stationarity condition for the energy-optimal patrol radius and quantify the interplay between patrol geometry, cruising velocity, and network topology in energy-efficient \ac{HAP} operation.
\end{itemize}

\subsection{Notation and Organization}
Throughout the paper, the subscript $k \in \{\mathrm{P},\mathrm{B}\}$ is used to denote quantities associated with the \ac{SCR-PCP} and \ac{SCR-BCP} models, respectively. The remainder of the paper is organized as follows. Section~\ref{sec:sys_model} introduces the spherical patrol-ring system model and the proposed \ac{SCR-PCP} and \ac{SCR-BCP} constructions. Section~\ref{sec:stats} establishes the fundamental spatial statistics of the two models. Section~\ref{sec:coverage} derives coverage probability expressions under nearest-\ac{HAP} association. Section~\ref{sec:energy} incorporates aerodynamic flight mechanics and develops the \ac{CEE} framework. Section~\ref{sec:numerical_results} presents numerical results, and Section~\ref{sec:conclusion} concludes the paper.

\section{System Model}
\label{sec:sys_model}
\subsection{Global Geometry and the Patrol-Ring Process}
We consider a spherical Earth of radius $R_\oplus$ centered at the origin in $\mathbb{R}^3$. \acp{HAP} operate at a geometric altitude $H$, so that their loitering centers, which we refer to as \emph{anchors}, lie on the concentric spherical shell $\mathcal{S}_{R_{\rm H}} = \{ x \in \mathbb{R}^3 : \|x\| = R_{\rm H} \}$, where $R_{\rm H} = R_\oplus + H$. To describe a large-scale, globally distributed \acp{HAP} architecture, we model the anchor locations as a homogeneous \ac{PPP} $\Phi_u \subset \mathcal{S}_{R_{\rm H}}$ with surface intensity $\lambda_{\rm u}$ (anchors per km$^2$). Following the standard Palm conditioning formalism, we evaluate network performance from the perspective of a \emph{typical user} located at the North Pole, $\mathbf{o} = (0,0,R_\oplus)$. By isotropy of $\Phi_u$ (see Theorem~\ref{thm:isotropy}), this choice is without loss of generality. The global geometry between the typical user at $\mathbf{o}$ and any anchor $u \in \Phi_u$ is fully characterized by the central angle $\phi = \angle(\mathbf{o}, u) = \arccos\left( \frac{\mathbf{o} \cdot u}{R_\oplus R_{\rm H}} \right)$, which corresponds to the polar angle in spherical coordinates with respect to the user’s zenith.

Unlike static \acp{TBS}, \acp{HAP} maintains continuous aerodynamic flight. To capture this local patrol behavior, we associate with each anchor $u \in \Phi_u$ a deterministic patrol ring embedded in the local tangent plane of the \ac{HAP} shell. Without loss of generality, we fix the azimuth angle $\theta = 0$ and represent the anchor’s Cartesian coordinates as $ u = (R_{\rm H} \sin\phi,\, 0,\, R_{\rm H} \cos\phi)$, with outward unit normal vector $n_u = u / \|u\|$. The local tangent plane at the anchor is $\Pi(u) = \{ x \in \mathbb{R}^3 : (x - u) \cdot n_u = 0 \}$. To parameterize positions within $\Pi(u)$, we introduce an orthonormal basis $(e_1,e_2)$ satisfying $e_1 = (0,1,0)$ and $e_2 = (-\cos\phi, 0, \sin\phi)$. In this coordinate system, $e_2$ is tangent to the great circle passing through the user and the anchor, while $e_1$ is orthogonal to this plane.

We model the patrol trajectory associated with anchor $u$ as a Euclidean circle of radius $a = R_{\rm H} \tan\gamma$, where $\gamma \in (0,\pi/2)$ controls the angular footprint of the patrol orbit. The corresponding patrol ring is
\begin{align*}
    \mathcal{C}(u,\gamma) = \left\{ u + a(\cos t \, e_1 + \sin t \, e_2) : t \in [0,2\pi) \right\}.
\end{align*}
The collection of all such rings forms a \textit{patrol-ring process} on the spherical manifold: $\mathcal{L} = \big\{ \mathcal{C}(u,\gamma)\colon u \in \Phi_u \big\}.$ This anchor-patrol construction plays the same structural role for \ac{HAP} as a \ac{PLP}~\cite{chetlur2018coverage} or \ac{BLP}~\cite{shah2024binomial} does for vehicular networks on random streets, and it serves as the geometric backbone of the \ac{SG} models developed in this paper.
\begin{figure}[t]
    \centering
    \includegraphics[trim={15.1cm 4.6cm 14cm 3.7cm},clip,width = 0.35\textwidth]{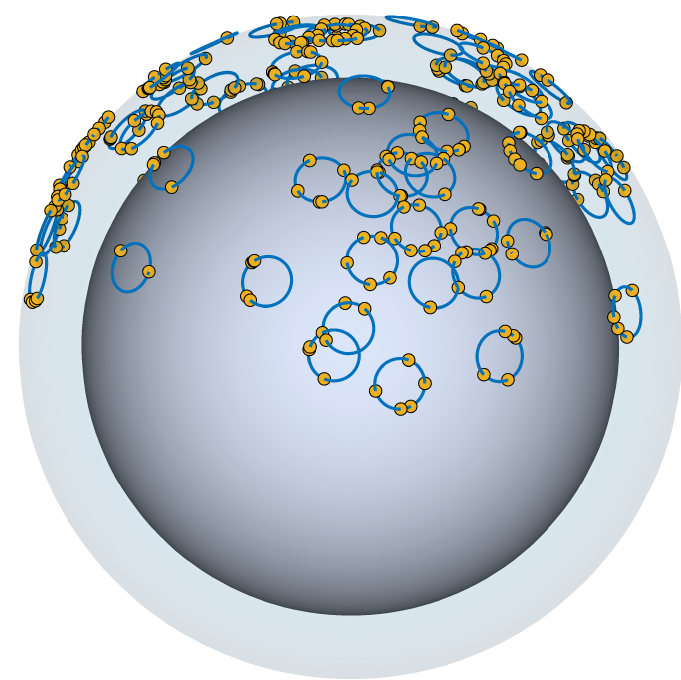}
    \caption{3D realization of the \ac{SCR-PCP} model. Anchors form a \ac{PPP} on the spherical shell of radius $R_{\rm H}$, while blue circles represent the patrol rings of radius $a$. Yellow points denote \ac{HAP} locations on the patrol rings.}
    \label{fig:fig_1}
\end{figure}

\subsection{Small-Circle Ring Poisson and Binomial Cox Processes}
Conditioned on the patrol-ring process $\mathcal{L}$, the active \acp{HAP} on each ring $\mathcal{C}(u,\gamma)$ are modeled by a Cox construction. We consider two intra-ring models that share the same anchor \ac{PPP} and patrol geometry but differ in the point process used to populate each ring.

\subsubsection{Small-Circle Ring Poisson Cox Process (SCR-PCP)}
In the \ac{SCR-PCP}, the \acp{HAP} associated with anchor $u$ form an independent one-dimensional homogeneous \ac{PPP} $\mathcal{P}_{u}$ on $\mathcal{C}(u,\gamma)$ with linear intensity $\nu$ (\acp{HAP} per km) with respect to arc length. The aggregate point process on $\mathbb{R}^3$ is therefore a Cox point process driven by $\mathcal{L}$:
\begin{align*}
    \Phi_{\rm H}^{\rm P} = \bigcup_{u \in \Phi_u} \mathcal{P}_{u}.
\end{align*}
We refer to this two-level construction as \ac{SCR-PCP}, and it is illustrated in Fig.~\ref{fig:fig_1}.

\subsubsection{Small-Circle Ring Binomial Cox Process (SCR-BCP)}
In many practical deployments, the number of platforms assigned to each patrol region is explicitly dimensioned rather than derived from a linear density. To capture such finite-population configurations, we introduce a second Cox model, the \ac{SCR-BCP}. In the \ac{SCR-BCP}, the anchor \ac{PPP} $\Phi_u$ and patrol-ring geometry $\mathcal{L}$ are unchanged, but the intra-ring point process is modified as follows.

Conditioned on an anchor $u$, the associated \acp{HAP} form a \ac{BPP} $\mathcal{B}_{u}$ on $\mathcal{C}(u,\gamma)$ consisting of $n_{\rm H}$ platforms. Specifically, $\mathcal{B}_{u}={x_{u,1},\ldots,x_{u,n_{\rm H}}}$, where the platform locations are given by $x_{u,k}=u+a(\cos\Gamma_{u,k} e_1+\sin\Gamma_{u,k} e_2)$ and $\Gamma_{u,1},\ldots,\Gamma_{u,n_{\rm H}}$ are i.i.d.\ $\mathrm{Unif}[0,2\pi)$ random variables. The resulting process is
\begin{align*}
    \Phi_{\rm H}^{\rm B} = \bigcup_{u \in \Phi_u} \mathcal{B}_{u}.
\end{align*}
The \ac{SCR-BCP} is therefore also a Cox process on $\mathbb{R}^3$, driven by the same anchor \ac{PPP} as \ac{SCR-PCP} but characterized by a binomial intra-ring kernel. As we show later, this replacement changes the ring-level void probabilities and the interference statistics while preserving all geometric quantities.

In contrast to planar approximations or static spherical node placements, both formulations rigorously couple the global spherical topology with local flight kinematics. The 3D formulation preserves Earth’s curvature, horizon constraints, and the altitude-dependent mapping between central angles and Euclidean distances. In the remainder of the paper, we use \ac{SCR-PCP} and \ac{SCR-BCP} to denote the Poisson and binomial kernel models, respectively. When focusing on a single model, we will occasionally drop the superscript for notational simplicity.

\subsection{Channel and Propagation Model}
Building upon the spatial models $\Phi_{\rm H}^{\rm P}$ and $\Phi_{\rm H}^{\rm B}$, we adopt standard assumptions on propagation and association, chosen to emphasize the impact of the patrol geometry rather than the fine details of the channel. The large-scale path loss between a \ac{HAP} at $x \in \mathbb{R}^3$ and the typical user at $\mathbf{o}$ follows a power law $\ell(\|x-\mathbf{o}\|) = K \|x-\mathbf{o}\|^{-\eta},$ where $\eta > 2$ is the path-loss exponent and $K = \left( \frac{c}{4\pi f_c} \right)^2$ is the free-space reference gain at $1$ km for carrier frequency $f_c$ and speed of light $c$. 

Small-scale fading on each \ac{HAP}-to-ground link is modeled as independent Rayleigh fading \cite{haenggi2013stochastic,andrews2011tractable}, so that the power gain on the link from a \ac{HAP} at $x$ is $h_x \sim \mathrm{Exp}(1)$. More general fading distributions (e.g., Nakagami-$m$) could be considered without altering the geometric arguments. Each \ac{HAP} transmits with constant power $P_t$ and uses a directional antenna with main-lobe gain $G_t$ toward its associated user. Interfering signals are observed through a side-lobe gain $G_I \leq G_t$. The user equipment has an isotropic antenna with gain $G_r = 1$. Under these assumptions, the received power from a \acp{HAP} at distance $d$ is $P_{\rm r}(d) = P_t G_t G_r h_x K d^{-\eta},$ while the aggregate interference is the sum of the received powers from all other concurrently transmitting \ac{HAP}.

\subsection{Association Rule and SIR Definition}
In both \ac{SCR-PCP} and \ac{SCR-BCP}, the typical user is associated with the nearest \ac{HAP} in the corresponding point process. Let $x_{0, \rm P} = \arg\min_{x \in \Phi_{\rm H}^{\rm P}} \|x - \mathbf{o}\|,$ and $x_{0, \rm B} = \arg\min_{x \in \Phi_{\rm H}^{\rm B}} \|x - \mathbf{o}\|$ denote the serving \ac{HAP} locations under \ac{SCR-PCP} and \ac{SCR-BCP}, respectively, and define the corresponding serving distances $d_{0, \rm P} = \|x_{0, \rm P} - \mathbf{o}\|$ and $d_{0, \rm B} = \|x_{0, \rm B} - \mathbf{o}\|$. The received signal power from the serving \ac{HAP} at distance $d_{0, k}$ is $P_k\left(d_{0, k}\right) = P_t G_t h_0 K d_{0, k}^{-\eta},$ and the aggregate interference is the sum of the received powers from all other concurrently transmitting \ac{HAP}, $I_k = \sum_{x \in \Phi_{\rm H}^k \setminus \{x_{0, k}\}} P_t G_I h_x K \|x-\mathbf{o}\|^{-\eta},$ where $k \in \{\rm P, B\}$ denotes either $\Phi_{\rm H}^{\rm P}$ or $\Phi_{\rm H}^{\rm B}$ and $x_{0,k}$ is the corresponding serving \ac{HAP}.

Factoring out the common term $P_t K$ and setting $G_r = 1$, the resulting \ac{SIR} at the typical user can be written as
\begin{align}
    \mathrm{SIR}_k = \frac{G_t h_0 d_{0,k}^{-\eta}} {\displaystyle \sum_{x \in \Phi_{\rm H}^k \setminus \{x_{0, k}\}} G_I h_x \|x-\mathbf{o}\|^{-\eta}}.
    \label{eq:sir_def_coverage}
\end{align}
We operate in an interference-limited regime, which is a realistic operating point for dense \ac{HAP} constellations with aggressive frequency reuse; hence, thermal noise is neglected.

In the next section, we derive three-dimensional distance distributions for the patrol geometry and, for each of the \ac{SCR-PCP} and \ac{SCR-BCP} models, obtain the serving-distance distribution and the coverage probability under nearest-\ac{HAP} association.

\section{Spatial Statistics of small-circle ring Cox processes}
\label{sec:stats}
Before analyzing the network's coverage and interference, we must establish the fundamental statistical properties of the \ac{SCR-PCP} and \ac{SCR-BCP} models. In this section, we prove the network's spatial isotropy, derive the various distance distributions and the mean number of deployed platforms, and formulate the 3D visible arc length.

\subsection{Isotropy of the Patrol-Ring Cox Models}
\label{subsec:isotropy}
A central structural property of the proposed \ac{SCR-PCP} and \ac{SCR-BCP} models is that they inherit the spherical isotropy of the anchor \ac{PPP}. Consequently, the statistical properties of the induced HAP process are independent of the user's absolute location on Earth's surface. A point process is isotropic if its distribution is invariant under rotations about the origin. That is, for any rotation $Q$ of $\mathbb{R}^3$ such that $Q(\mathbf{0})=\mathbf{0}$ and any Borel set $A$ in the space of point configurations, $\mathbb{P}(\Phi \in A) = \mathbb{P}(Q(\Phi) \in A),$ where $Q(\Phi) \triangleq \{Q(x) : x \in \Phi\}$ denotes the rotated configuration.
\begin{theorem}
\label{thm:isotropy}
Let $\Phi_u$ be a homogeneous \ac{PPP} of anchors on the spherical shell $\mathcal{S}_{R_{\rm H}}$, and let $\mathcal{L}=\{\mathcal{C}(u,\gamma):u\in\Phi_u\}$ be the associated patrol-ring process constructed as in Section~\ref{sec:sys_model}. Conditioned on $\Phi_u$ and $\mathcal{L}$, let $\Phi_{\rm H}^{\rm P}$ and $\Phi_{\rm H}^{\rm B}$ denote the resulting \ac{HAP} Cox point processes obtained from the \ac{SCR-PCP} and \ac{SCR-BCP} intra-ring models, respectively. Then, for any rotation $Q$ of $\mathbb{R}^3$ such that $Q(\mathbf{0})=\mathbf{0}$,
\begin{align*}
    Q(\Phi_{\rm H}^{k}) \;\stackrel{d}{=}\; \Phi_{\rm H}^{k}, \qquad k\in\{{\rm P,B}\},
\end{align*}
i.e., both $\Phi_{\rm H}^{\rm P}$ and $\Phi_{\rm H}^{\rm B}$ are isotropic Cox point processes on $\mathbb{R}^3$.
\end{theorem}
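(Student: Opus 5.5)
The plan is to show that the whole construction is rotation-equivariant: rotating the anchors and then building rings and platforms gives the same law as building rings and platforms and then rotating. The argument has three layers: the anchor process, the ring map, and the intra-ring kernel.

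\textbf{Step 1: anchors.} Rotations preserve the surface measure on $\mathcal{S}_{R_{\rm H}}$. The mapping theorem therefore gives that $Q(\Phi_u)$ is again a homogeneous PPP of intensity $\lambda_{\rm u}$ on $\mathcal{S}_{R_{\rm H}}$, so $Q(\Phi_u)\stackrel{d}{=}\Phi_u$.

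\textbf{Step 2: rings.} Define the patrol ring intrinsically. For an anchor $u$, the set $\mathcal{C}(u,\gamma)$ equals
\begin{align*}
\{x\in\mathbb{R}^3:(x-u)\cdot n_u=0,\ \|x-u\|=a\}.
\end{align*}
This does not depend on the basis $(e_1,e_2)$, which was fixed only for convenience relative to the user at $\mathbf{o}$. Since $Q$ is orthogonal, $n_{Q(u)}=Qn_u$, and $Q$ preserves both inner products and norms. Hence $Q(\mathcal{C}(u,\gamma))=\mathcal{C}(Q(u),\gamma)$, which gives $Q(\mathcal{L})=\{\mathcal{C}(v,\gamma):v\in Q(\Phi_u)\}$. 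I expect the main subtlety to be here. The paper's parameterization uses a user-dependent basis $(e_1,e_2)$, so one must argue that the ring as a set, and the uniform arc-length measure on it, are basis-free. For the arc-length measure, any orthonormal basis of $\Pi(u)-u$ gives the same measure, because a change of basis acts as a rotation or reflection on the angle $t$. Uniform measure on $[0,2\pi)$ is invariant under both.

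\textbf{Step 3: intra-ring kernel.} Conditioned on $\Phi_u$, the platforms on distinct rings are independent. For SCR-PCP, the points on $\mathcal{C}(u,\gamma)$ form a PPP with intensity $\nu$ times arc length. For SCR-BCP, they are $n_{\rm H}$ i.i.d. points drawn from the normalized arc-length measure. Because $Q$ is an isometry mapping $\mathcal{C}(u,\gamma)$ onto $\mathcal{C}(Q(u),\gamma)$, it pushes arc length forward to arc length. So $Q(\mathcal{P}_u)$ is a PPP of intensity $\nu$ on $\mathcal{C}(Q(u),\gamma)$, and $Q(\mathcal{B}_u)$ is a BPP with $n_{\rm H}$ points there. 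Independence across rings is preserved, so conditionally on $\Phi_u$ the law of $Q(\Phi_{\rm H}^k)$ equals the kernel-generated law built from $Q(\Phi_u)$.

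\textbf{Step 4: unconditioning.} To remove the conditioning, I would use the probability generating functional. For a measurable $f:\mathbb{R}^3\to[0,1]$, write
\begin{align*}
G_k[f]=\mathbb{E}\prod_{u\in\Phi_u}g_k(u;f),
\end{align*}
where
\begin{align*}
g_{\rm P}(u;f)=\exp\Big(-\nu\int_{\mathcal{C}(u,\gamma)}(1-f)\,d\ell\Big),\qquad g_{\rm B}(u;f)=\Big(\tfrac{1}{2\pi a}\int_{\mathcal{C}(u,\gamma)}f\,d\ell\Big)^{n_{\rm H}}.
\end{align*}
Steps 2 and 3 give $g_k(u;f\circ Q)=g_k(Q(u);f)$. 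Applying the PPP PGFL to $\Phi_u$ and using Step 1 then yields $G_k[f\circ Q]=G_k[f]$. Equality of PGFLs for all such $f$ characterizes the distribution, so $Q(\Phi_{\rm H}^k)\stackrel{d}{=}\Phi_{\rm H}^k$ for both $k\in\{\rm P,B\}$.
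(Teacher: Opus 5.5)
Your proposal is correct and follows the paper's three-tier equivariance argument: isotropy of the anchor PPP, then $Q(\mathcal{C}(u,\gamma))=\mathcal{C}(Q(u),\gamma)$, then rotation invariance of the Poisson and binomial intra-ring kernels, and finally unconditioning via $Q(\Phi_u)\stackrel{d}{=}\Phi_u$. Your basis-free definition of the ring and the PGFL identity $g_k(u;f\circ Q)=g_k(Q(u);f)$ in Step 4 are more careful versions of the paper's ``parametrization up to a rotation'' remark and of its final chain of distributional equalities $\bigcup_{u}Q(\Psi^k_u)\stackrel{d}{=}\bigcup_{v\in Q(\Phi_u)}\Psi^k_v$.
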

\begin{IEEEproof}
See Appendix~\ref{app:theorem1}
\end{IEEEproof}

Theorem~\ref{thm:isotropy} implies that under both \ac{SCR-PCP} and \ac{SCR-BCP}, all statistical quantities that are invariant under global rotations, such as the distribution of the serving distance, the interference, and the \ac{SIR}, do not depend on the absolute ground location of the user but only on its relative position with respect to the \ac{HAP} shell. Therefore, by Palm conditioning with respect to the anchor \ac{PPP} and exploiting isotropy, we can analyze the network from the perspective of a typical user fixed at the North Pole, $\mathbf{o}=(0,0,R_\oplus)$, without loss of generality.

\subsection{Distance Distributions}
\label{sec:distance}
To characterize network performance and evaluate coverage probability, we first derive distance distributions from the typical user to the various elements of the \ac{SCR-PCP} and \ac{SCR-BCP} models. Owing to the hierarchical Cox structure~\cite{baccelli2010stochastic, chetlur2018coverage, choi2018poisson}, the derivation proceeds through the anchor process, patrol-ring process, and finally the induced HAP process. We define the following distances:
\begin{enumerate}
    \item $D_{\rm u}$: Euclidean distance from the typical user to the nearest anchor in the parent \ac{PPP} $\Phi_u$.
    \item $D_{\rm r}(\phi)$: Minimum Euclidean distance from the user to the patrol ring $\mathcal{C}(u,\gamma)$ associated with an anchor $u$ located at central angle $\phi$.
    \item $D_{\rm R}$: Distance from the user to the nearest patrol ring in the network, i.e., $D_{\rm R} = \min_{u \in \Phi_u} D_{\rm r}(\phi_u).$
    \item $d_{0,{\rm P}}$: Euclidean distance from the user to the nearest individual \ac{HAP} in the \ac{SCR-PCP}, i.e., $d_{0,{\rm P}} = \min_{x \in \Phi_{\rm H}^{\rm P}} \|x - \mathbf{o}\|.$
    \item $d_{0,{\rm B}}$: Euclidean distance from the user to the nearest individual \ac{HAP} in the \ac{SCR-BCP}, i.e., $d_{0,{\rm B}} = \min_{x \in \Phi_{\rm H}^{\rm B}} \|x - \mathbf{o}\|.$
\end{enumerate}
In what follows, we derive the distributions of these distances. All expressions preserve the three-dimensional spherical geometry and the local Euclidean patrol structure, without resorting to planar or small-angle approximations.

\subsubsection{Distance to the Nearest Anchor}
We begin with the parent anchor process. Let $D_{\rm u} = \min_{u \in \Phi_u} \|u - \mathbf{o}\|$ denote the distance from the typical user to its nearest anchor.
\begin{lemma}
\label{lem:dist_nearest_anchor}
The \ac{CDF} of the distance to the nearest anchor is
\begin{align*}
    F_{D_{\rm u}}(d) = 1 - \exp\left( -2\pi R_{\rm H}^2 \lambda_{\rm u} \left(1 - \frac{R_\oplus^2 + R_{\rm H}^2 - d^2}{2R_\oplus R_{\rm H}}\right) \right),
\end{align*}
for $H \le d \le R_\oplus + R_{\rm H}$.
\end{lemma}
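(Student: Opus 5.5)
The plan is to use the void probability of the homogeneous \ac{PPP} $\Phi_u$ on $\mathcal{S}_{R_{\rm H}}$. The event $\{D_{\rm u} > d\}$ is exactly the event that no anchor lies in the set $\mathcal{A}(d) = \{u \in \mathcal{S}_{R_{\rm H}} : \|u - \mathbf{o}\| \le d\}$. Hence
\begin{align*}
F_{D_{\rm u}}(d) = 1 - \exp\left(-\lambda_{\rm u}\,|\mathcal{A}(d)|\right),
\end{align*}
where $|\cdot|$ denotes surface area on the shell. The proof therefore reduces to computing this area.

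First, I express the Euclidean distance through the central angle. For an anchor at polar angle $\phi$ relative to the user's zenith, the law of cosines in the triangle formed by the origin, $\mathbf{o}$ and $u$ gives
\begin{align*}
\|u-\mathbf{o}\|^2 = R_\oplus^2 + R_{\rm H}^2 - 2R_\oplus R_{\rm H}\cos\phi .
\end{align*}
This is strictly increasing in $\phi \in [0,\pi]$. Consequently $\mathcal{A}(d)$ is the spherical cap $\{\phi \le \phi_d\}$ with
\begin{align*}
\cos\phi_d = \frac{R_\oplus^2 + R_{\rm H}^2 - d^2}{2R_\oplus R_{\rm H}} .
\end{align*}
The endpoints match the stated range. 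At $d = H = R_{\rm H} - R_\oplus$ we get $\cos\phi_d = 1$, so the cap is empty. At $d = R_\oplus + R_{\rm H}$ we get $\cos\phi_d = -1$, so the cap is the whole shell.

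Next, I compute the cap area with the spherical area element $R_{\rm H}^2 \sin\phi \, d\phi\, d\theta$:
\begin{align*}
|\mathcal{A}(d)| = \int_0^{2\pi}\!\!\int_0^{\phi_d} R_{\rm H}^2\sin\phi\, d\phi\, d\theta = 2\pi R_{\rm H}^2\left(1-\cos\phi_d\right).
\end{align*}
Substituting $\cos\phi_d$ into the void probability yields the stated CDF.

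The only delicate point is confirming that the map $\phi \mapsto \|u-\mathbf{o}\|$ is monotone, so that the distance ball intersects the shell in a single cap, and handling the boundary values of $d$. Both follow directly from the law-of-cosines expression. Isotropy (Theorem~\ref{thm:isotropy}) justifies placing the user at the North Pole.
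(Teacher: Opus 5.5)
Your proposal is correct and follows the same route as the paper: the void probability of the anchor PPP over the spherical cap $\{\phi\le\phi_{\max}(d)\}$, obtained from the law of cosines, whose area is $2\pi R_{\rm H}^2(1-\cos\phi_{\max}(d))$. Your explicit checks that the distance is monotone in $\phi$ and that the endpoints $d=H$ and $d=R_\oplus+R_{\rm H}$ give the empty cap and the full shell are a small addition the paper leaves implicit.
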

\begin{IEEEproof}
Let $u \in \Phi_u$ be an anchor located at a central angle $\phi$ from the typical user $\mathbf{o}$. By the law of cosines in $\mathbb{R}^3$, the Euclidean distance between them is $D(\phi) = \sqrt{R_\oplus^2 + R_{\rm H}^2 - 2R_\oplus R_{\rm H} \cos\phi}.$ The \ac{CDF} of $D_{\rm u}$ is $F_{D_{\rm u}}(d) = 1 - \mathbb{P}(D_{\rm u} > d)$. The event $\{D_{\rm u} > d\}$ requires that no anchor lies within distance $d$ of the user. Geometrically, the exclusion region on $\mathcal{S}_{R_{\rm H}}$ is the spherical cap $\mathcal{C}_{\rm cap}(d) = \{x \in \mathcal{S}_{R_{\rm H}} : \|x - \mathbf{o}\| \le d\}.$ The boundary of this cap is determined by the maximum central angle $\phi_{\max}(d)$ such that $D(\phi_{\max}(d)) = d$, which yields $\cos\phi_{\max}(d) = \frac{R_\oplus^2 + R_{\rm H}^2 - d^2}{2R_\oplus R_{\rm H}}.$ By rotational symmetry around the user's zenith, the surface area of $\mathcal{C}_{\rm cap}(d)$ is
\begin{align*}
    &|\mathcal{C}_{\rm cap}(d)| = \int_0^{2\pi} \int_0^{\phi_{\max}(d)} R_{\rm H}^2 \sin\phi\, {\rm d}\phi\, {\rm d}\theta \\
    &= 2\pi R_{\rm H}^2 \big[-\cos\phi\big]_0^{\phi_{\max}(d)} = 2\pi R_{\rm H}^2 \big(1 - \cos\phi_{\max}(d)\big).
\end{align*}
Substituting the expression for $\cos\phi_{\max}(d)$ gives
\begin{align*}
    |\mathcal{C}_{\rm cap}(d)| = 2\pi R_{\rm H}^2 \left(1 - \frac{R_\oplus^2 + R_{\rm H}^2 - d^2}{2R_\oplus R_{\rm H}}\right).
\end{align*}
Since anchors form a homogeneous \ac{PPP} with surface intensity $\lambda_{\rm u}$, the void probability is $\mathbb{P}(D_{\rm u} > d) = \exp\big(-\lambda_{\rm u} |\mathcal{C}_{\rm cap}(d)|\big),$ which, substituted into $F_{D_{\rm u}}(d) = 1 - \mathbb{P}(D_{\rm u} > d)$, yields the claimed \ac{CDF}.
\end{IEEEproof}

\subsubsection{Distance from User to a Given Ring}
We next characterize the minimum distance from the user to a patrol ring associated with an anchor at angle $\phi$.
\begin{lemma}
\label{lem:dist_ring}
For an anchor located at central angle $\phi$ relative to the typical user, the minimum Euclidean distance from the user to the corresponding patrol ring $\mathcal{C}(u,\gamma)$ is
\begin{align}
    D_{\rm r}(\phi) = \sqrt{(R_\oplus \cos\phi - R_{\rm H})^2 + (R_\oplus \sin\phi - a)^2}.
    \label{eq:dist_ring}
\end{align}
\end{lemma}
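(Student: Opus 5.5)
The plan is to work in the orthonormal frame $(n_u, e_1, e_2)$ attached to the anchor, not in the ambient Cartesian coordinates. In this frame both the user and the ring point have simple coordinates, so the squared distance becomes a one-variable trigonometric function of the ring parameter $t$, which I then minimize.

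\textbf{Step 1: coordinates of $\mathbf{o}$ and $u$.} With $u=(R_{\rm H}\sin\phi,0,R_{\rm H}\cos\phi)$ we have
\[
n_u=(\sin\phi,0,\cos\phi),\quad e_1=(0,1,0),\quad e_2=(-\cos\phi,0,\sin\phi).
\]
These three vectors are mutually orthogonal unit vectors. Taking inner products with $\mathbf{o}=(0,0,R_\oplus)$ gives
\[
\mathbf{o}\cdot n_u=R_\oplus\cos\phi,\qquad \mathbf{o}\cdot e_1=0,\qquad \mathbf{o}\cdot e_2=R_\oplus\sin\phi.
\]
Hence $\mathbf{o}=R_\oplus\cos\phi\,n_u+R_\oplus\sin\phi\,e_2$, while $u=R_{\rm H}\,n_u$.

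\textbf{Step 2: squared distance to a generic ring point.} For the ring point $x(t)=u+a(\cos t\,e_1+\sin t\,e_2)$,
\[
x(t)-\mathbf{o}=(R_{\rm H}-R_\oplus\cos\phi)\,n_u+a\cos t\,e_1+(a\sin t-R_\oplus\sin\phi)\,e_2 .
\]
By orthonormality (Pythagoras),
\[
\|x(t)-\mathbf{o}\|^2=(R_{\rm H}-R_\oplus\cos\phi)^2+a^2\cos^2 t+(a\sin t-R_\oplus\sin\phi)^2 .
\]
Expanding and using $\cos^2 t+\sin^2 t=1$, this equals
\[
(R_{\rm H}-R_\oplus\cos\phi)^2+a^2+R_\oplus^2\sin^2\phi-2aR_\oplus\sin\phi\,\sin t .
\]

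\textbf{Step 3: minimization over $t$.} Only the last term depends on $t$. Since $\phi\in[0,\pi]$ as a central angle, $\sin\phi\ge0$, so the expression is minimized at $t=\pi/2$, i.e.\ $\sin t=1$. At $\phi=0$ every $t$ is a minimizer, which is harmless. Substituting $\sin t=1$ regroups the last three terms into $(R_\oplus\sin\phi-a)^2$. Taking the square root gives \eqref{eq:dist_ring}.

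The minimizing point is $u+a e_2$. Geometrically, this is the ring point lying in the great-circle plane through $\mathbf{o}$ and $u$, on the side facing the user.

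The only step needing care is the sign argument in Step 3, namely that $\sin\phi\ge 0$ over the admissible range of central angles. The rest is routine algebra once the decomposition of $\mathbf{o}$ in the anchor frame is written down.
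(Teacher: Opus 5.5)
Your proof is correct and follows essentially the paper's route: both split $x-\mathbf{o}$ into an out-of-plane component along $n_u$, which is the same for every ring point and contributes $(R_{\rm H}-R_\oplus\cos\phi)^2$, and an in-plane component in which the user's projection sits at distance $R_\oplus\sin\phi$ from $u$. The only difference is that the paper quotes the planar point-to-circle distance $|R_\oplus\sin\phi-a|$, while you parametrize the ring and minimize the $t$-dependent term $-2aR_\oplus\sin\phi\,\sin t$ explicitly, which also gives you the full distance profile $d(\phi,\alpha)$ used later in the coverage analysis.
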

\begin{IEEEproof}
By isotropy, we may place the typical user at $\mathbf{o} = (0,0,R_\oplus)$ and take the anchor in the $x$-$z$ plane. The anchor coordinate is $u = (R_{\rm H} \sin\phi,\, 0,\, R_{\rm H} \cos\phi),$ and its outward normal is $n_u = (\sin\phi,0,\cos\phi)$. The tangent plane at $u$ is $\Pi(u) = \{x \in \mathbb{R}^3 : (x-u)\cdot n_u = 0\}.$ Decompose the vector $\mathbf{o}-u$ into its component along $n_u$ and its orthogonal component in $\Pi(u)$. The out-of-plane distance is
\begin{align*}
    &d_{\rm plane}(\phi) = |(\mathbf{o}-u)\cdot n_u| \\
    &= |(-R_{\rm H}\sin\phi,\, 0,\, R_\oplus - R_{\rm H}\cos\phi)\cdot(\sin\phi,0,\cos\phi)| \\
    &= |R_\oplus\cos\phi - R_{\rm H}|.
\end{align*}
The in-plane projection of the user onto $\Pi(u)$ is at distance $\rho(\phi) = R_\oplus \sin\phi$ from the anchor center in the tangent plane. Since the patrol ring is a circle of radius $a$ centered at $u$ in $\Pi(u)$, the minimum in-plane distance from the user's projection to the ring is $|\rho(\phi) - a| = |R_\oplus \sin\phi - a|$.

The out-of-plane and in-plane components are orthogonal, so the minimum 3D distance from the user to the ring is
\begin{align*}
    D_{\rm r}(\phi) &= \sqrt{d_{\rm plane}(\phi)^2 + (|\rho(\phi) - a|)^2} \\
    &= \sqrt{(R_\oplus \cos\phi - R_{\rm H})^2 + (R_\oplus \sin\phi - a)^2},
\end{align*}
which is the desired expression.
\end{IEEEproof}

\subsubsection{Distance Distribution to the Nearest Ring}
We now derive the distribution of the nearest-ring distance $D_{\rm R}=\min_{u\in\Phi_u}D_{\rm r}(\phi_u)$. Let $\phi_0=\arctan(a/R_{\rm H})$ and $D_{\min}=D_{\rm r}(\phi_0)$ denote the minimum attainable ring distance.
\begin{lemma}
\label{lem:nearest_ring}
The \ac{CDF} of the distance to the nearest ring is
\begin{align}
    &F_{D_{\rm R}}(d) = \nonumber\\
    &\begin{cases}
        1 - \exp\left( -2\pi R_{\rm H}^2 \lambda_{\rm u} \left[\cos\phi_1(d) - \cos\phi_2(d)\right] \right), & d \ge D_{\min} \\
        0 & d < D_{\min}.
    \end{cases}
    \label{eq:cdf_nearest_ring}
\end{align}
where the polar angular boundaries $\phi_1(d)$ and $\phi_2(d)$ are
\begin{align}
    \phi_1(d) &= \max\big(0,\, \phi_0 - \arccos(\Delta(d))\big), \label{eq:phi1}\\
    \phi_2(d) &= \min\big(\pi,\, \phi_0 + \arccos(\Delta(d))\big), \label{eq:phi2}
\end{align}
and $\Delta(d) = \frac{R_\oplus^2 + R_{\rm H}^2 + a^2 - d^2}{2 R_\oplus \sqrt{R_{\rm H}^2 + a^2}}.$
\end{lemma}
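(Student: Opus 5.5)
We reduce the nearest-ring event to a void event for the anchor \ac{PPP}, exactly as in Lemma~\ref{lem:dist_nearest_anchor}. We have $\{D_{\rm R}>d\}$ if and only if no anchor $u\in\Phi_u$ satisfies $D_{\rm r}(\phi_u)\le d$. The distance in Lemma~\ref{lem:dist_ring} depends only on the central angle $\phi_u$ and not on the azimuth. Hence the set $\mathcal{A}(d)=\{u\in\mathcal{S}_{R_{\rm H}}: D_{\rm r}(\phi_u)\le d\}$ is a rotationally symmetric zone (spherical band) about the user's zenith. Its area is $2\pi R_{\rm H}^2\int_{\{\phi: D_{\rm r}(\phi)\le d\}}\sin\phi\,{\rm d}\phi$. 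The void probability of the homogeneous \ac{PPP} then gives $F_{D_{\rm R}}(d)=1-\exp(-\lambda_{\rm u}|\mathcal{A}(d)|)$. It therefore remains to identify the angular set $\{\phi: D_{\rm r}(\phi)\le d\}$ as an interval $[\phi_1(d),\phi_2(d)]$.

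To do this, we expand \eqref{eq:dist_ring}:
\begin{align*}
D_{\rm r}^2(\phi)=R_\oplus^2+R_{\rm H}^2+a^2-2R_\oplus\left(R_{\rm H}\cos\phi+a\sin\phi\right).
\end{align*}
Writing $R_{\rm H}\cos\phi+a\sin\phi=\sqrt{R_{\rm H}^2+a^2}\,\cos(\phi-\phi_0)$ with $\phi_0=\arctan(a/R_{\rm H})$ yields
\begin{align*}
D_{\rm r}^2(\phi)=R_\oplus^2+R_{\rm H}^2+a^2-2R_\oplus\sqrt{R_{\rm H}^2+a^2}\,\cos(\phi-\phi_0).
\end{align*}
Consequently $D_{\rm r}(\phi)\le d$ is equivalent to $\cos(\phi-\phi_0)\ge\Delta(d)$. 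On $\phi\in[0,\pi]$, with $\phi_0\in(0,\pi/2)$, and for $\Delta(d)\in[-1,1]$, this is $|\phi-\phi_0|\le\arccos\Delta(d)$. Intersecting with $[0,\pi]$ gives exactly \eqref{eq:phi1}--\eqref{eq:phi2}. Integrating $\sin\phi$ over this interval produces $\cos\phi_1(d)-\cos\phi_2(d)$, which gives \eqref{eq:cdf_nearest_ring}.

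Finally, we handle the edge cases. $D_{\rm r}$ is minimized at $\phi=\phi_0$, with value $D_{\min}$, which corresponds to $\Delta=1$. For $d<D_{\min}$ we have $\Delta(d)>1$, so the set is empty and the CDF vanishes. For large $d$, $\Delta(d)<-1$, and the clipping at $0$ and $\pi$ (with $\arccos$ interpreted as $\pi$) makes the band the whole sphere.

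The main obstacle is the interval identification. We must justify that $D_{\rm r}$ is unimodal in $\phi$, which follows from the single cosine form above, and that the truncations at $0$ and $\pi$ are correct. There is one subtlety to check: the formula $|\rho-a|$ in Lemma~\ref{lem:dist_ring} assumes $\rho=R_\oplus\sin\phi\ge0$, which holds on $[0,\pi]$, so the expansion is valid.
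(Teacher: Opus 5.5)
Your proposal is correct and takes essentially the same approach as the paper. You reduce $\{D_{\rm R}>d\}$ to a void event of the anchor PPP on the zone $\{\phi: D_{\rm r}(\phi)\le d\}$, rewrite $D_{\rm r}^2(\phi)=R_\oplus^2+R_{\rm H}^2+a^2-2R_\oplus\sqrt{R_{\rm H}^2+a^2}\cos(\phi-\phi_0)$ to get the band $\phi_0\pm\arccos\Delta(d)$ truncated to $[0,\pi]$, and integrate the surface element to obtain $2\pi R_{\rm H}^2[\cos\phi_1(d)-\cos\phi_2(d)]$; your extra justification that the sublevel set is exactly this interval (monotonicity of cosine on $[-\pi,\pi]$) and your handling of $\Delta(d)>1$ and $\Delta(d)<-1$ are slightly more careful than the paper, which only solves for the boundary roots.
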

\begin{IEEEproof}
The event $\{D_{\rm R} > d\}$ means that no anchor produces a patrol ring intersecting the ball of radius $d$ centered at the user. By Lemma~\ref{lem:dist_ring}, this is equivalent to the condition $D_{\rm r}(\phi)^2 > d^2$ for all anchors, which defines a ``danger zone'' spherical belt $\mathcal{B}(d) \subset \mathcal{S}_{R_{\rm H}}$ of central angles $\phi$ for which $D_{\rm r}(\phi)^2 \le d^2$.

The boundaries of this belt are the solutions of $D_{\rm r}(\phi)^2 = d^2$. Expanding $(R_\oplus \cos\phi - R_{\rm H})^2 + (R_\oplus \sin\phi - a)^2 = d^2$ and using $\cos^2\phi + \sin^2\phi = 1$ yields $R_{\rm H} \cos\phi + a\sin\phi = \frac{R_\oplus^2 + R_{\rm H}^2 + a^2 - d^2}{2R_\oplus}$. Let $Z = \sqrt{R_{\rm H}^2 + a^2}$ and define an auxiliary angle $\phi_0$ by $\cos\phi_0 = R_{\rm H}/Z$ and $\sin\phi_0 = a/Z$, so that $\phi_0 = \arctan(a/R_{\rm H})$. The above equation can be written as
\begin{align*}
    Z \cos(\phi - \phi_0) = \frac{R_\oplus^2 + R_{\rm H}^2 + a^2 - d^2}{2R_\oplus},
\end{align*}
or $\cos(\phi - \phi_0) = \Delta(d)$, with $\Delta(d)$ as given. Solving in $\phi$ gives the roots $\phi = \phi_0 \pm \arccos(\Delta(d))$, and truncating to $[0,\pi]$ yields the bounds in \eqref{eq:phi1}-\eqref{eq:phi2}.

The spherical belt $\mathcal{B}(d)$ consists of anchors whose patrol rings intersect the ball of radius $d$. Its surface area is
\begin{align*}
    |\mathcal{B}(d)| &= \int_0^{2\pi} \int_{\phi_1(d)}^{\phi_2(d)} R_{\rm H}^2 \sin\phi\, {\rm d}\phi\, {\rm d}\theta \\
    &= 2\pi R_{\rm H}^2 \big[\cos\phi_1(d) - \cos\phi_2(d)\big].
\end{align*}
Since the anchors form a homogeneous \ac{PPP} with intensity $\lambda_{\rm u}$, the void probability is $\mathbb{P}(D_{\rm R} > d) = \exp\big(-\lambda_{\rm u} |\mathcal{B}(d)|\big),$ which leads to the stated \ac{CDF}.
\end{IEEEproof}

\subsubsection{Distance Distribution to the Nearest \ac{HAP} under \ac{SCR-PCP}}
\begin{theorem}
\label{thm:nearest_haps_pcp}
Let $d_{0,{\rm P}} = \min_{x \in \Phi_{\rm H}^{\rm P}} \|x - \mathbf{o}\|$ be the Euclidean distance from the typical user to the nearest \ac{HAP} under \ac{SCR-PCP}. For $d < D_{\min}$, $\mathbb{P}(d_{0,{\rm P}} > d) = 1$. For $d \ge D_{\min}$, the \ac{CDF} is
\begin{align}
    &\mathbb{P}(d_{0,{\rm P}} < d) = 1 - \exp\bigg( -2\pi R_{\rm H}^2 \lambda_{\rm u} \nonumber \\
    &\hspace{1cm}\times \int_{\phi_1(d)}^{\phi_2(d)} \big[1 - \exp\big(-\nu L(\phi,d)\big)\big] \sin\phi\, {\rm d}\phi \bigg),
    \label{eq:ccdf_nearest_haps_pcp}
\end{align}
where $\phi_1(d)$ and $\phi_2(d)$ are given in \eqref{eq:phi1} and \eqref{eq:phi2}, and $L(\phi,d)$ is the arc length of the portion of a patrol ring at central angle $\phi$ that lies within distance $d$ of the user:
\begin{align*}
     L(\phi,d) =
    \begin{cases}
        0, & \Psi(\phi,d) < 0,\\
        2a\, \arccos\left(1 - \Psi(\phi,d)\right), & 0 \le \Psi(\phi,d) < 2,\\
        2\pi a, & \Psi(\phi,d) \ge 2,
    \end{cases}
\end{align*}
with $\displaystyle \Psi(\phi,d) = \frac{d^2 - D_{\rm r}(\phi)^2}{2a R_\oplus \sin\phi}.$
\end{theorem}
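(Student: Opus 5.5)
The plan is to compute the void probability $\mathbb{P}(d_{0,{\rm P}} > d)$ by conditioning on the anchor process, applying the 1-D Poisson void probability on each ring, and then removing the conditioning through the probability generating functional (PGFL) of the anchor PPP $\Phi_u$.

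\textbf{Conditional void probability.} Conditioned on $\Phi_u$, the intra-ring processes $\mathcal{P}_u$ are independent homogeneous PPPs of linear intensity $\nu$. The event $\{d_{0,{\rm P}}>d\}$ requires that every ring contributes no HAP inside the ball $\mathcal{B}(\mathbf{o},d)$. Hence
\begin{align*}
\mathbb{P}(d_{0,{\rm P}}>d \mid \Phi_u)=\prod_{u\in\Phi_u}\exp\big(-\nu L(\phi_u,d)\big),
\end{align*}
where $L(\phi,d)$ is the arc length of $\mathcal{C}(u,\gamma)\cap \mathcal{B}(\mathbf{o},d)$. By isotropy (Theorem~\ref{thm:isotropy}) this length depends only on the central angle $\phi$ of the anchor.

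\textbf{Averaging over anchors.} Apply the PGFL of the homogeneous PPP on $\mathcal{S}_{R_{\rm H}}$:
\begin{align*}
\mathbb{P}(d_{0,{\rm P}}>d)=\exp\Big(-\lambda_{\rm u}\int_{\mathcal{S}_{R_{\rm H}}}\big[1-e^{-\nu L(\phi,d)}\big]\,{\rm d}u\Big).
\end{align*}
The surface element is ${\rm d}u=R_{\rm H}^2\sin\phi\,{\rm d}\phi\,{\rm d}\theta$, and the azimuthal integral contributes $2\pi$. The integrand vanishes unless the ring meets the ball, i.e.\ $D_{\rm r}(\phi)\le d$. By the argument of Lemma~\ref{lem:nearest_ring}, this holds exactly for $\phi\in[\phi_1(d),\phi_2(d)]$, so the $\phi$-integral may be restricted to that interval. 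For $d<D_{\min}$ no ring meets the ball, so $L\equiv 0$ and the void probability equals $1$. Taking the complement gives \eqref{eq:ccdf_nearest_haps_pcp}.

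\textbf{Arc-length computation.} This is the main geometric step. Parametrize the ring point by $x(t)=u+a(\cos t\,e_1+\sin t\,e_2)$ and expand $\|x(t)-\mathbf{o}\|^2=\|u-\mathbf{o}\|^2+a^2+2a\,(u-\mathbf{o})\cdot(\cos t\,e_1+\sin t\,e_2)$. With $u-\mathbf{o}=(R_{\rm H}\sin\phi,0,R_{\rm H}\cos\phi-R_\oplus)$, the products are $e_1\cdot(u-\mathbf{o})=0$ and $e_2\cdot(u-\mathbf{o})=R_\oplus\sin\phi$. Therefore
\begin{align*}
\|x(t)-\mathbf{o}\|^2=\|u-\mathbf{o}\|^2+a^2+2aR_\oplus\sin\phi\,\sin t .
\end{align*}
This is minimized at $\sin t=-1$, where it equals $D_{\rm r}(\phi)^2$; I will confirm this against Lemma~\ref{lem:dist_ring} using $\|u-\mathbf{o}\|^2=R_\oplus^2+R_{\rm H}^2-2R_\oplus R_{\rm H}\cos\phi$. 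Writing $s=t+\pi/2$ gives
\begin{align*}
\|x-\mathbf{o}\|^2=D_{\rm r}(\phi)^2+2aR_\oplus\sin\phi\,(1-\cos s).
\end{align*}
The condition $\|x-\mathbf{o}\|\le d$ is then $1-\cos s\le\Psi(\phi,d)$, i.e.\ $|s|\le\arccos(1-\Psi)$. This yields the arc length $2a\arccos(1-\Psi)$, with the cases $\Psi<0$ (empty intersection) and $\Psi\ge 2$ (whole ring, length $2\pi a$). The only delicate points are sign bookkeeping in the basis $e_2$ and the degenerate case $\sin\phi=0$, which is a null set and does not affect the integral.
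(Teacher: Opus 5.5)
Your proposal is correct and follows essentially the same route as the paper's proof: the ring-wise 1-D Poisson void probability $e^{-\nu L(\phi,d)}$, the PGFL of $\Phi_u$ with ${\rm d}S=R_{\rm H}^2\sin\phi\,{\rm d}\phi\,{\rm d}\theta$, restriction of the integral to the belt $[\phi_1(d),\phi_2(d)]$ via Lemma~\ref{lem:nearest_ring}, and the identity $d(\phi,\alpha)^2=D_{\rm r}(\phi)^2+2aR_\oplus\sin\phi(1-\cos\alpha)$. The paper obtains that identity from the in-plane law of cosines plus the out-of-plane offset, rather than by your direct 3D expansion. One small slip: with $e_2=(-\cos\phi,0,\sin\phi)$ you actually have $e_2\cdot(u-\mathbf{o})=-R_\oplus\sin\phi$, so the nearest ring point is at $\sin t=+1$ (i.e., $u+ae_2$); this is harmless, because the reflection $t\mapsto -t$ leaves both $D_{\rm r}(\phi)$ and $L(\phi,d)$ unchanged.
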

\begin{IEEEproof}
See Appendix~\ref{app:theorem2}
\end{IEEEproof}

\subsubsection{Distance Distribution to the Nearest \ac{HAP} under \ac{SCR-BCP}}
We now derive the distribution of the distance from the user to the nearest \ac{HAP} in the \ac{SCR-BCP}. The anchor and patrol-ring geometry are identical to the \ac{SCR-PCP} case; only the intra-ring point process changes.
\begin{theorem}
\label{thm:nearest_haps_bcp}
Let $d_{0,{\rm B}} = \min_{x \in \Phi_{\rm H}^{\rm B}} \|x - \mathbf{o}\|$ be the Euclidean distance from the typical user to the nearest \ac{HAP} under \ac{SCR-BCP}. For $d < D_{\min}$, $\mathbb{P}(d_{0,{\rm B}} > d) = 1$. For $d \ge D_{\min}$, the \ac{CDF} is
\begin{align}
    \mathbb{P}(d_{0,{\rm B}} < d) 
    &= 1 - \exp\bigg( -2\pi R_{\rm H}^2 \lambda_{\rm u} \nonumber \\
    &\hspace{0.6cm}\times
    \int_{\phi_1(d)}^{\phi_2(d)} 
    \bigg[1 - \Big(1 - \frac{L(\phi,d)}{2\pi a}\Big)^{n_{\rm H}}\bigg] 
    \sin\phi\, {\rm d}\phi \bigg),
    \label{eq:ccdf_nearest_haps_bcp}
\end{align}
where $\phi_1(d)$ and $\phi_2(d)$ are given in \eqref{eq:phi1} and \eqref{eq:phi2}, and $L(\phi,d)$ and $\Psi(\phi,d)$ are as in Theorem~\ref{thm:nearest_haps_pcp}.
\end{theorem}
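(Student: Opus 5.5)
The proof will follow the same Cox/PGFL route as Theorem~\ref{thm:nearest_haps_pcp}; only the ring-level void probability changes. The event $\{d_{0,{\rm B}} \ge d\}$ is the event that the ball $\mathcal{B}_d(\mathbf{o})$ of radius $d$ about the user contains no \ac{HAP} of $\Phi_{\rm H}^{\rm B}$. Conditioning on $\Phi_u$, the intra-ring \acp{BPP} $\mathcal{B}_u$ are independent, so
\begin{align*}
\mathbb{P}(d_{0,{\rm B}} \ge d) = \mathbb{E}_{\Phi_u}\Big[\prod_{u\in\Phi_u} \mathbb{P}\big(\mathcal{B}_u\cap \mathcal{B}_d(\mathbf{o})=\emptyset \,\big|\, u\big)\Big].
\end{align*}

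\textbf{Ring-level void probability.} For an anchor at central angle $\phi$, the portion of $\mathcal{C}(u,\gamma)$ inside the ball is an arc of length $L(\phi,d)$, exactly as in Theorem~\ref{thm:nearest_haps_pcp}. There, expanding $\|u+a(\cos t\,e_1+\sin t\,e_2)-\mathbf{o}\|^2\le d^2$ gives a condition of the form $1-\sin t\ \text{(or }\cos t\text{)}\le \Psi(\phi,d)$, which yields the three-case arc length. Since each $\Gamma_{u,k}$ is i.i.d.\ uniform on $[0,2\pi)$, each platform falls in that arc with probability $L(\phi,d)/(2\pi a)$. Hence the ring is void of the ball with probability $\big(1-L(\phi,d)/(2\pi a)\big)^{n_{\rm H}}$. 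This function depends only on $\phi$ and takes values in $[0,1]$.

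\textbf{Applying the PGFL.} Applying the PGFL of the homogeneous \ac{PPP} $\Phi_u$ on $\mathcal{S}_{R_{\rm H}}$ with $f(u)=(1-L/(2\pi a))^{n_{\rm H}}$ gives
\begin{align*}
\exp\Big(-\lambda_{\rm u}\int_{\mathcal{S}_{R_{\rm H}}}\big(1-f(u)\big)\,{\rm d}u\Big).
\end{align*}
In spherical coordinates centered at the user's zenith, ${\rm d}u=R_{\rm H}^2\sin\phi\,{\rm d}\phi\,{\rm d}\theta$. The integrand is independent of $\theta$, so the $\theta$-integration gives the factor $2\pi$.

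\textbf{Restricting the $\phi$-range.} Next I restrict the $\phi$ range using Lemma~\ref{lem:nearest_ring}. For $\phi\notin[\phi_1(d),\phi_2(d)]$ we have $D_{\rm r}(\phi)>d$, so $\Psi<0$, $L=0$ and the integrand vanishes. This gives \eqref{eq:ccdf_nearest_haps_bcp}. For $d<D_{\min}$ no ring meets the ball, so the void probability is $1$. The distinction between $<$ and $\le$ is immaterial because the distribution is continuous.

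\textbf{Main obstacle.} The only delicate step is confirming that the in-ball arc is a single contiguous arc whose length is exactly $L(\phi,d)$, including the edge cases $\sin\phi=0$ and $\Psi\ge 2$. Since this is the same geometric fact already established in Theorem~\ref{thm:nearest_haps_pcp}, I will invoke it rather than rederive it. The remaining work is a routine substitution of the binomial void probability for the Poisson one, $e^{-\nu L}$.
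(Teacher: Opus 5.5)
Your proposal is correct and follows essentially the same route as the paper's proof. You condition on $\Phi_u$, use the ring-wise binomial void probability $\big(1-L(\phi,d)/(2\pi a)\big)^{n_{\rm H}}$ with the arc length inherited from Theorem~\ref{thm:nearest_haps_pcp}, decondition via the PGFL of the anchor \ac{PPP} with ${\rm d}S=R_{\rm H}^2\sin\phi\,{\rm d}\phi\,{\rm d}\theta$, and restrict the $\phi$-integral to $[\phi_1(d),\phi_2(d)]$ using Lemma~\ref{lem:nearest_ring}; on your hedge, with the paper's basis $e_2$ points toward the user's in-plane projection, so the expansion gives $1-\sin t\le\Psi(\phi,d)$, which is the paper's $1-\cos\alpha$ under $\alpha=t-\pi/2$.
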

\begin{IEEEproof}
See Appendix~\ref{app:theorem3}
\end{IEEEproof}
\begin{figure*}[t]
\centering
\subfloat[]
{\includegraphics[width = 0.2\textwidth]{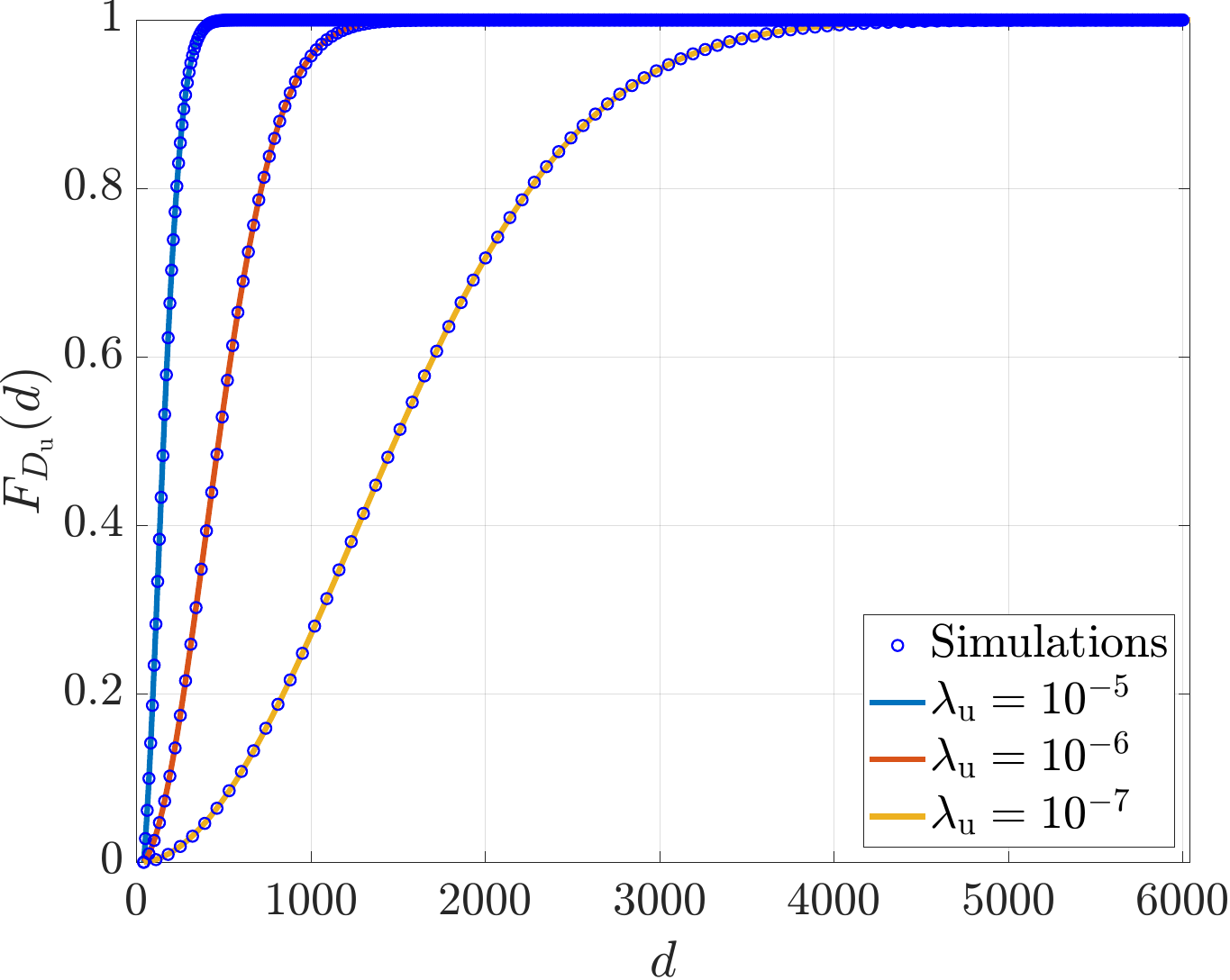}
\label{fig:result_anchor}}
\hfil
\subfloat[]
{\includegraphics[width = 0.2\textwidth]{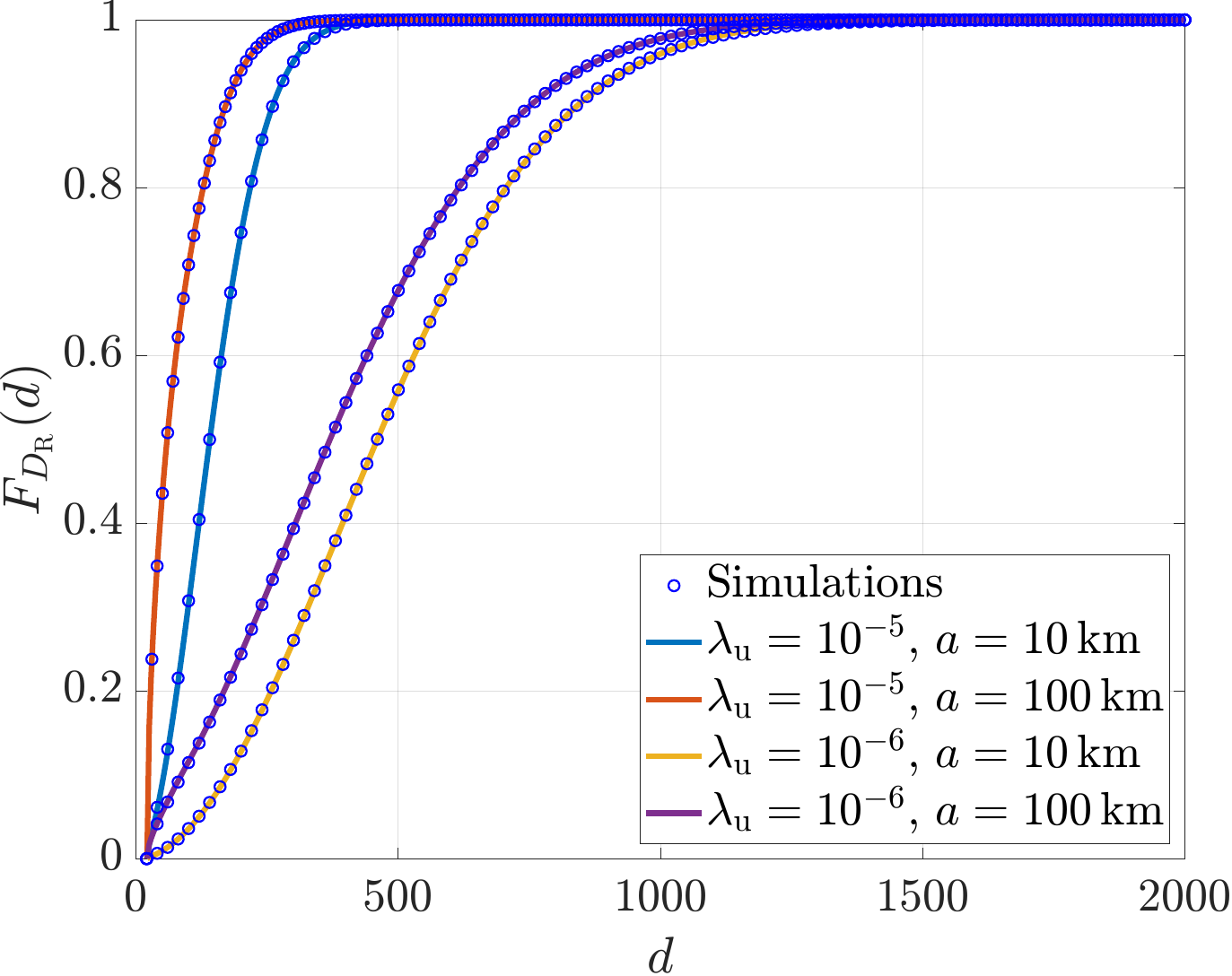}
\label{fig:result_ring}}
\hfil
\subfloat[]
{\includegraphics[width = 0.2\textwidth]{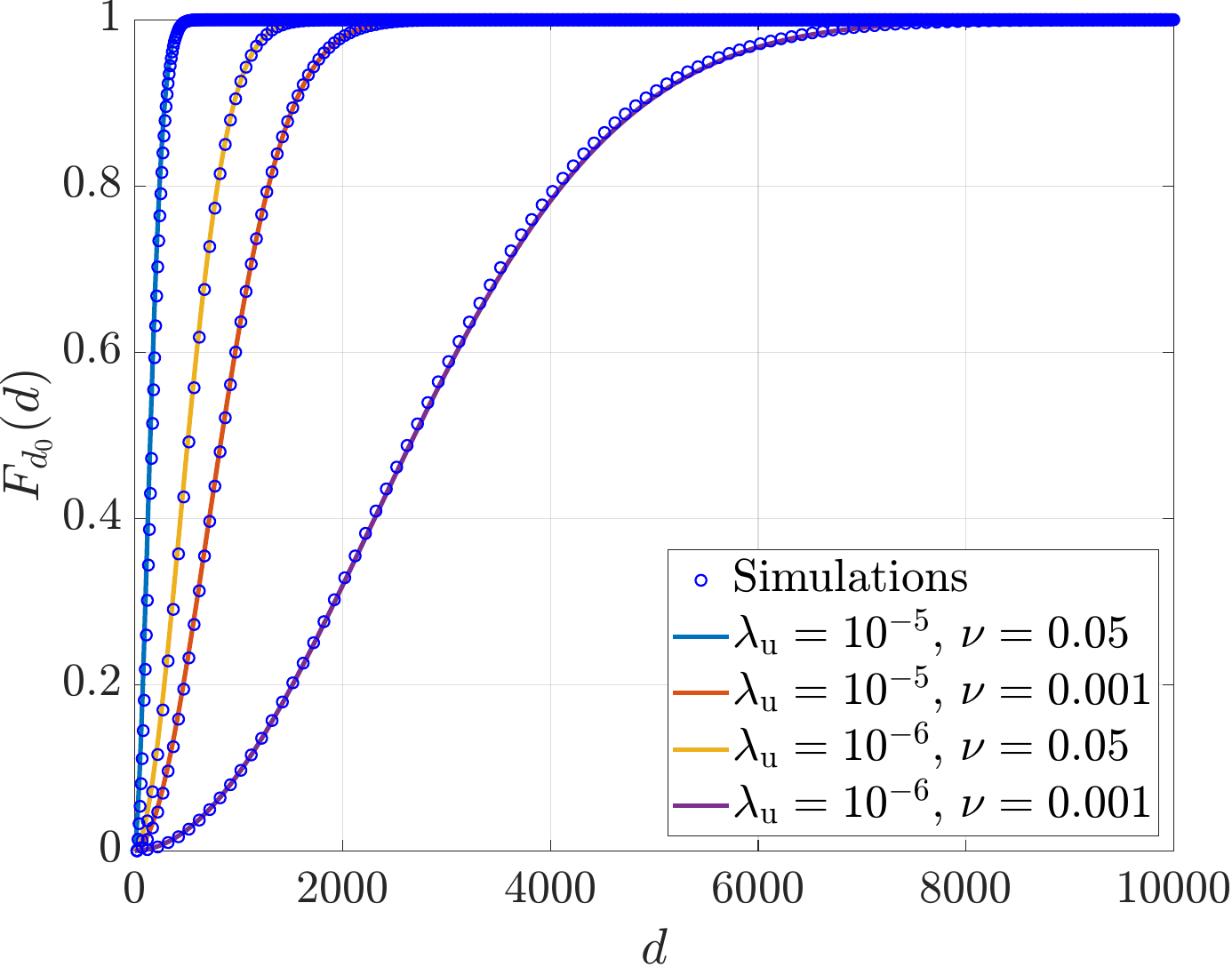}
\label{fig:result_HAP}}
\hfil
\subfloat[]
{\includegraphics[width = 0.25\textwidth]{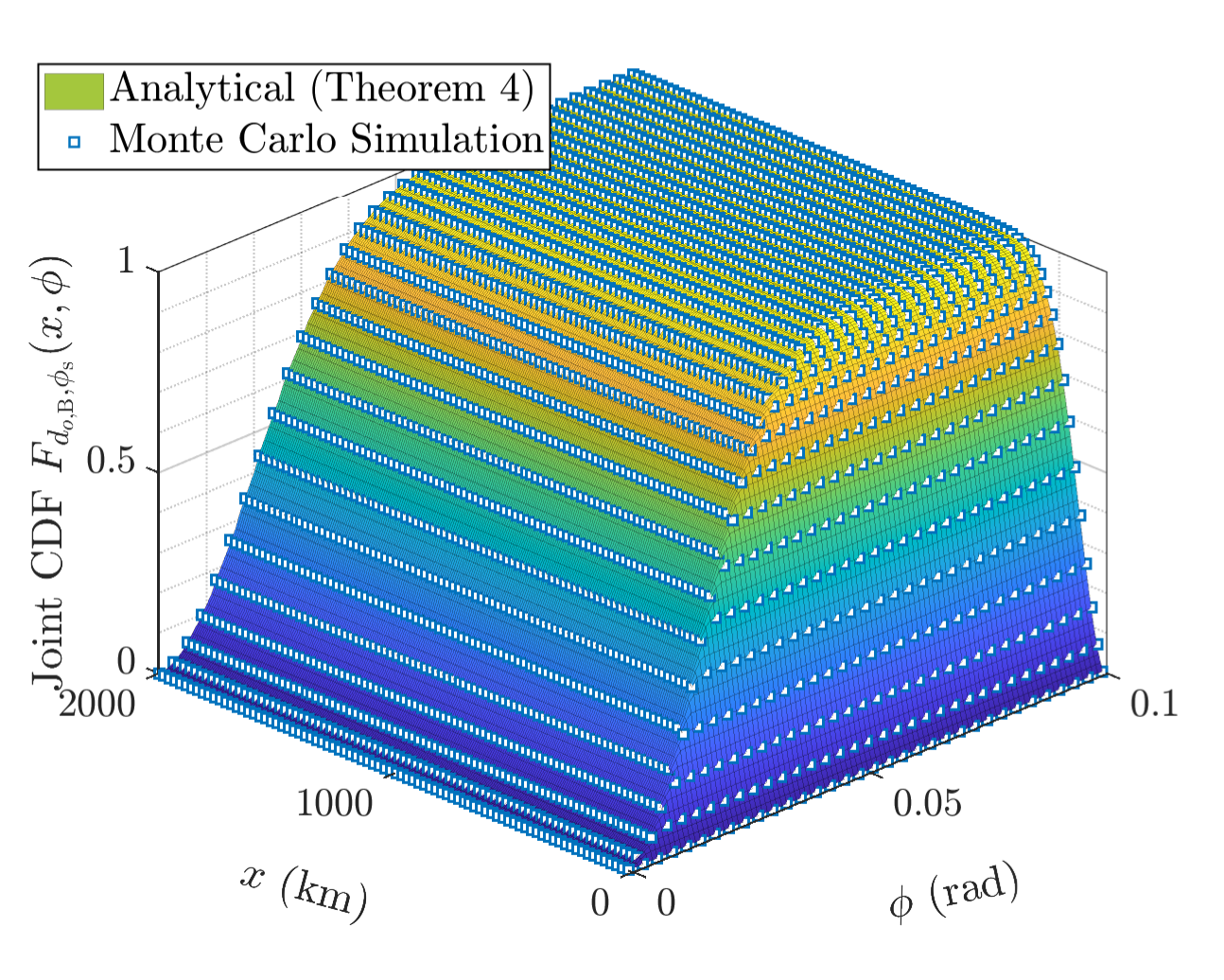}
\label{fig:joint_cdf}}
\caption{\ac{CDF}s of (a) the distance to the nearest anchor, (b) the distance to the nearest patrol ring, (c) the distance to the nearest \ac{HAP} under \ac{SCR-PCP}, and (d) the distance and serving angle under \ac{SCR-BCP}.}
\label{fig:result_nearest}
\end{figure*}

\subsection{Joint CDF of Nearest Distance and Serving Ring Angle under \ac{SCR-BCP}}
We now characterize the joint cumulative distribution function (CDF) of the nearest distance and the serving anchor angle under \ac{SCR-BCP}. Recall that $d_{0,{\rm B}}$ denotes the Euclidean distance from the typical user to the nearest \ac{HAP} in the \ac{SCR-BCP}, and let $\phi_{\rm s}$ denote the central angle of the anchor whose patrol ring contains the serving \ac{HAP}. Our goal is to derive
\begin{align*}
    F_{d_{0,{\rm B}},\phi_{\rm s}}(x,\phi) \triangleq \mathbb{P}\left(d_{0,{\rm B}}\le x,\;\phi_{\rm s}\le \phi\right), x\ge D_{\min},\; \phi\in[0,\pi].
\end{align*}
\begin{theorem}
\label{thm:joint_cdf_bcp}
In the \ac{SCR-BCP} network, the joint CDF of the nearest distance and the serving ring angle is
\begin{align*}
    F_{d_{0,{\rm B}},\phi_{\rm s}}(x,\phi) &= \big(1-\mathbb{P}(d_{0,{\rm B}} > x)\big) - \mathbb{P}(d_{0,{\rm B}} \le x,\;\phi_{\rm s} > \phi),
\end{align*}
where
\begin{align}
    &\mathbb{P}(d_{0,{\rm B}} > x) = \nonumber \\
    &\exp\!\left( -2\pi\lambda_{\rm u}R_{\rm H}^2 \int_{0}^{\pi} \big(1-p_{{\rm void},{\rm B}}(\varphi,x)\big)\sin\varphi\,{\rm d}\varphi \right),
    \label{eq:ccdf_nearest_haps_bcp_recall}
\end{align}
is the CCDF of $d_{0,{\rm B}}$ from Theorem~\ref{thm:nearest_haps_bcp}, and
\begin{align*}
    &\mathbb{P}(d_{0,{\rm B}} \le x,\;\phi_{\rm s} > \phi) = \nonumber \\ 
    &\exp\!\left( -2\pi\lambda_{\rm u}R_{\rm H}^2 \int_{0}^{\phi} \big(1-p_{{\rm void},{\rm B}}(\varphi,x)\big)\sin\varphi\,{\rm d}\varphi \right) \times  \nonumber\\ 
    &\bigg[ 1 - \exp\!\bigg( -2\pi\lambda_{\rm u}R_{\rm H}^2 \int_{\phi}^{\pi} \big(1-p_{{\rm void},{\rm B}}(\varphi,x)\big)\sin\varphi\,{\rm d}\varphi \bigg) \bigg],
\end{align*}
with $p_{{\rm void},{\rm B}}(\varphi,x) = \left(1 - \frac{L(\varphi,x)}{2\pi a}\right)^{n_{\rm H}},$ denoting the ring-wise void probability as in Theorem~\ref{thm:nearest_haps_bcp}.
\end{theorem}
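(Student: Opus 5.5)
The plan is to treat the two displayed pieces separately. The first identity needs only the law of total probability. Since $\{d_{0,{\rm B}}\le x\}$ is the disjoint union of $\{d_{0,{\rm B}}\le x,\phi_{\rm s}\le\phi\}$ and $\{d_{0,{\rm B}}\le x,\phi_{\rm s}>\phi\}$, we get $F_{d_{0,{\rm B}},\phi_{\rm s}}(x,\phi)=\big(1-\mathbb{P}(d_{0,{\rm B}}>x)\big)-\mathbb{P}(d_{0,{\rm B}}\le x,\phi_{\rm s}>\phi)$.

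For \eqref{eq:ccdf_nearest_haps_bcp_recall} I will reuse Theorem~\ref{thm:nearest_haps_bcp} directly. Conditioned on $\Phi_u$, the rings carry independent BPPs, so the void event is a product of ring-wise voids $p_{{\rm void},{\rm B}}(\varphi_u,x)$. The PGFL of the anchor PPP on $\mathcal{S}_{R_{\rm H}}$ in spherical coordinates then gives the exponent $2\pi\lambda_{\rm u}R_{\rm H}^2\int_0^\pi(1-p_{{\rm void},{\rm B}})\sin\varphi\,{\rm d}\varphi$. Integrating over $[0,\pi]$ instead of $[\phi_1(x),\phi_2(x)]$ is harmless, because $L(\varphi,x)=0$, and hence $1-p_{{\rm void},{\rm B}}=0$, outside the belt of Lemma~\ref{lem:nearest_ring}.

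For the second term I will split $\Phi_u$ into its restrictions to the cap $A_\phi=\{\varphi\le\phi\}$ and to its complement $A_\phi^c$. By the independence property of the PPP these are independent PPPs, and the induced HAP processes $\Phi_{\rm in}$ and $\Phi_{\rm out}$ are therefore independent. Write $d_{\rm in}$ and $d_{\rm out}$ for their nearest-HAP distances. Then $\phi_{\rm s}>\phi$ exactly when $d_{\rm out}<d_{\rm in}$, and $d_{0,{\rm B}}=\min(d_{\rm in},d_{\rm out})$. Repeating the PGFL computation over each region separately gives
\[
\mathbb{P}(d_{\rm in}>r)=\exp\!\Big(-2\pi\lambda_{\rm u}R_{\rm H}^2\int_0^\phi(1-p_{{\rm void},{\rm B}}(\varphi,r))\sin\varphi\,{\rm d}\varphi\Big),
\]
and the analogous expression over $[\phi,\pi]$ for $d_{\rm out}$. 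Under the reading of $\{d_{0,{\rm B}}\le x,\phi_{\rm s}>\phi\}$ as ``no HAP of an $A_\phi$-ring lies within $x$, while some HAP of an $A_\phi^c$-ring does,'' the event becomes $\{d_{\rm in}>x\}\cap\{d_{\rm out}\le x\}$. Independence then factorizes its probability as $\mathbb{P}(d_{\rm in}>x)\,[1-\mathbb{P}(d_{\rm out}>x)]$, which is exactly the stated product.

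The main obstacle is this event identification. On $\{d_{\rm out}\le x\}$, the literal event $\{d_{\rm out}<d_{\rm in}\}$ also allows $d_{\rm out}<d_{\rm in}\le x$, that is, an inner-ring HAP that lies within $x$ but farther than the serving HAP. So the factorized form equals the literal probability only after that configuration is excluded. My first attempt is to justify the identification by conditioning on the ordering of the two independent distances. To keep the argument honest, I will also derive the exact alternative
\[
\mathbb{P}(d_{\rm out}<d_{\rm in},\,d_{\rm out}\le x)=\int_{D_{\min}}^{x}\mathbb{P}(d_{\rm in}>r)\,f_{d_{\rm out}}(r)\,{\rm d}r.
\]
Because $\mathbb{P}(d_{\rm in}>r)$ is nonincreasing, this integral is at least $\mathbb{P}(d_{\rm in}>x)\,\mathbb{P}(d_{\rm out}\le x)$, so the stated product is a lower bound for the literal term. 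The bound is tight as $x\downarrow D_{\min}$, and also whenever the inner cap's HAPs cannot fall in the shell between $d_{\rm out}$ and $x$. The final write-up will state the factorized form under the ``first-within-$x$'' interpretation and record this integral as the exact refinement.
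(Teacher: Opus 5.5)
Your route is the paper's route: the complement identity, the PGFL restricted to the cap $\{\varphi\le\phi\}$ and to its complement, and independence of the two restricted processes. The first identity and the marginal CCDF over $[0,\pi]$ are fine as you have them.

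The step you single out is exactly where the paper's proof breaks. It simply asserts that $\{d_{0,{\rm B}}\le x,\ \phi_{\rm s}>\phi\}$ is \emph{equivalent} to ``no platform of a cap ring within $x$, and some platform of an outside ring within $x$''. Your objection is correct and decisive. Your ``first attempt'' to justify this by conditioning on the ordering therefore cannot work: the displayed formula is not the joint CDF of $(d_{0,{\rm B}},\phi_{\rm s})$ with $\phi_{\rm s}$ as defined.

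A quick check makes this concrete. Let $\Lambda_{\rm in}(x)=2\pi\lambda_{\rm u}R_{\rm H}^2\int_0^{\phi}(1-p_{{\rm void},{\rm B}}(\varphi,x))\sin\varphi\,{\rm d}\varphi$. The stated expression then collapses to $F_{d_{0,{\rm B}},\phi_{\rm s}}(x,\phi)=1-e^{-\Lambda_{\rm in}(x)}=\mathbb{P}(d_{\rm in}\le x)$, which ignores the requirement $d_{\rm in}<d_{\rm out}$. Take $x$ so large that $p_{{\rm void},{\rm B}}\equiv 0$. The formula would then say $\phi_{\rm s}\le\phi$ whenever at least one anchor lies in the cap. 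This fails with positive probability, e.g.\ $n_{\rm H}=1$ with exactly two anchors: one just inside the cap with its platform near $\alpha=\pi$, one just outside with its platform near $\alpha=0$.

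So you should not close by stating the factorized form ``under the first-within-$x$ interpretation''. That changes the event, so it replaces the theorem rather than proving it. What your argument actually establishes is the corrected statement
\begin{align*}
F_{d_{0,{\rm B}},\phi_{\rm s}}(x,\phi)
&=\int_{D_{\min}}^{x}\mathbb{P}(d_{\rm out}>r)\,f_{d_{\rm in}}(r)\,{\rm d}r\\
&=\mathbb{P}(d_{0,{\rm B}}\le x)-\int_{D_{\min}}^{x}\mathbb{P}(d_{\rm in}>r)\,f_{d_{\rm out}}(r)\,{\rm d}r .
\end{align*}
In this form the paper's product is only a lower bound on the subtracted term. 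The discrepancy is exactly $\mathbb{P}(d_{\rm out}<d_{\rm in}\le x)\le\mathbb{P}(d_{\rm in}\le x)\,\mathbb{P}(d_{\rm out}\le x)$. The stated $F$ therefore overestimates the true joint CDF by that amount, and it is accurate only when this product is negligible (e.g.\ $x$ close to $D_{\min}$). Present the integral as the result and the product as a bound or approximation, not the other way round.
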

\begin{IEEEproof}
See Appendix~\ref{app:theorem4}
\end{IEEEproof}

\subsection{Validation of Distance Statistics}
\label{subsec:distance_validation}
In this subsection, the analytical distance distributions derived in Lemma~\ref{lem:dist_nearest_anchor}, Lemma~\ref{lem:nearest_ring}, Theorem~\ref{thm:nearest_haps_pcp}, and Theorem~\ref{thm:nearest_haps_bcp} are validated against Monte Carlo simulations. Fig.~\ref{fig:result_nearest} reports the empirical and analytical \acp{CDF} of the distances to the nearest anchor, the nearest patrol ring, the nearest \ac{HAP} under \ac{SCR-PCP} model, and the joint CDF of distance to nearest distance and serving ring angle of \ac{HAP} under \ac{SCR-BCP}. The curves are indistinguishable in all regimes, confirming the accuracy of the \ac{SCR-PCP} and \ac{SCR-BCP} distance model. 

Figure~\ref{fig:result_anchor} validates the nearest-anchor distance distribution derived in Lemma~\ref{lem:dist_nearest_anchor}. As expected, increasing the anchor intensity $\lambda_{\rm u}$ shifts the distribution toward smaller distances and produces a steeper CDF, reflecting the increased proximity of anchors to the user. Figure~\ref{fig:result_ring} validates the nearest-ring distance distribution in Lemma~\ref{lem:nearest_ring}. The results show that $D_{\rm R}$ is primarily controlled by the anchor density $\lambda_{\rm u}$, while the patrol radius $a$ has only a secondary effect. This dependence becomes noticeable only for relatively large patrol radii, whereas for practical loitering radii the corresponding curves are nearly indistinguishable.

Figure~\ref{fig:result_HAP} validates the nearest-HAP distance distribution under \ac{SCR-PCP} given in Theorem~\ref{thm:nearest_haps_pcp}. Increasing either the anchor density $\lambda_{\rm u}$ or the intra-ring HAP intensity $\nu$ increases the likelihood of nearby serving platforms, resulting in a steeper serving-distance CDF. Finally, Fig.~\ref{fig:joint_cdf} validates the joint distribution $F_{d_{0,{\rm B}},\phi_{\rm s}}(x,\phi)$ derived in Theorem~\ref{thm:joint_cdf_bcp}. For each grid point $(x,\phi)$, the empirical probability $\mathbb{P}(d_{0,{\rm B}}\le x,\phi_{\rm s}\le\phi)$ is estimated via Monte Carlo simulation and plotted as a surface or contour map. The analytical joint CDF evaluated at the same $(x,\phi)$ grid is overlaid, and the two surfaces are visually indistinguishable across the full range of distances and angles. In particular, the simulation confirms that the probability mass concentrates near small angles $\phi_{\rm s}$ as $x$ decreases, reflecting the fact that short serving distances are predominantly realized by anchors close to the user's zenith.

\section{Coverage Probability Analysis}
\label{sec:coverage}
This section derives coverage probability expressions for the \ac{SCR-PCP} and \ac{SCR-BCP} networks using the distance distributions developed in Section~\ref{sec:stats} and the SIR model.

\subsection{Interference Decomposition and Ring Geometry}
Recall that the typical user associates with the nearest \ac{HAP} in the Cox process $\Phi_{\rm H}^k$, and $d_{0,k} = \min_{x \in \Phi_{\rm H}^k} \|x-\mathbf{o}\|$ is the serving distance, whose distance distribution is given in Theorem~\ref{thm:nearest_haps_pcp} and~\ref{thm:nearest_haps_bcp} for $k \in \{\rm P, B\}$ respectively. Under nearest-\ac{HAP} association, all other platforms in $\Phi_{\rm H}^k$ act as interferers. For later use, we adopt the following notation.
\begin{enumerate}
  \item For an anchor at central angle $\phi$, the minimum distance from the user to its patrol ring $\mathcal{C}(u,\gamma)$ is $D_{\mathrm{r}}(\phi)$ as derived in Lemma~\ref{lem:dist_ring}. Thus, the Euclidean distance from the user to a \ac{HAP} on that ring at local angle $\alpha$ is
  \begin{align}
      d(\phi,\alpha) = \sqrt{D_{\mathrm{r}}(\phi)^2 + 2a R_\oplus \sin\phi\, (1-\cos\alpha)}.
      \label{eq:d_phi_alpha_cov_short}
  \end{align}
  \item The aggregate interference at the typical user is decomposed as $ I_k = I_{\mathrm{s},k} + I_{\mathrm{o}, k}$, where $I_{\mathrm{s},k}$ aggregates the interference from the serving ring (excluding the serving \ac{HAP}) and $I_{\mathrm{o},k}$ aggregates the interference from all other rings.
\end{enumerate}

\subsection{Coverage Probability under \ac{SCR-PCP}}
Conditioned on the serving distance $d_{0,{\rm P}}=x$, the \ac{SIR} at the typical user is
\begin{align*}
    \mathrm{SIR}_{\rm P} = \frac{G_t h_0 x^{-\eta}}{I_{\rm s, P} + I_{\rm o, P}},
\end{align*}
The following theorem expresses the coverage probability as a single integral over $x$, with interference contributions captured through conditional Laplace transforms.

\begin{theorem}
\label{thm:coverage_pcp}
Consider the \ac{SCR-PCP} network constructed in Section~\ref{sec:sys_model}. Let the serving \ac{HAP} be at distance $d_{0,{\rm P}}=x$ on a patrol ring whose anchor lies at central angle $\phi_{\mathrm{s}}$. The coverage probability $P_{\rm c, P}(\tau) = \mathbb{P}(\mathrm{SIR}>\tau)$ is
\begin{align}
    P_{\rm c, P}(\tau) &= \int_{D_{\min}}^\infty \mathcal{L}_{I_{\rm s, P}}\!\left(\frac{\tau x^\eta}{G_t}\,\middle|\,x,\phi_{\mathrm{s}}\right) \mathcal{L}_{I_{\rm o, P}}\!\left(\frac{\tau x^\eta}{G_t}\,\middle|\,x\right) \times \nonumber \\
    &\hspace*{3cm} f_{d_{0,{\rm P}}}(x)\,{\rm d}x,
    \label{eq:Pc_main_pcp}
\end{align}
where $D_{\min}$ is the minimum possible ring distance. The Laplace transforms of the same-ring and other-ring interference are
\begin{align}
    &\mathcal{L}_{I_{\rm s, P}}(s \mid x,\phi_{\rm s}) = \nonumber\\
    &\exp\left(\!\! -2a\nu \int_{\alpha_0(x,\phi_{\rm s})}^\pi \!\!\left( 1 - \frac{1}{1 + s G_{I} d(\phi_{\rm s},\alpha)^{-\eta}} \right) {\rm d}\alpha \right), 
    \label{eq:laplace_same_cov_pcp} \\ 
    &\mathcal{L}_{I_{\rm o, P}}(s \mid x) = \nonumber\\
    &\exp\left( -2\pi \lambda_{\rm u} R_{\rm H}^2 \int_0^\pi \big[1 - \mathcal{L}_{I_{\rm r}}(s \mid x,\phi)\big] \sin\phi\, {\rm d}\phi \right),
    \label{eq:laplace_other_cov_pcp}
\end{align}
where $\mathcal{L}_{I_{\rm r}}(s \!\mid\! x,\phi) \!\!=\!\! \exp\left(\!\! -2a\nu \!\! \int_{\alpha_0(x,\phi)}^\pi \!\!\left(\! 1 \!-\! \frac{1}{1 + s G_I d(\phi,\alpha)^{-\eta}} \!\right)\! {\rm d}\alpha \! \right)$ is the conditional Laplace transform of the interference generated by a single non-serving ring whose anchor is at angle $\phi$, and $\alpha_0(x,\phi)$ is the exclusion boundary defined by
\begin{align*}
    \alpha_0(x,\phi) &= 
    \begin{cases}
        0, & \Psi(\phi,x) \le 0,\\
        \arccos\big(1 - \Psi(\phi,x)\big), & 0 < \Psi(\phi,x) < 2,\\
        \pi, & \Psi(\phi,x) \ge 2.
    \end{cases}
\end{align*}
and $\Psi(\phi,x) = \frac{x^2 - D_{\rm r}(\phi)^2}{2a R_\oplus \sin\phi}$
\end{theorem}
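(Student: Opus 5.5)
We follow the standard Rayleigh-fading route: condition on the serving geometry, turn the exponential tail of $h_0$ into a Laplace transform of the interference, and then evaluate that transform with the probability generating functionals of the nested Poisson processes.

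\textbf{Step 1 (fading).} Condition on $d_{0,{\rm P}}=x$, on the serving anchor angle $\phi_{\rm s}$, and on the interference. Since $h_0\sim\mathrm{Exp}(1)$,
\begin{align*}
\mathbb{P}(\mathrm{SIR}_{\rm P}>\tau\mid x,\phi_{\rm s},I_{\rm P})=\exp\!\big(-\tfrac{\tau x^\eta}{G_t}(I_{\rm s,P}+I_{\rm o,P})\big).
\end{align*}
Taking expectations over $I_{\rm P}$ gives $\mathbb{E}[e^{-sI_{\rm s,P}}e^{-sI_{\rm o,P}}\mid x,\phi_{\rm s}]$ with $s=\tau x^\eta/G_t$.

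Next we argue that the two interference terms are conditionally independent given $(x,\phi_{\rm s})$. The argument has two parts.

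\begin{itemize}
\item \emph{Serving ring.} By the Palm property of the 1-D PPP $\mathcal{P}_{u_{\rm s}}$, the points other than $x_0$ still form a PPP of intensity $\nu$. The nearest-HAP event forces them to avoid the arc of the ring inside $B(\mathbf{o},x)$.
\item \emph{Other rings.} By Slivnyak's theorem for $\Phi_u$, the remaining anchors form a PPP of intensity $\lambda_{\rm u}$. Their ring processes are independent marks, each conditioned to have no point inside $B(\mathbf{o},x)$.
\end{itemize}

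The step that needs the most care is justifying that the law of the other rings does not depend on $\phi_{\rm s}$. For this I would appeal to the independent-marking structure: the event that ring $u$ is void inside the ball is a mark event, so the thinned marked PPP stays Poisson with the conditioned kernel on every ring. I would also state explicitly that, for the other-ring term, the paper's expression integrates over all of $\phi\in[0,\pi]$ without reweighting by the void probability. In other words, it treats the conditioning purely through the exclusion of the arc inside $B(\mathbf{o},x)$, which is exact for the PPP kernel. The final averaging of the same-ring factor over $\phi_{\rm s}$ given $x$ is left implicit in the statement.

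\textbf{Step 2 (arc geometry).} Parametrize a ring at angle $\phi$ by the local angle $\alpha$ measured from the point closest to the user. Expanding $\|u+a(\cos t\,e_1+\sin t\,e_2)-\mathbf{o}\|^2$ gives \eqref{eq:d_phi_alpha_cov_short}, so the distance is increasing in $|\alpha|$. The condition $d(\phi,\alpha)\le x$ is then equivalent to $1-\cos\alpha\le\Psi(\phi,x)$, i.e. $|\alpha|\le\alpha_0(x,\phi)$, with the three cases of $\alpha_0$ coming from clipping $\Psi$ to $[0,2]$. This matches $L(\phi,x)=2a\alpha_0$ from Theorem~\ref{thm:nearest_haps_pcp}.

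\textbf{Step 3 (same-ring Laplace transform).} Apply the PGFL of a 1-D PPP of arc-length intensity $\nu$ (arc element $a\,{\rm d}\alpha$) on the admissible set $|\alpha|\in(\alpha_0,\pi]$. Averaging the Rayleigh fading of each interferer gives $\mathbb{E}[e^{-sG_Ih d^{-\eta}}]=1/(1+sG_Id^{-\eta})$. The symmetry $\alpha\mapsto-\alpha$ produces the factor $2a\nu$, which yields \eqref{eq:laplace_same_cov_pcp}; the single-ring transform $\mathcal{L}_{I_{\rm r}}$ has the same form.

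\textbf{Step 4 (other-ring Laplace transform).} Apply the PGFL of the anchor PPP on $\mathcal{S}_{R_{\rm H}}$ with area element $R_{\rm H}^2\sin\phi\,{\rm d}\phi\,{\rm d}\theta$, using $\mathcal{L}_{I_{\rm r}}(s\mid x,\phi)$ as the per-anchor functional. Isotropy (Theorem~\ref{thm:isotropy}) makes the integrand depend only on $\phi$, so the azimuthal integral contributes $2\pi$. This gives \eqref{eq:laplace_other_cov_pcp}.

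\textbf{Step 5 (decondition).} Integrate the product of the two transforms against $f_{d_{0,{\rm P}}}(x)$, obtained by differentiating \eqref{eq:ccdf_nearest_haps_pcp}, over $x\ge D_{\min}$. This uses the support fact $\mathbb{P}(d_{0,{\rm P}}<D_{\min})=0$ and gives \eqref{eq:Pc_main_pcp}.
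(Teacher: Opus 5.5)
Your route is the same as the paper's: exponential fading turned into a Laplace transform, a same-ring/other-ring split via Slivnyak, the exclusion angle $\alpha_0(x,\phi)$, the 1-D PGFL with bilateral symmetry, then the anchor PGFL. Steps 2 and 3 are fine.

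The gap is in exactly the step you flag. Integrating over all $\phi\in[0,\pi]$ with intensity $\lambda_{\rm u}$ and no void-probability weight is not ``exact for the PPP kernel.'' Conditioning on $d_{0,{\rm P}}=x$ requires every non-serving ring to be empty inside $B(\mathbf{o},x)$. For an independently marked PPP, conditioning on every mark satisfying an event is precisely the thinning you mention: the surviving non-serving anchors form a PPP of intensity $\lambda_{\rm u}e^{-\nu L(\phi,x)}$. Only the within-ring law is left unchanged by the Poisson kernel, namely a PPP of intensity $\nu$ on the safe arc. Concretely, each non-serving ring contributes $\mathbb{E}[e^{-sI_{\rm r}}\mathbf{1}\{\text{void}\}]=e^{-\nu L(\phi,x)}\mathcal{L}_{I_{\rm r}}(s\mid x,\phi)$ and $\mathbb{P}(\text{void})=e^{-\nu L(\phi,x)}$. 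Hence
\begin{align*}
\mathbb{E}[e^{-sI_{\rm o,P}}\mid d_{0,{\rm P}}=x]
&=\frac{\mathbb{E}_{\Phi_u}\big[\prod_{u}e^{-\nu L(\phi_u,x)}\mathcal{L}_{I_{\rm r}}(s\mid x,\phi_u)\big]}{\mathbb{E}_{\Phi_u}\big[\prod_{u}e^{-\nu L(\phi_u,x)}\big]}\\
&=\exp\Big(-2\pi\lambda_{\rm u}R_{\rm H}^2\int_0^\pi e^{-\nu L(\phi,x)}\big[1-\mathcal{L}_{I_{\rm r}}(s\mid x,\phi)\big]\sin\phi\,{\rm d}\phi\Big),
\end{align*}
where the products run over the non-serving anchors, which form a PPP of intensity $\lambda_{\rm u}$ by Slivnyak.

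The weight $e^{-\nu L}$ is a Cox-process effect: the void event reweights the driving anchor process. It is not a kernel effect. Without it, \eqref{eq:laplace_other_cov_pcp} is strictly smaller than the true conditional transform whenever some rings are only partially inside $B(\mathbf{o},x)$, because such rings are over-counted. So your Steps 1 and 4 can at most show that the stated expression is a lower bound on $P_{\rm c,P}$. That bound is close to exact only when $\nu L(\phi,x)\le 2\pi a\nu$ is small.

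The paper's proof has the same hole. It invokes Slivnyak's theorem for Cox processes to claim the remaining points are unaltered. That is true of the Palm reduction, but not of the extra nearest-point (void) conditioning. The paper's own SCR-BCP proof handles this correctly by keeping $p_{{\rm void},{\rm B}}(\varphi,x)$ in \eqref{eq:LIo_bcp_final}; the PCP analogue is $p_{{\rm void},{\rm P}}(\phi,x)=e^{-\nu L(\phi,x)}$.

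Likewise, the $\phi_{\rm s}$-dependence should not be left implicit. A Campbell--Mecke computation gives the joint density $f_{d_{0,{\rm P}},\phi_{\rm s}}(x,\phi)=2\pi\lambda_{\rm u}R_{\rm H}^2\nu\sin\phi\,\partial_xL(\phi,x)\,e^{-\nu L(\phi,x)}\,\mathbb{P}(d_{0,{\rm P}}>x)$. The exact result is then the double integral of $\mathcal{L}_{I_{\rm s,P}}(\cdot\mid x,\phi)$ times the corrected $\mathcal{L}_{I_{\rm o,P}}(\cdot\mid x)$ against this density, mirroring \eqref{eq:Pc_bcp_final}.
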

\begin{IEEEproof}
See Appendix~\ref{app:theorem5}
\end{IEEEproof}

\subsection{Coverage Probability under \ac{SCR-BCP}}
In contrast to the \ac{SCR-PCP} case, where the intra-ring \ac{PPP} structure allows the coverage probability to be expressed solely in terms of ring-wise Laplace transforms, the \ac{SCR-BCP} model requires a different approach. Here, each patrol ring contains a fixed number $n_{\rm H}$ of \acp{HAP}, so conditioning on the serving event imposes a finite-population constraint on the remaining points of the serving ring and on all other rings. A direct attempt to work only with the marginal serving-distance distribution $d_{0,{\rm B}}$ leads to intractable dependencies between these points. To overcome this, we first exploit the joint CDF of the serving distance and serving anchor angle derived in the previous subsection, and then characterize the conditional \ac{BPP} on each ring via the truncated angular density $f_{\alpha\mid D>x}(\cdot\mid\phi,x)$ to obtain Laplace transforms of the interference components.

For a ring whose anchor lies at central angle $\phi$, define the exclusion boundary $\alpha_0(x,\phi)$ exactly as in the \ac{SCR-PCP} analysis. The corresponding safe arc is $[\alpha_0(x,\phi),\,2\pi-\alpha_0(x,\phi)]$. By symmetry,
\begin{align}
    \int_{\{\alpha:\,d(\phi,\alpha)>x\}} g(\alpha)\,{\rm d}\alpha = 2\int_{\alpha_0(x,\phi)}^{\pi} g(\alpha)\,{\rm d}\alpha,
    \label{eq:safe_arc_symmetry_bcp}
\end{align}
for any function $g$ depending on $\alpha$ only through $d(\phi,\alpha)$.
\begin{theorem}
\label{thm:coverage_bcp}
Consider the \ac{SCR-BCP} network where each patrol ring contains exactly $n_{\rm H}$ platforms. Let the serving \ac{HAP} be located at distance $d_{0,{\rm B}}=x$ on a patrol ring anchored at central angle $\phi_{\rm s}$. The coverage probability $P_{\rm c, B}(\tau) = \mathbb{P}(\mathrm{SIR}_{\rm B}>\tau)$ is
\begin{align}
    &P_{\rm c, B}(\tau) = \int_{D_{\min}}^\infty \int_0^\pi \mathcal{L}_{I_{\rm s, B}}\!\left(\frac{\tau x^\eta}{G_t} \,\middle|\, x,\phi\right) \mathcal{L}_{I_{\rm o, B}}\!\left(\frac{\tau x^\eta}{G_t} \,\middle|\, x\right) \times \nonumber \\
    &f_{d_{0,{\rm B}},\phi_{\rm s}}(x,\phi)\, {\rm d}\phi\,{\rm d}x,
    \label{eq:Pc_bcp_final}
\end{align}
where $f_{d_{0,{\rm B}},\phi_{\rm s}}(x,\phi)$ is the joint PDF of the serving distance and serving ring angle obtained in the previous subsection. The conditional Laplace transforms of the same-ring and other-ring interference are respectively
\begin{align}
    &\mathcal{L}_{I_{\rm s, B}}(s \mid x,\phi) = \big(\mathcal{Q}(s \mid x,\phi)\big)^{n_{\rm H}-1}, \label{eq:LIs_bcp_final} \\ 
    &\mathcal{L}_{I_{\rm o, B}}(s \mid x) = \exp\!\bigg( -2\pi\lambda_{\rm u}R_{\rm H}^2 \nonumber \\
    &\int_0^\pi \big[1-\big(\mathcal{Q}(s\mid x,\varphi)\big)^{n_{\rm H}}\big] p_{{\rm void},{\rm B}}(\varphi,x) \sin\varphi\,{\rm d}\varphi \bigg),
    \label{eq:LIo_bcp_final}
\end{align}
where $p_{{\rm void},{\rm B}}(\varphi,x)
= \big(1-{L(\varphi,x)}/{(2\pi a)}\big)^{n_{\rm H}}$ is the ring-wise void probability defined in \eqref{eq:per_ring_void_bcp}, and
\begin{align}
    \mathcal{Q}(s\mid x,\varphi) &= \frac{1}{\pi - \alpha_0(x,\varphi)} \int_{\alpha_0(x,\varphi)}^\pi \frac{1}{1+s G_I d(\varphi,\alpha)^{-\eta}} \, {\rm d}\alpha
    \label{eq:Q_single_bcp}
\end{align}
is the conditional Laplace transform of the interference generated by a single \ac{HAP} on a non-serving ring at angle $\varphi$, given that its distance to the user exceeds $x$.
\end{theorem}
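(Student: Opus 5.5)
The plan is to condition on the pair $(d_{0,{\rm B}},\phi_{\rm s})=(x,\phi)$, whose joint law is available from Theorem~\ref{thm:joint_cdf_bcp}. Differentiating that joint CDF in $x$ and $\phi$ gives $f_{d_{0,{\rm B}},\phi_{\rm s}}(x,\phi)$, and the outer double integral in \eqref{eq:Pc_bcp_final} is then the law of total probability.

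Given the conditioning, the serving gain satisfies $h_0\sim\mathrm{Exp}(1)$ independently of everything else. Hence $\mathbb{P}(\mathrm{SIR}_{\rm B}>\tau\mid x,\phi)=\mathbb{E}[\exp(-sI_{\rm B})\mid x,\phi]$ with $s=\tau x^\eta/G_t$. It then remains to show that, under this conditioning, $I_{\rm s,B}$ and $I_{\rm o,B}$ are independent and have the stated Laplace transforms.

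\textbf{Same-ring term.} On the serving ring there are $n_{\rm H}$ i.i.d.\ uniform angles, and the event is that one of them is at distance exactly $x$ while the others are at distance greater than $x$. By exchangeability, I condition on which index serves. The remaining $n_{\rm H}-1$ angles are then i.i.d.\ uniform restricted to the safe arc $\{\alpha: d(\phi,\alpha)>x\}$, which has length $2(\pi-\alpha_0(x,\phi))$.

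Each such point contributes $\mathbb{E}[e^{-sG_I h d^{-\eta}}]=1/(1+sG_Id^{-\eta})$. Averaging over the truncated uniform density and folding the arc with \eqref{eq:safe_arc_symmetry_bcp} gives $\mathcal{Q}(s\mid x,\phi)$. Independence across the $n_{\rm H}-1$ points gives \eqref{eq:LIs_bcp_final}.

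\textbf{Other-ring term.} Condition on the serving anchor. By Slivnyak's theorem for $\Phi_u$, the remaining anchors form a homogeneous PPP on $\mathcal{S}_{R_{\rm H}}$. Each carries an independent BPP mark that must avoid the ball of radius $x$. This constraint is an independent thinning-type conditioning: the joint law of the rings given that none has a point in $b(\mathbf{o},x)$ is a Poisson process of anchors with intensity modulated by $p_{{\rm void},{\rm B}}(\varphi,x)$. On each retained ring, the $n_{\rm H}$ points are i.i.d.\ uniform on the safe arc.

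Applying the PGFL of the PPP to the ratio $\mathbb{E}[e^{-sI}\mathbf{1}\{\text{void}\}]/\mathbb{P}(\text{void})$ produces $\exp(-\lambda_{\rm u}\int[p_{\rm void}-p_{\rm void}\mathcal{Q}^{n_{\rm H}}]\,{\rm d}A)$. In spherical coordinates this gives \eqref{eq:LIo_bcp_final}: the per-ring term is $\mathbb{E}[e^{-sI_{\rm ring}}\mathbf{1}\{\text{void}\}]=p_{\rm void}\,\mathcal{Q}^{n_{\rm H}}$, and the denominator contributes $\exp(\lambda_{\rm u}\int(1-p_{\rm void}))$, which cancels against the numerator's $\exp(-\lambda_{\rm u}\int(1-p_{\rm void}\mathcal{Q}^{n_{\rm H}}))$.

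\textbf{Main obstacle.} The delicate step is justifying this conditioning. The event $\{\phi_{\rm s}=\phi, d_{0,{\rm B}}=x\}$ has probability zero, so I would work with the Palm/Campbell–Mecke form: write $\mathbb{P}(\mathrm{SIR}>\tau)$ as $\mathbb{E}\sum_{u\in\Phi_u}\sum_{j}\mathbf{1}\{x_{u,j}\text{ nearest}\}\mathbf{1}\{\mathrm{SIR}>\tau\}$. Applying Campbell–Mecke then factorizes the other rings via the reduced Palm distribution, and the result can be identified with the joint density of Theorem~\ref{thm:joint_cdf_bcp}.

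A second concern is whether the exclusion should apply to all angles on the other rings, i.e.\ integrating over $[0,\pi]$ with $p_{\rm void}$ weighting. I would check this against the closed form by verifying that setting $s=0$ recovers $\mathcal{L}_{I_{\rm o,B}}=1$, which holds since the expression then vanishes.
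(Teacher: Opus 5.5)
\textbf{Laplace transforms.} Your treatment of the two Laplace transforms follows the paper's argument. You condition on $(d_{0,{\rm B}},\phi_{\rm s})=(x,\phi)$ and use $h_0\sim\mathrm{Exp}(1)$. You take the remaining $n_{\rm H}-1$ serving-ring points as i.i.d.\ uniform on the safe arc, and you treat the other rings as a $p_{{\rm void},{\rm B}}$-thinned anchor PPP carrying truncated-uniform marks. Your ratio $\mathbb{E}[e^{-sI}\mathbf{1}\{\text{void}\}]/\mathbb{P}(\text{void})$ justifies that thinning more cleanly than the paper's one-line assertion, and Campbell--Mecke is the right way to handle the null conditioning event. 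Your $s=0$ check, by contrast, cannot detect a wrong weighting, since any weighting gives $1$ there; it is the ratio computation that settles it.

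\textbf{The step that fails.} What does not go through is obtaining $f_{d_{0,{\rm B}},\phi_{\rm s}}$ by differentiating the CDF of Theorem~\ref{thm:joint_cdf_bcp}, or identifying your Campbell--Mecke density with it. Write $\Lambda_A(x)=2\pi\lambda_{\rm u}R_{\rm H}^2\int_A(1-p_{{\rm void},{\rm B}}(\varphi,x))\sin\varphi\,{\rm d}\varphi$. That CDF then simplifies to $1-e^{-\Lambda_{[0,\phi]}(x)}$, which is the probability that \emph{some} ring with anchor angle at most $\phi$ has a HAP within $x$. Let $D_1,D_2$ be the nearest-HAP distances over rings with angle $\le\phi$ and $>\phi$. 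The derivation of Theorem~\ref{thm:joint_cdf_bcp} replaces the event $\{D_2\le x,\,D_2<D_1\}$ by $\{D_1>x,\,D_2\le x\}$. This discards the configurations $D_2<D_1\le x$, which have positive probability: two rings at nearby polar angles have overlapping distance ranges.

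\textbf{The correct density.} Your Mecke computation gives directly
\[
f_{d_{0,{\rm B}},\phi_{\rm s}}(x,\phi)=2\pi\lambda_{\rm u}R_{\rm H}^2\sin\phi\;\partial_x\big[1-p_{{\rm void},{\rm B}}(\phi,x)\big]\;\mathbb{P}(d_{0,{\rm B}}>x).
\]
This has the same $x$-marginal as the mixed partial of Theorem~\ref{thm:joint_cdf_bcp}, but a different $\phi$-profile. Because $\mathcal{L}_{I_{\rm s, B}}=\mathcal{Q}^{n_{\rm H}-1}$ depends on $\phi$ whenever $n_{\rm H}\ge 2$, the discrepancy changes $P_{\rm c,B}$; only for $n_{\rm H}=1$ is it harmless. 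If you use the density above in \eqref{eq:Pc_bcp_final}, your argument proves the theorem. The paper's proof averages against ``the joint PDF obtained in the previous subsection,'' so it inherits the same problem.
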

\begin{IEEEproof}
See Appendix~\ref{app:theorem6}
\end{IEEEproof}

\subsection{Discussion: \ac{SCR-PCP} versus \ac{SCR-BCP}}
Theorems~\ref{thm:coverage_pcp} and \ref{thm:coverage_bcp} highlight the fundamental distinction between the two spatial models. Both \ac{SCR-PCP} and \ac{SCR-BCP} share the same parent anchor \ac{PPP} and patrol-ring geometry; the difference arises solely through the intra-ring point process. In \ac{SCR-PCP}, each patrol ring carries a one-dimensional \ac{PPP} with intensity $\nu$, so the number of \acp{HAP} on a ring is Poisson distributed with mean $2\pi a\nu$. Consequently, the coverage probability admits a ring-wise decomposition in terms of independent Laplace functionals. In \ac{SCR-BCP}, each ring contains exactly $n_{\rm H}$ platforms, which introduces finite-population dependencies after conditioning on the serving event. This dependence is reflected in the appearance of the joint density $f_{d_{0,{\rm B}},\phi_{\rm s}}(x,\phi)$ and the finite-power terms in Theorem~\ref{thm:coverage_bcp}. From a point-process perspective, \ac{SCR-PCP} may be viewed as a Poisson mixture of ring configurations, whereas \ac{SCR-BCP} fixes the per-ring population to a deterministic value. Thus, \ac{SCR-BCP} is not a limiting case of \ac{SCR-PCP}, but rather a distinct deployment regime in which the number of platforms per patrol ring is treated as a planning parameter.

\section{Energy-Aware Patrol Optimization}
\label{sec:energy}
In practice, the patrol geometry affects both communication performance and propulsion power. We therefore augment the coverage analysis with an aerodynamic model for steady circular flight and find the energy-optimal patrol design.

\subsection{Aerodynamic Flight Model}
Each \ac{HAP} performs steady circular flight (SCF) on a patrol ring of radius $a=R_{\rm H}\tan\gamma$
with constant airspeed $V$. The following result characterizes the associated propulsion power.
\begin{lemma}
\label{lem:power}
Consider a \ac{HAP} of mass $m$ flying at a constant speed $V$ along a patrol ring of radius $a = R_{\rm H} \tan\gamma$. To maintain steady circular flight at a constant altitude, the propulsion power required is:
\begin{align}
    P_{\mathrm{SCF}}(\gamma) = P_{\mathrm{SHF}} \left[ 1 + \left( \frac{V^2}{g R_{\rm H} \tan\gamma} \right)^2 \right],
    \label{eq:power_scf}
\end{align}
where $P_{\mathrm{SHF}}$ is the baseline power required for steady horizontal flight and $g$ is the standard acceleration due to gravity.
\end{lemma}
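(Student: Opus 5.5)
The plan is to derive \eqref{eq:power_scf} from the force balance of a coordinated, level banked turn combined with a parabolic drag polar. This is the standard fixed-wing argument used in UAV energy models.

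First, write the equilibrium conditions for steady circular flight at constant altitude and airspeed $V$ on a circle of radius $a=R_{\rm H}\tan\gamma$. Let $\varphi_b$ denote the bank angle and $\mathcal{L}_{\rm ift}$ the lift.

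\begin{itemize}
\item The vertical component balances weight: $\mathcal{L}_{\rm ift}\cos\varphi_b = mg$.
\item The horizontal component supplies the centripetal force: $\mathcal{L}_{\rm ift}\sin\varphi_b = mV^2/a$.
\item Squaring and adding gives the load factor
\begin{align*}
n^2 \triangleq \left(\frac{\mathcal{L}_{\rm ift}}{mg}\right)^2 = 1+\left(\frac{V^2}{g a}\right)^2 = 1+\left(\frac{V^2}{gR_{\rm H}\tan\gamma}\right)^2 .
\end{align*}
\end{itemize}

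This already exhibits the exact bracketed factor of \eqref{eq:power_scf}. The curvature of the shell is ignored at this step, since $a\ll R_{\rm H}$ and the ring lies in the tangent plane $\Pi(u)$.

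Second, introduce the drag model. Thrust equals drag $D = \tfrac12\rho V^2 S C_{D0} + \frac{2\mathcal{L}_{\rm ift}^2}{\pi e A\rho V^2 S}$, so propulsion power is $P=DV$. The induced part scales with $n^2$. For steady horizontal flight ($n=1$), $P_{\rm SHF}$ is the same expression with $\mathcal{L}_{\rm ift}=mg$.

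Here is the main obstacle. The exact polar gives $P_{\rm SCF} = c_1V^3 + c_2 n^2/V$, not $P_{\rm SHF}\,n^2$. To reach the stated multiplicative form, I would adopt the convention used in the fixed-wing literature (e.g.\ Zeng--Zhang type models) of defining $P_{\rm SHF}$ as the lift-induced power required to sustain weight at speed $V$. Equivalently, I would assume induced drag dominates at stratospheric altitudes, where the low air density makes the induced term dominant at loiter speeds. Under that assumption, $P_{\rm SCF}=P_{\rm SHF}\,n^2$ follows immediately.

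Alternatively, the lemma can be read at fixed lift-to-drag ratio. Drag then scales linearly with lift, so the power scales with $n$. Matching the stated squared form again requires the induced-power interpretation, so I would state that assumption explicitly.

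Finally, substitute the load factor from the first step into $P_{\rm SCF}=P_{\rm SHF}\,n^2$ to obtain \eqref{eq:power_scf}. I would also note the sanity check $\gamma\to\pi/2$, where $a\to\infty$ and $P_{\rm SCF}\to P_{\rm SHF}$.
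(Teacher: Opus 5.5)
Your proposal follows the paper's proof essentially step for step: the same coordinated-turn balance $L\cos\zeta=mg$, $L\sin\zeta=mV^2/a$ for the lift $L$ gives the load factor $n^2=1+\bigl(V^2/(gR_{\rm H}\tan\gamma)\bigr)^2$, and then $P_{\rm SCF}=n^2P_{\rm SHF}$ follows from induced drag scaling quadratically with lift. The paper takes that last step without comment, so your remark that it is exact only when $P_{\rm SHF}$ is (or is dominated by) the lift-induced power, and not for the full parabolic polar $c_1V^3+c_2n^2/V$, is a correct caveat rather than a gap.
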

\begin{IEEEproof}
We analyze the kinematics in a local coordinate frame centered at a generic anchor $u \in \Phi_u$. For a coordinated turn of radius $a$, the lift components satisfy $L\cos\zeta = mg,$ and 
$L\sin\zeta = \frac{mV^2}{a}.$ Hence $\tan\zeta=\frac{V^2}{ag}=\frac{V^2}{gR_{\rm H}\tan\gamma}.$ Let $n=L/W=1/\cos\zeta$ denote the load factor. Since the induced drag scales quadratically with the lift, the propulsion power scales as $n^2$ relative to steady horizontal flight. Therefore,
\begin{align*}
    P_{\rm SCF} = n^2 P_{\rm SHF} = (1+\tan^2\zeta)P_{\rm SHF}.
\end{align*}
Substituting the above expression for $\tan\zeta$ yields \eqref{eq:power_scf}.
\end{IEEEproof}

\begin{figure*}[t]
\centering
\subfloat[]
{\includegraphics[width = 0.27\textwidth]{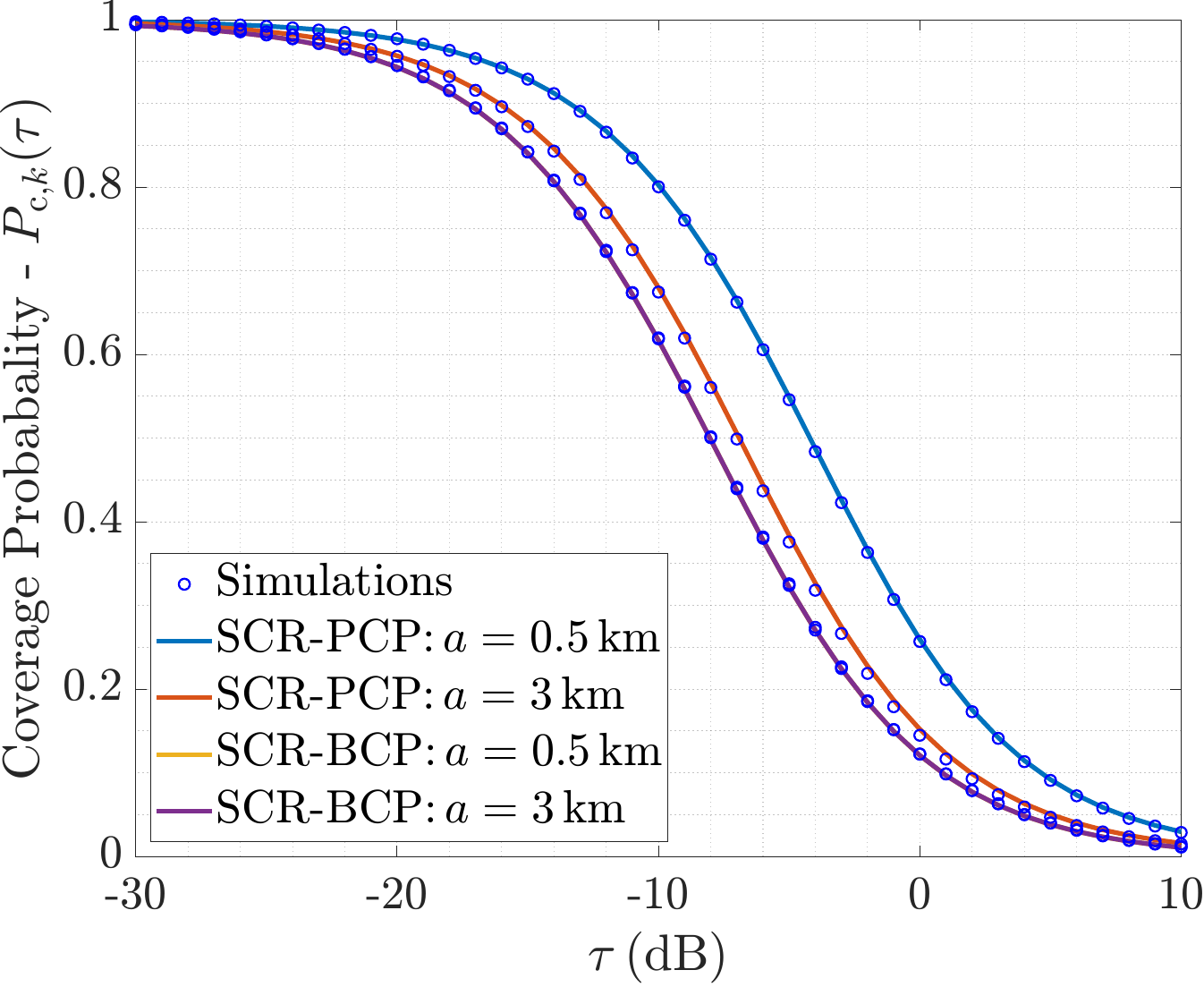}
\label{fig:n_tau}}
\hfil
\subfloat[]
{\includegraphics[width = 0.27\textwidth]{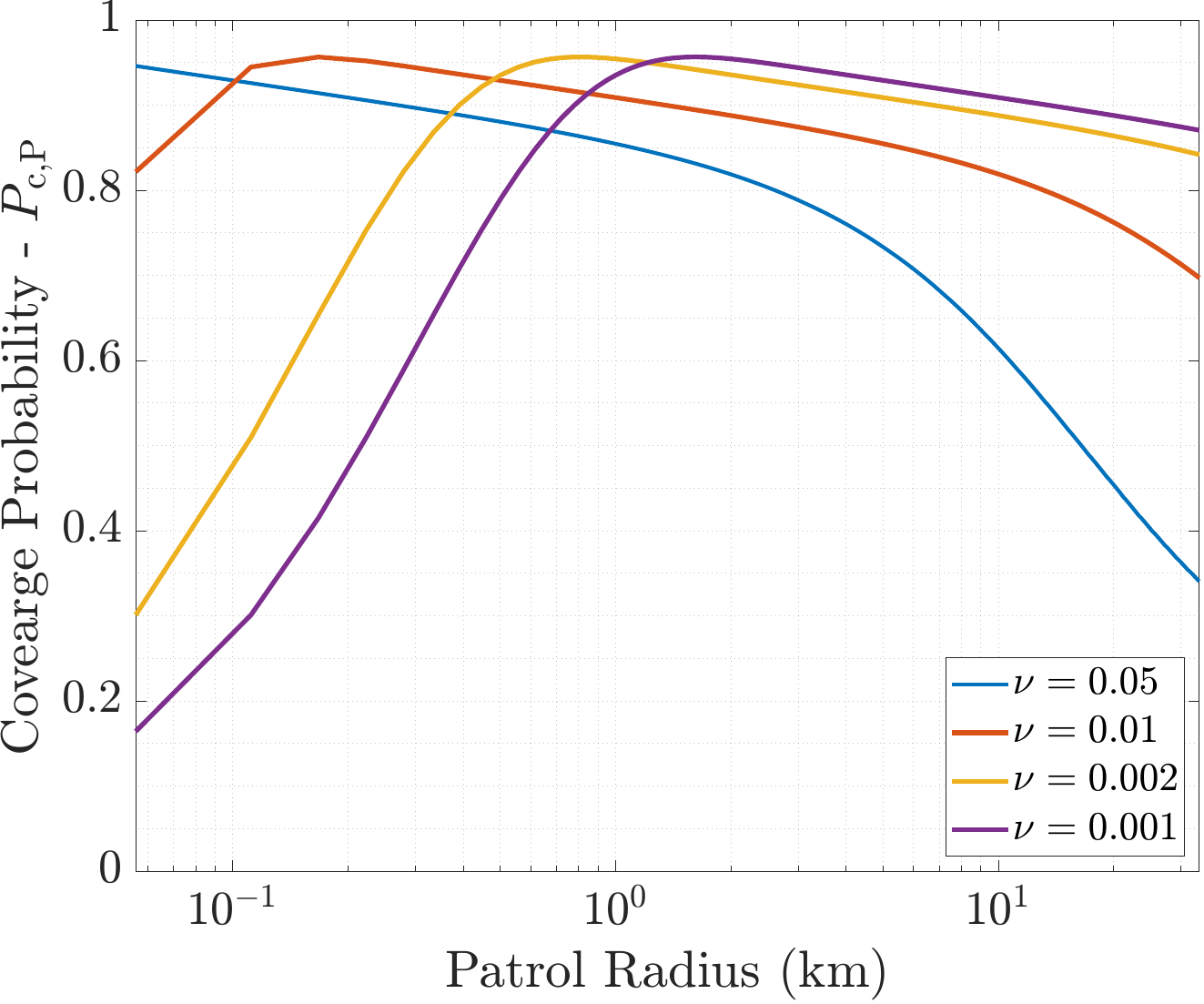}
\label{fig:result_1}}
\hfil
\subfloat[]
{\includegraphics[width = 0.27\textwidth]{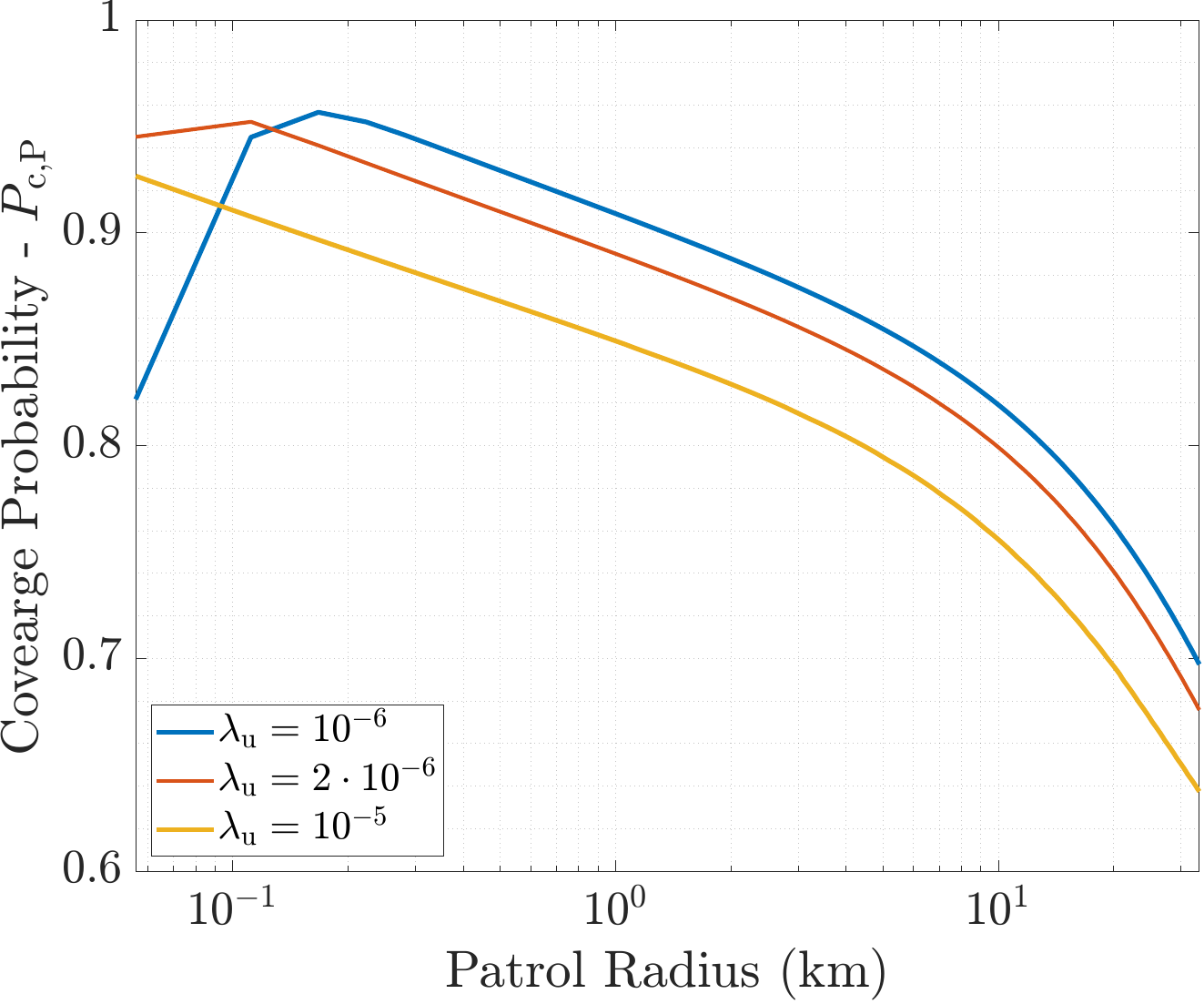}
\label{fig:result_2}}
\caption{(a) Coverage probability versus the target SIR threshold $\tau$ for varying patrol radii $a$ under \ac{SCR-PCP} and \ac{SCR-BCP}. (b) $P_{\rm c,P}$ versus the patrol-ring radius $a$ for different $\nu$ with fixed anchor intensity $\lambda_{\rm u}=10^{-6}$. (c) $P_{\rm c,P}$ versus the patrol-ring radius $a$ for different $\lambda_{\rm u}$ with fixed \ac{HAP} density $\nu=0.01$.}
\label{fig:r_1_2}
\end{figure*}

\subsection{Coverage Energy Efficiency}
We now combine the coverage probability derived in Section~\ref{sec:coverage} with the SCF power model to define an energy-aware performance metric.
\begin{definition}
For a target SIR threshold $\tau$, the \ac{CEE} of the \ac{SCR-CP} network as a function of the patrol parameter $\gamma$ is
\begin{align}
    \eta_{{\rm CEE}, k}(\gamma) = \frac{P_{{\rm c}, k}(\tau;\gamma)}{P_{\mathrm{SCF}}(\gamma)},
    \label{eq:cee_def}
\end{align}
where $P_{{\rm c}, k}(\tau;\gamma)$ is the coverage probability when the patrol radius is determined by $\gamma$, and $P_{\mathrm{SCF}}(\gamma)$ is given by \eqref{eq:power_scf}.
\end{definition}
The metric $\eta_{{\rm CEE}, k}(\gamma)$ captures the trade-off induced by the patrol geometry. Increasing $\gamma$ enlarges the patrol radius, which may improve spatial accessibility and alter the interference field, while simultaneously reducing the aerodynamic penalty of circular flight. However, excessively large patrol radii also increase the serving distance and can therefore degrade the coverage probability. The energy-optimal patrol design should balance these competing effects. Since $a = R_{\rm H} \tan\gamma$ is a monotone transformation on the feasible domain, optimizing over $\gamma$ is equivalent to optimizing over the patrol radius $a$.
\begin{lemma}
\label{lem:energy_opt}
Assume that, for a given model index $k \in \{\mathrm{P},\mathrm{B}\}$, the coverage probability $P_{{\rm c},k}(\tau;\gamma)$ is differentiable in $\gamma$ on the interval $(0,\phi_{\max})$, where $\phi_{\max}$ is the maximum feasible patrol angle set by operational constraints. Then any interior maximizer $\gamma_{\mathrm{EE}}^\star \in (0,\phi_{\max})$ of the \ac{CEE} $\eta_{{\rm CEE}, k}(\gamma)$ satisfies
\begin{align}
    &\left. \frac{\partial P_{{\rm c}, k}(\tau;\gamma)}{\partial \gamma} \right|_{\gamma = \gamma_{\mathrm{EE}}^\star} = \nonumber\\
    &\hspace*{0.5cm} -P_{{\rm c}, k}(\tau;\gamma_{\mathrm{EE}}^\star) \left[ \frac{2 V^4 \sec^2(\gamma_{\mathrm{EE}}^\star)} {g^2 R_{\rm H}^2 \tan^3(\gamma_{\mathrm{EE}}^\star) + V^4 \tan(\gamma_{\mathrm{EE}}^\star)} \right].
    \label{eq:optimal_gamma_condition_simple}
\end{align}
\end{lemma}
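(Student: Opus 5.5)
The plan is to write the first-order optimality condition for the ratio in \eqref{eq:cee_def} and then evaluate the logarithmic derivative of the propulsion power from Lemma~\ref{lem:power} explicitly.

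\textbf{Step 1: stationarity.} On $(0,\phi_{\max})$ we have $\tan\gamma>0$, so $P_{\mathrm{SCF}}(\gamma)$ in \eqref{eq:power_scf} is smooth and strictly positive. Since $P_{{\rm c},k}$ is assumed differentiable, $\eta_{{\rm CEE},k}$ is differentiable there. Any interior maximizer $\gamma_{\mathrm{EE}}^\star$ must therefore satisfy $\partial_\gamma \eta_{{\rm CEE},k}=0$. By the quotient rule, and multiplying through by $P_{\mathrm{SCF}}^2>0$, this is equivalent to
\begin{align*}
\frac{\partial P_{{\rm c},k}}{\partial\gamma}=P_{{\rm c},k}\,\frac{P_{\mathrm{SCF}}'(\gamma)}{P_{\mathrm{SCF}}(\gamma)}
\end{align*}
evaluated at $\gamma_{\mathrm{EE}}^\star$. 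It then remains to compute the logarithmic derivative $P_{\mathrm{SCF}}'/P_{\mathrm{SCF}}$ in closed form.

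\textbf{Step 2: logarithmic derivative of the propulsion power.} Set $c=V^2/(gR_{\rm H})$. Then $P_{\mathrm{SCF}}=P_{\mathrm{SHF}}(1+c^2\cot^2\gamma)$, and $P_{\mathrm{SHF}}$ does not depend on $\gamma$ because it is the straight-flight power at fixed speed. Differentiating gives
\begin{align*}
P_{\mathrm{SCF}}'(\gamma)=-2P_{\mathrm{SHF}}\,c^2\cot\gamma\csc^2\gamma .
\end{align*}
Hence
\begin{align*}
\frac{P_{\mathrm{SCF}}'}{P_{\mathrm{SCF}}}=-\frac{2c^2\cot\gamma\csc^2\gamma}{1+c^2\cot^2\gamma}.
\end{align*}
To reach the stated form, multiply numerator and denominator by $\tan^3\gamma$ and use $\cot\gamma\csc^2\gamma\tan^3\gamma=\sec^2\gamma$. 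This yields
\begin{align*}
\frac{P_{\mathrm{SCF}}'}{P_{\mathrm{SCF}}}=-\frac{2c^2\sec^2\gamma}{\tan^3\gamma+c^2\tan\gamma}.
\end{align*}
Multiplying numerator and denominator by $g^2R_{\rm H}^2$ turns $c^2$ into $V^4$, which gives exactly the bracketed factor in \eqref{eq:optimal_gamma_condition_simple}, including its leading minus sign. Substituting this into Step 1 completes the argument.

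\textbf{Main obstacle.} There is no real analytical obstacle, since the result is only a necessary first-order condition. The points that need care are the following.
\begin{itemize}
\item Justifying that $P_{{\rm c},k}$ is differentiable in $\gamma$. The lemma assumes this rather than proving it; otherwise it would require differentiating under the integrals of Theorems~\ref{thm:coverage_pcp} and~\ref{thm:coverage_bcp} through $a=R_{\rm H}\tan\gamma$.
\item Recognising that the condition is necessary but not sufficient. Boundary maximizers and non-uniqueness are excluded only by the assumption that the maximizer is interior.
\item Keeping track of the sign in the trigonometric algebra. Because the right-hand side is negative whenever $P_{{\rm c},k}>0$, the condition says that at the optimum the coverage probability must be decreasing in $\gamma$, exactly balancing the declining turn penalty.
\end{itemize}
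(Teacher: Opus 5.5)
Your proposal is correct and follows essentially the same route as the paper: first-order stationarity of the ratio $P_{{\rm c},k}/P_{\mathrm{SCF}}$, followed by an explicit computation of the logarithmic derivative of $P_{\mathrm{SCF}}$ from Lemma~\ref{lem:power}. The differences are cosmetic. You use the quotient rule, which, unlike the paper's $\ln\eta_{{\rm CEE},k}$ step, does not require $P_{{\rm c},k}>0$. You also parametrize with $c^2\cot^2\gamma$ rather than $A\tan^{-2}\gamma$ with $A=V^4/(g^2R_{\rm H}^2)$.
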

\begin{IEEEproof}
    See Appendix~\ref{app:lemma5}.
\end{IEEEproof}
\begin{remark}
The optimality condition in \eqref{eq:optimal_gamma_condition_simple} separates the communication and aerodynamic components of the design problem. The right-hand side depends only on the flight dynamics through $(V,g,R_{\rm H})$, whereas the left-hand side inherits the spatial statistics of the underlying Cox process through $P_{{\rm c},k}(\tau;\gamma)$. Consequently, the difference between \ac{SCR-PCP} and \ac{SCR-BCP} affects the optimal patrol radius only through the coverage term, while the propulsion penalty remains model-independent.
\end{remark}
For physical interpretation in the numerical results, we will equivalently express the optimum in terms of the patrol radius $a^\star = R_{\rm H} \tan(\gamma_{\mathrm{EE}}^\star)$.

\section{Numerical Results and Discussion}
\label{sec:numerical_results}
In this section, we validate the analytical results and investigate the impact of key geometric, deployment, and flight parameters on coverage probability and \ac{CEE}. Unless otherwise stated, all curves are obtained from the analytical expressions in Sections~\ref{sec:distance}--\ref{sec:energy} and are verified by independent 3D Monte Carlo simulations. The default parameters are $\eta=2$, $H=20$ km, $G_t=G_I=1$, $\lambda_{\rm u}=10^{-6}$, $\nu=0.01$, and target SIR threshold $\tau=-15$ dB for spatial evaluations.

We begin by validating the analytical coverage probability under \ac{SCR-PCP} and \ac{SCR-BCP}. Figure~\ref{fig:n_tau} shows the coverage probability as a function of the SIR threshold $\tau$ for patrol radii $a\in\{0.5,3\}$ km. In all cases, the analytical results closely match the simulation results, confirming the accuracy of the derived expressions. As expected, the coverage probability decreases monotonically with $\tau$. Under \ac{SCR-PCP}, increasing the patrol radius reduces coverage because larger rings increase both the serving distance and the effective interference. Under \ac{SCR-BCP}, the dependence on $a$ is weaker since the number of \acp{HAP} per ring remains fixed, and the patrol radius affects the interference field only through geometry.

\begin{figure*}[t]
\centering
\subfloat[]
{\includegraphics[width = 0.27\textwidth]{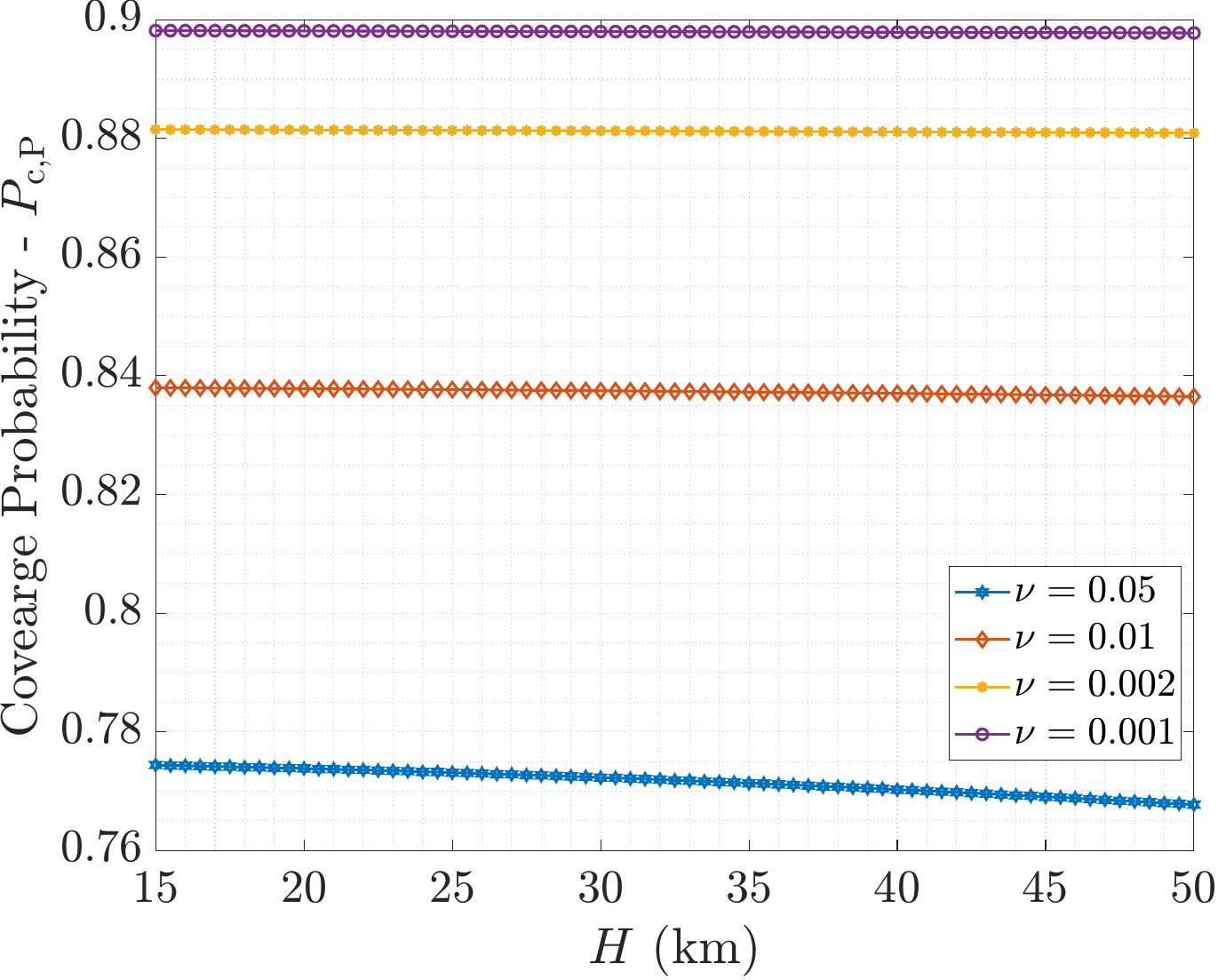}
\label{fig:result_3}}
\hfil
\subfloat[]
{\includegraphics[width = 0.27\textwidth]{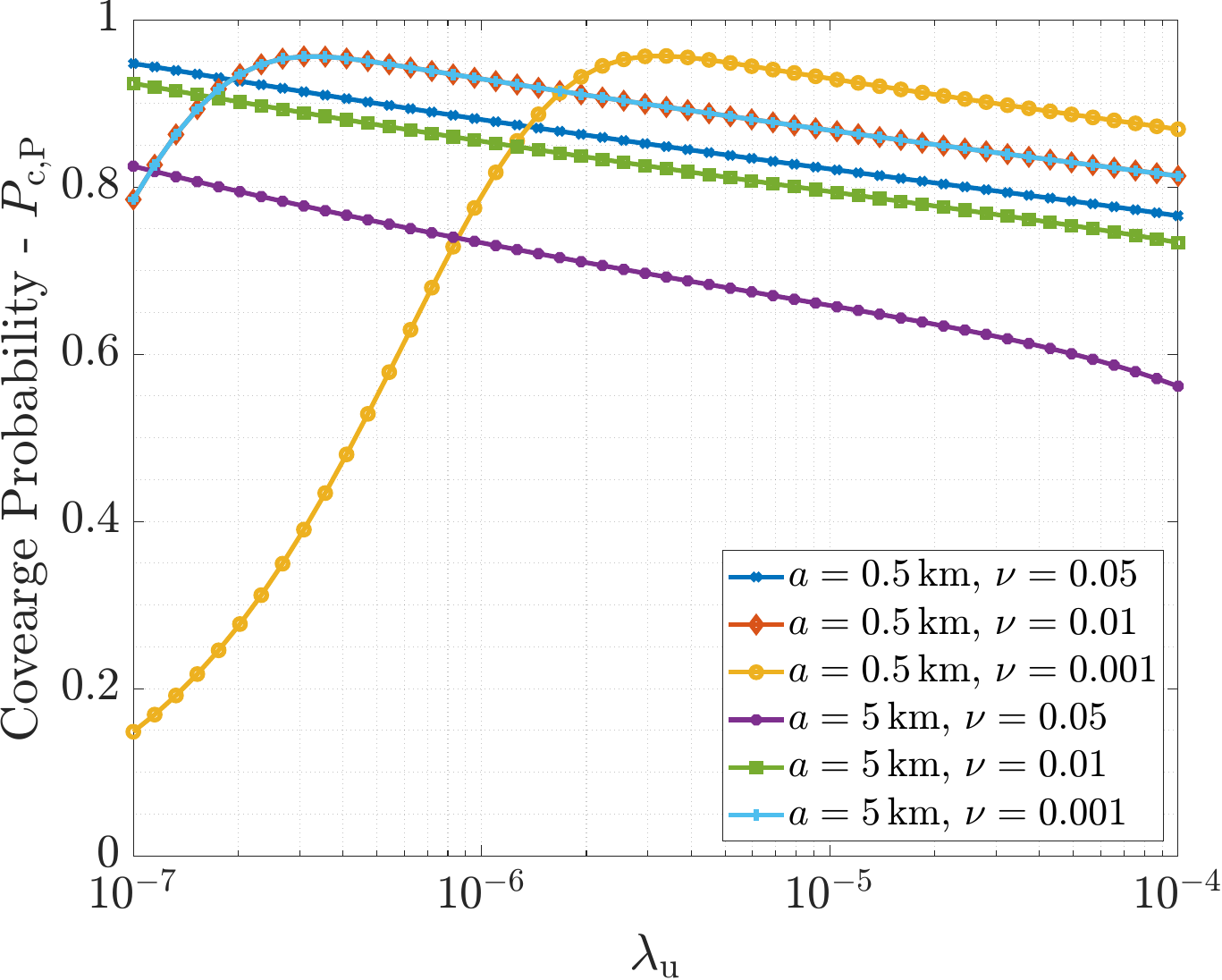}
\label{fig:result_4}}
\hfil
\subfloat[]
{\includegraphics[width = 0.27\textwidth]{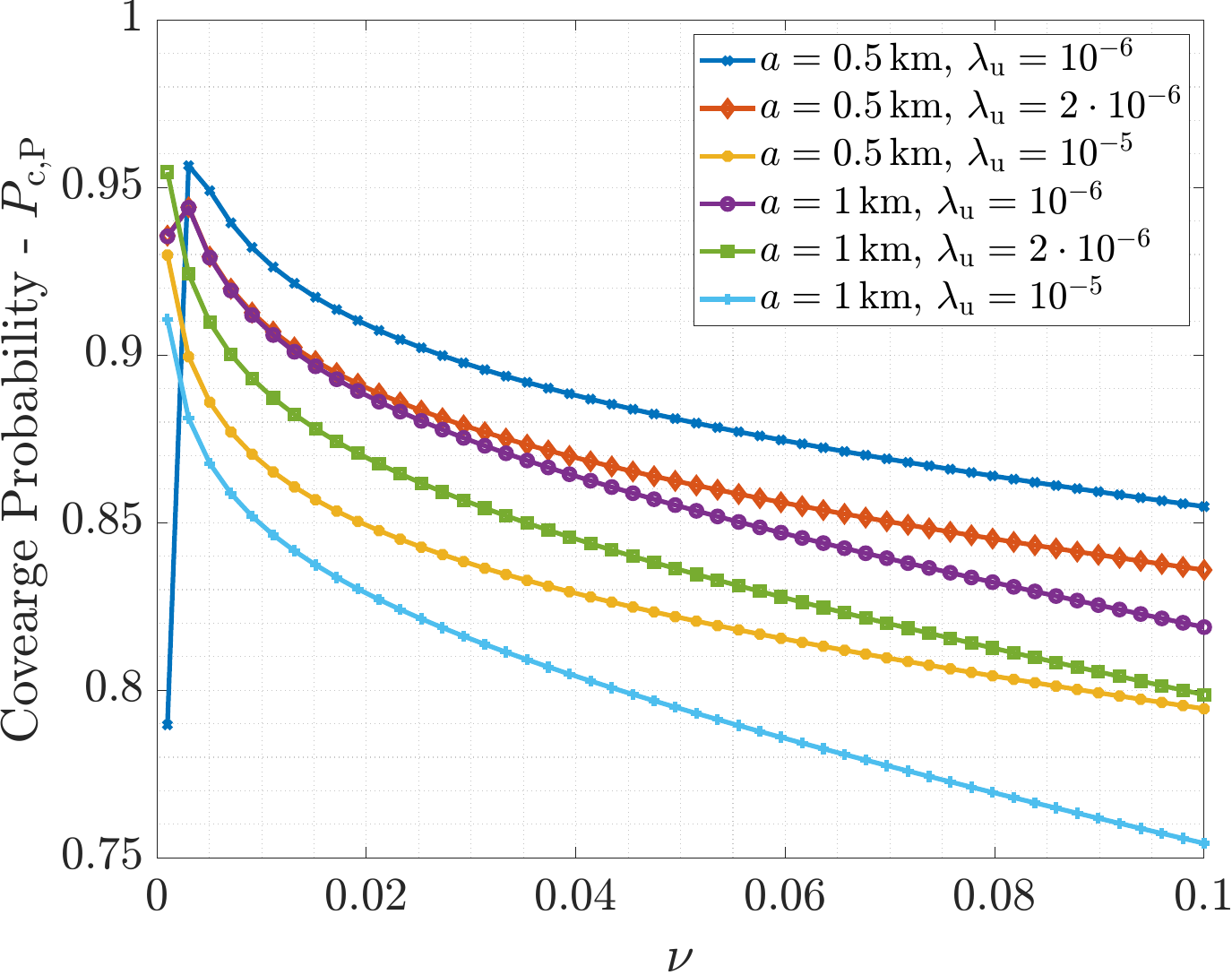}
\label{fig:result_5}}
\caption{\ac{SCR-PCP} coverage probability: (a) versus altitude $H$, (b) versus anchor intensity $\lambda_{\rm u}$, (c) versus intra-ring HAP intensity $\nu$.}
\label{fig:r_3_to_4}
\end{figure*}

\subsection{Coverage Probability under \ac{SCR-PCP}}
\subsubsection{Impact of patrol radius}
Figure~\ref{fig:r_1_2} shows the coverage probability as a function of the patrol radius under \ac{SCR-PCP}. Figure~\ref{fig:result_1} varies the intra-ring HAP density $\nu$ for fixed $\lambda_{\rm u}=10^{-6}$, while Fig.~\ref{fig:result_2} varies the anchor intensity $\lambda_{\rm u}$ for fixed $\nu=0.01$.

For fixed $\lambda_{\rm u}$, the shape of the coverage--radius curve depends strongly on $\nu$. At large HAP intensities, the coverage probability decreases monotonically with $a$. In this regime, the nearest patrol ring almost surely contains multiple nearby \acp{HAP}, so increasing the patrol radius mainly increases the serving distance and the interference footprint. For small $\nu$, however, the coverage becomes unimodal in $a$. Tight patrol rings frequently contain few or no nearby \acp{HAP}, leading to poor coverage at small radii. Increasing $a$ initially improves the probability of finding a useful serving HAP and increases the coverage probability. Beyond a certain point, larger serving distances and stronger interference dominate, resulting in reduced coverage. Consequently, sparse HAP deployments admit a non-zero optimal patrol radius.

A similar transition appears in Fig.~\ref{fig:result_2} when varying $\lambda_{\rm u}$. Sparse anchor deployments exhibit a unimodal dependence on $a$, whereas dense anchor networks become increasingly interference-limited and favor small patrol radii. Overall, increasing either $\nu$ or $\lambda_{\rm u}$ eventually pushes the optimal patrol radius toward $a\rightarrow0$.

\subsubsection{Impact of altitude, anchor intensity, and HAP intensity}
Figure~\ref{fig:r_3_to_4} summarizes the impact of the remaining deployment parameters under \ac{SCR-PCP}. Figure~\ref{fig:result_4} shows the coverage probability as a function of $\lambda_{\rm u}$ for different patrol radii and HAP densities. In sparse HAP regimes, larger $a$ improves coverage by increasing the likelihood that a patrol ring contains a useful serving HAP, whereas in dense HAP regimes, larger $a$ mainly increases the serving distance and interference exposure, reducing coverage. As $\lambda_{\rm u}$ increases, all curves gradually converge to the same interference-limited region.

Figure~\ref{fig:result_5} plots the coverage probability versus $\nu$. For sparse anchor deployments, the curves are unimodal: increasing $\nu$ initially increases the probability of nearby serving \acp{HAP}, but excessive $\nu$ eventually creates strong intra-ring and inter-ring interference. For dense anchors, the network is interference-limited, and coverage decreases monotonically with $\nu$. Figure~\ref{fig:result_3} further shows that the coverage probability is nearly invariant with $H$ over the considered stratospheric range.

\begin{figure*}[t]
\centering
\subfloat[]
{\includegraphics[width = 0.27\textwidth]{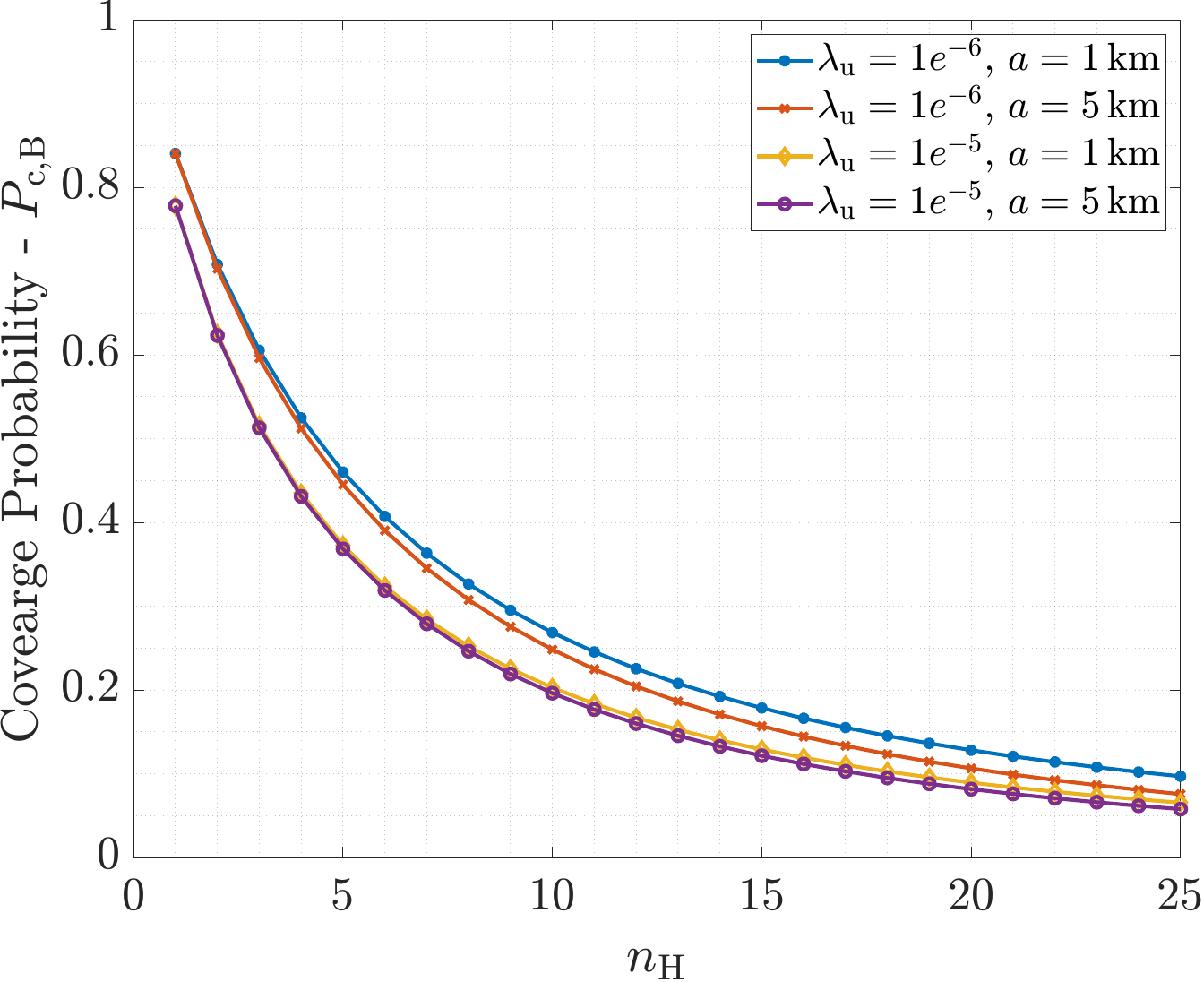}
\label{fig:cov_vs_nH}}
\hfil
\subfloat[]
{\includegraphics[width = 0.27\textwidth]{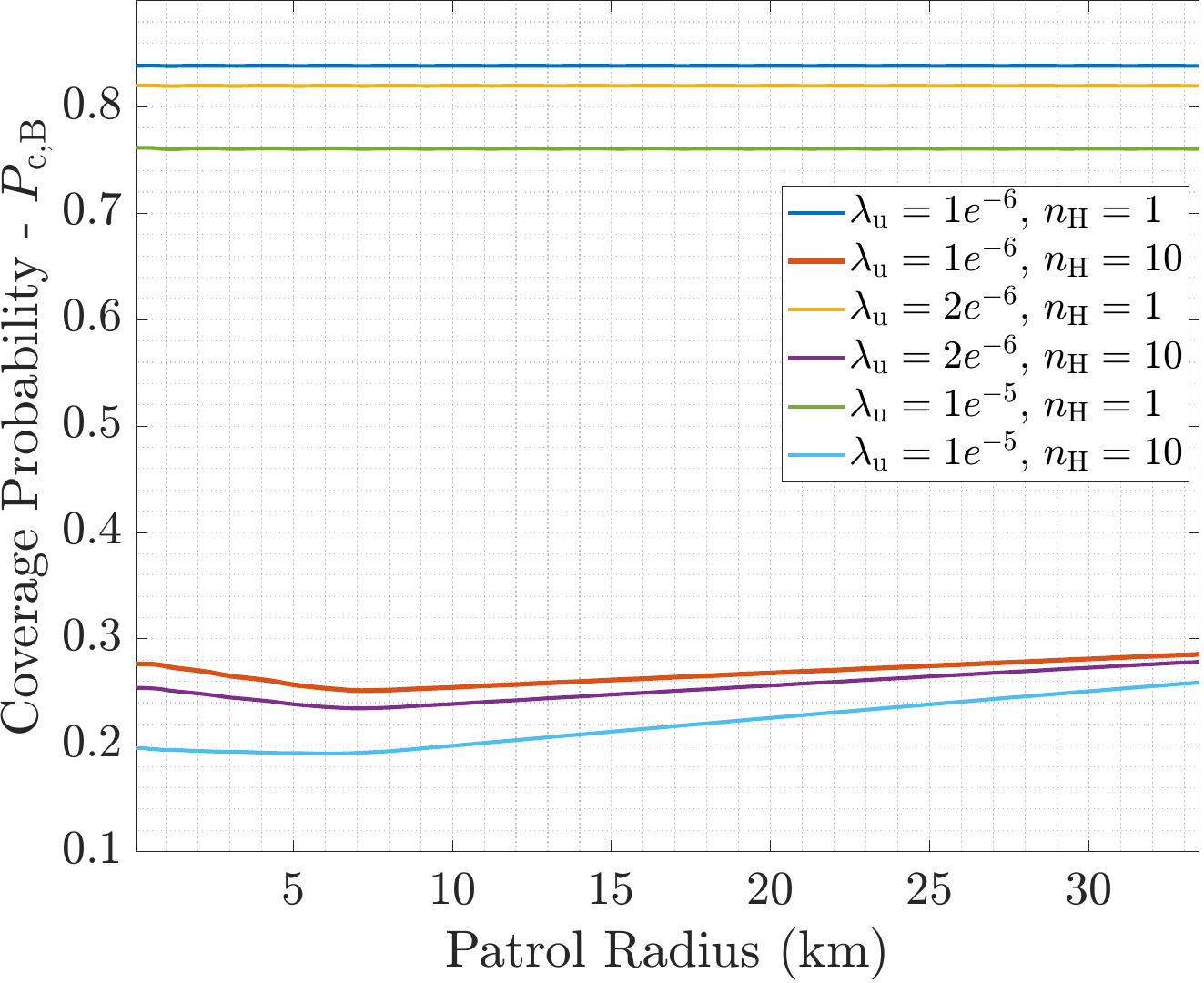}
\label{fig:cov_vs_radius_bcp}}
\hfil
\subfloat[]
{\includegraphics[width = 0.27\textwidth]{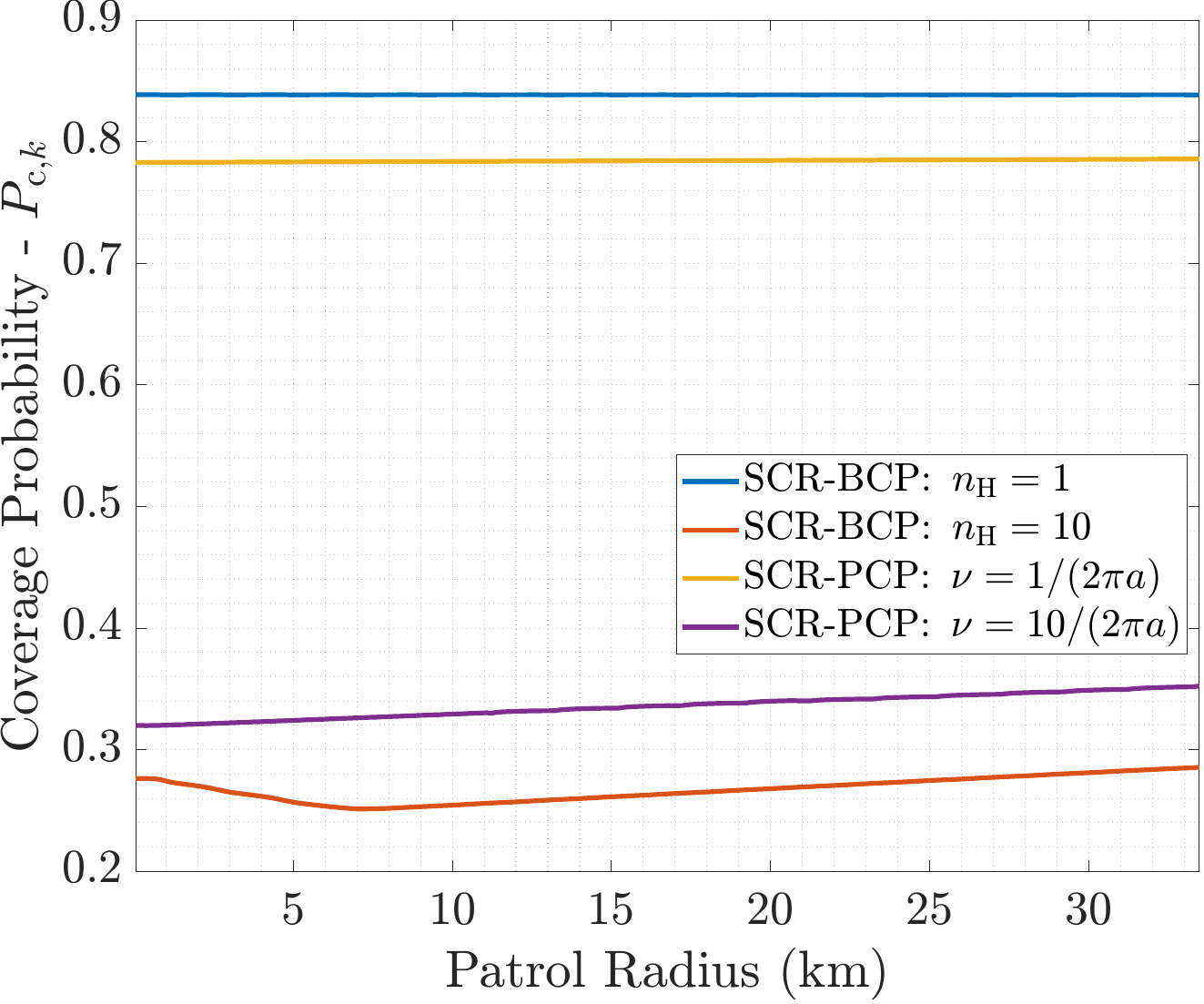}
\label{fig:apples_to_apples}}
\caption{\ac{SCR-BCP} coverage probability: (a) versus per-ring population $n_{\rm H}$, (b) versus patrol radius $a$, (c) Equivalent-density comparison of $P_{\rm c,P}$ and $P_{\rm c,B}$ versus patrol radius $a$ for matched average per-ring populations.}
\label{fig:r_b_1}
\end{figure*}

\subsection{Coverage Probability under \ac{SCR-BCP}}
Figure~\ref{fig:cov_vs_nH} shows the coverage probability versus $n_{\rm H}$ for different $(\lambda_{\rm u},a)$ pairs, while Fig.~\ref{fig:cov_vs_radius_bcp} shows the corresponding dependence on the patrol radius.

The coverage probability decreases monotonically with $n_{\rm H}$ across all configurations. For $\lambda_{\rm u}=10^{-6}$, increasing the per-ring population from $n_{\rm H}=1$ to $n_{\rm H}=10$ reduces the coverage probability from approximately $84\%$ to $26\%$. This behavior is driven by the strong same-ring interference inherent to the \ac{SCR-BCP} geometry, where all platforms are confined to a common patrol ring. Consequently, increasing $n_{\rm H}$ directly increases the number of nearby interferers and rapidly degrades the SIR. By contrast, the coverage probability is only weakly sensitive to the patrol radius. For $n_{\rm H}=1$, the curves are nearly invariant with respect to $a$ since there is no same-ring interference and the dominant interferers originate from distant rings. For larger per-ring populations, increasing $a$ slightly reduces same-ring interference by spreading the \acp{HAP} over a larger circumference, but this gain is largely offset by the increase in serving distance. As a result, the $P_{\rm c, B}$ remains relatively flat over the entire range of patrol radii.

These results indicate that the \ac{SCR-BCP} performance is governed primarily by the $n_{\rm H}$ rather than the patrol geometry. Unlike \ac{SCR-PCP}, where the expected number of intra-ring interferers grows with the ring circumference, the interference field in \ac{SCR-BCP} is structurally bounded by the fixed population $n_{\rm H}$, leading to a much weaker dependence on $a$.

\subsection{Comparison of \ac{SCR-PCP} and \ac{SCR-BCP}}
To compare the impact of random and fixed per-ring populations, we consider an equivalent-density setting in which the average number of \acp{HAP} per patrol ring is matched across the two models. In \ac{SCR-BCP}, $n_{\rm H}$ is fixed, whereas in \ac{SCR-PCP} the intra-ring intensity is chosen as $\nu=n_{\rm H}/(2\pi a)$ so that $\mathbb{E}[N_{\rm ring}]=\nu 2\pi a=n_{\rm H}$ for all patrol radii. Figure~\ref{fig:apples_to_apples} shows the resulting coverage probability as a function of $a$ for $\lambda_{\rm u}=10^{-6}$. For $n_{\rm H}=1$, \ac{SCR-BCP} achieves slightly higher coverage than \ac{SCR-PCP}, almost independently of $a$, due to the absence of same-ring interferers after association, whereas in \ac{SCR-PCP} a residual Poisson set of same-ring points remains.

For $n_{\rm H}=10$, \ac{SCR-PCP} can outperform \ac{SCR-BCP} over intermediate $a$, despite identical average per-ring populations. Here, the fixed-$n_{\rm H}$ structure of \ac{SCR-BCP} enforces exactly nine same-ring interferers, whereas \ac{SCR-PCP} yields a Poisson number of same-ring interferers with mean $10$. The resulting variance creates low-interference realizations that contribute disproportionately to the average coverage probability.

These results confirm that, beyond average densities, the choice between a Poisson and a finite-population kernel directly affects performance and should reflect whether the fleet is dynamically loaded or tightly dimensioned.

\begin{figure*}[t]
\centering
\subfloat[]
{\includegraphics[width = 0.23\textwidth]{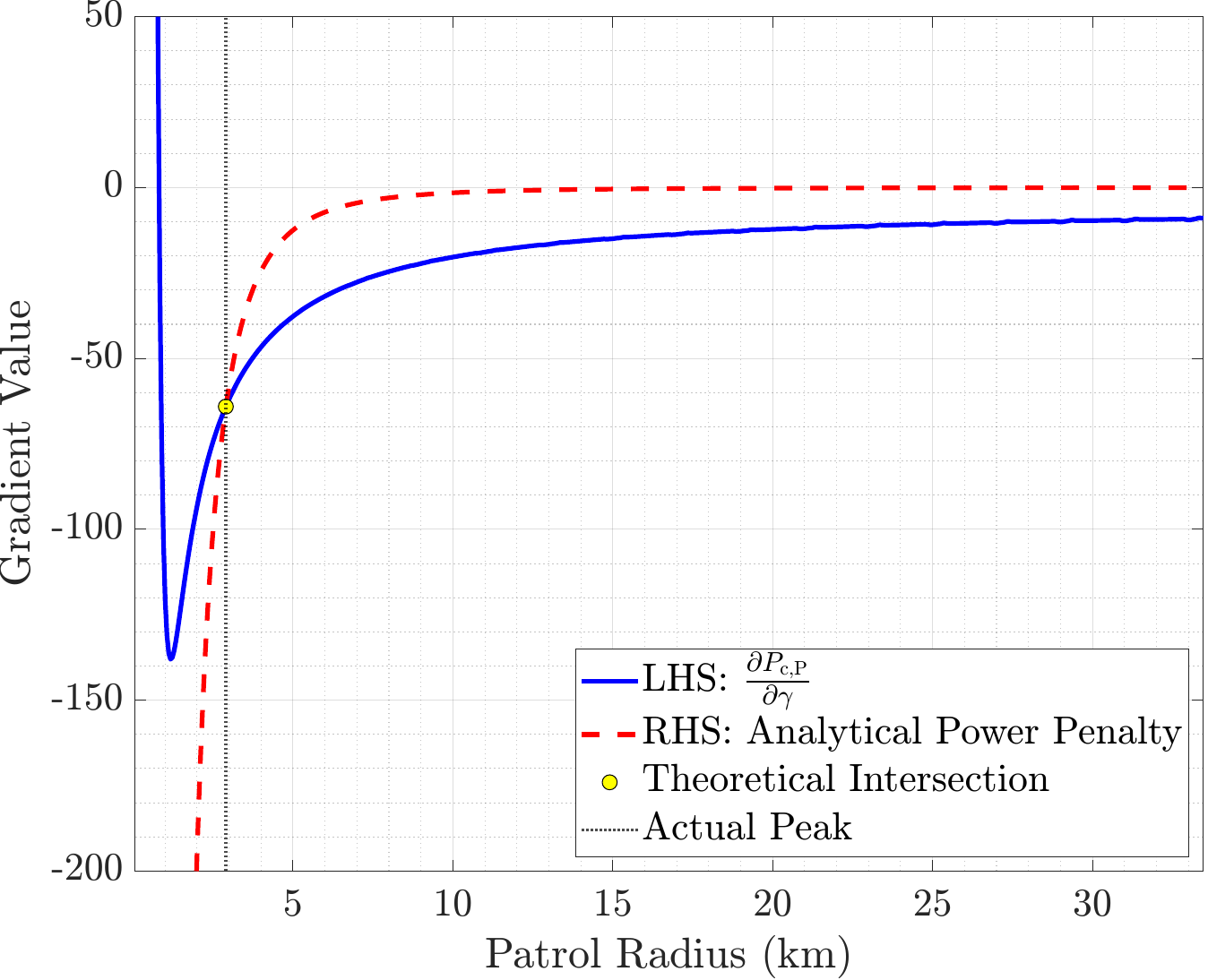}
\label{fig:result_6}}
\hfil
\subfloat[]
{\includegraphics[width = 0.23\textwidth]{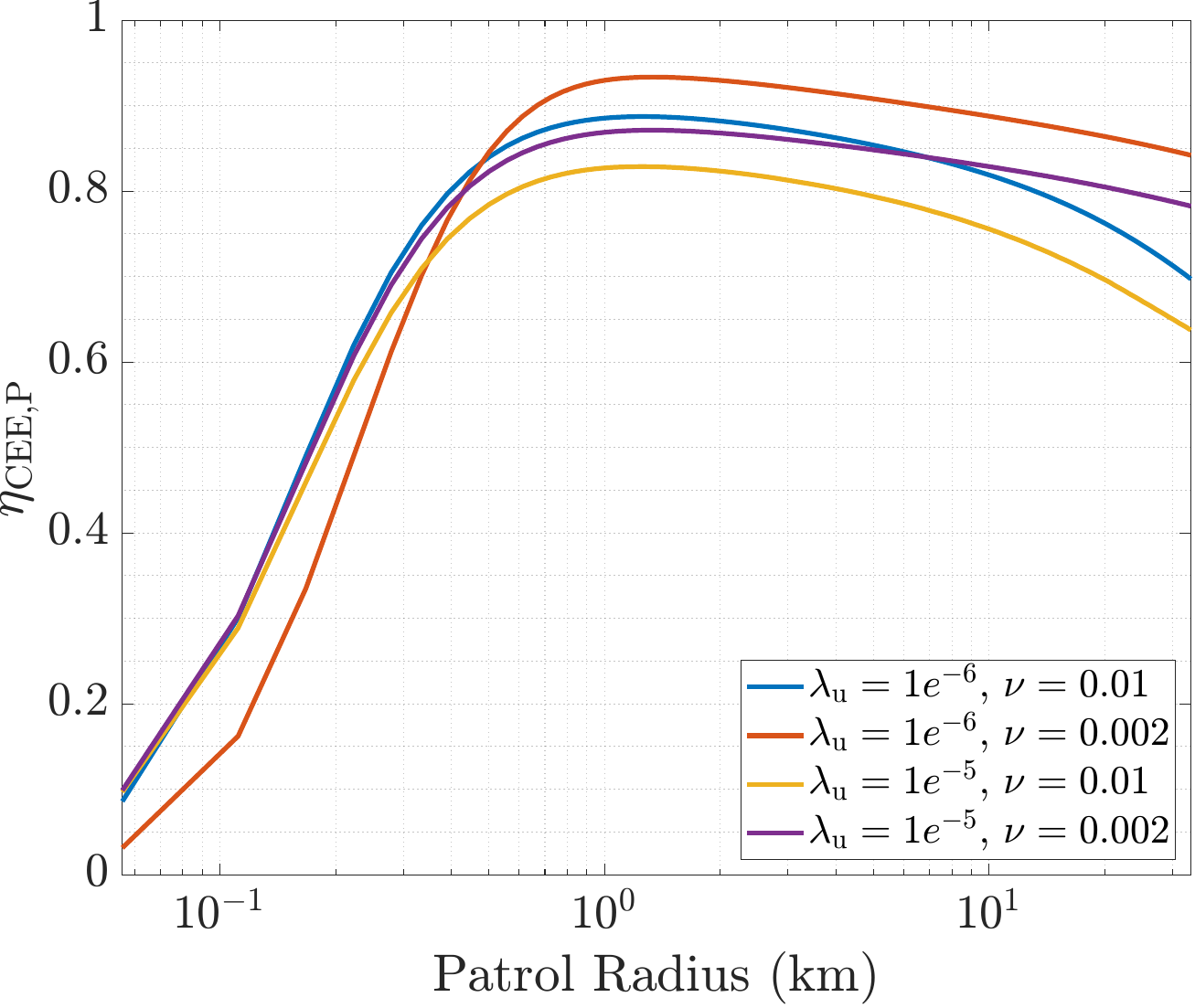}
\label{fig:result_7}}
\hfil
\subfloat[]
{\includegraphics[width = 0.23\textwidth]{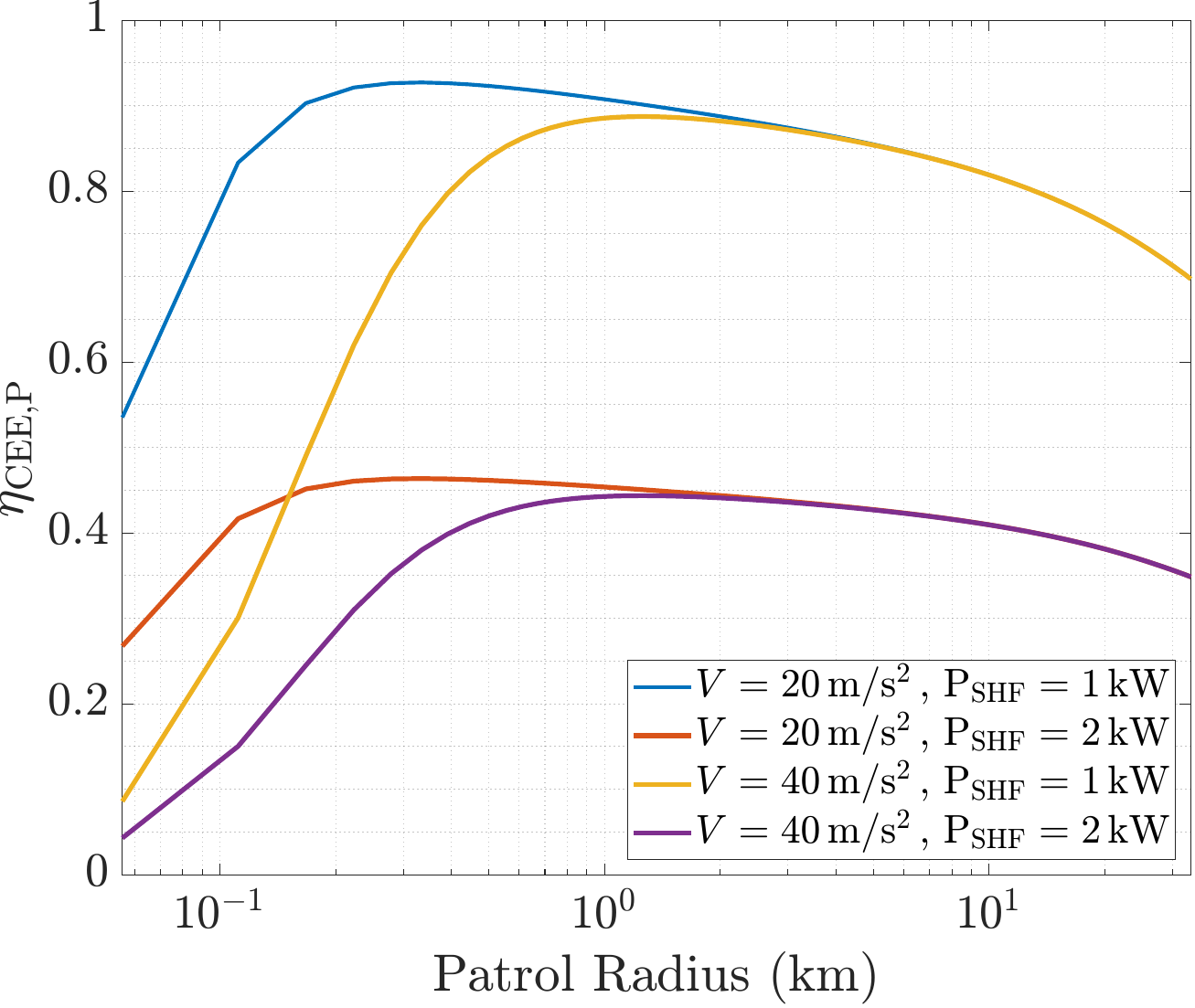}
\label{fig:result_8}}
\hfil
\subfloat[]
{\includegraphics[width = 0.23\textwidth]{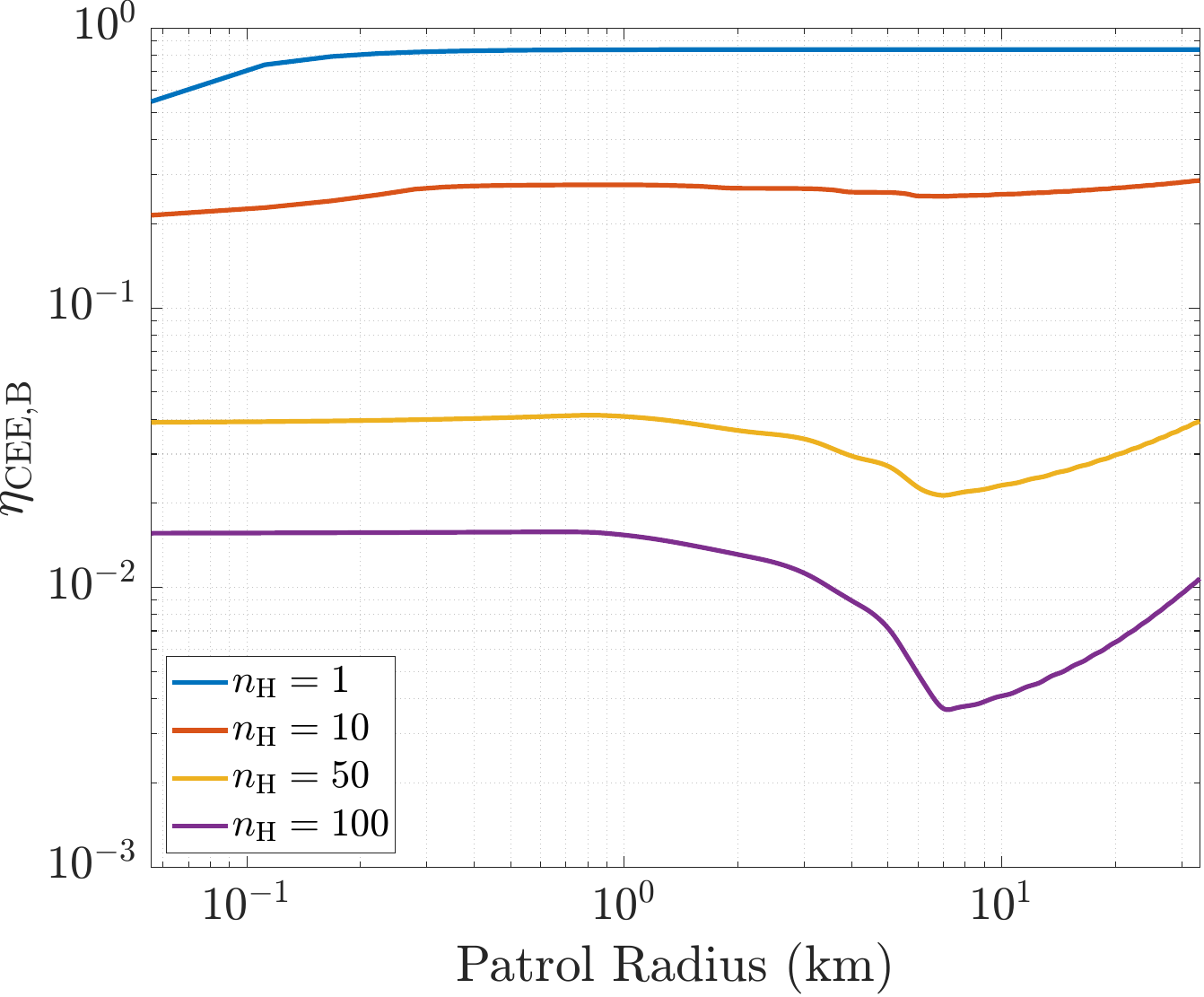}
\label{fig:cee_nH_variation}}
\hfil
\subfloat[]
{\includegraphics[width = 0.23\textwidth]{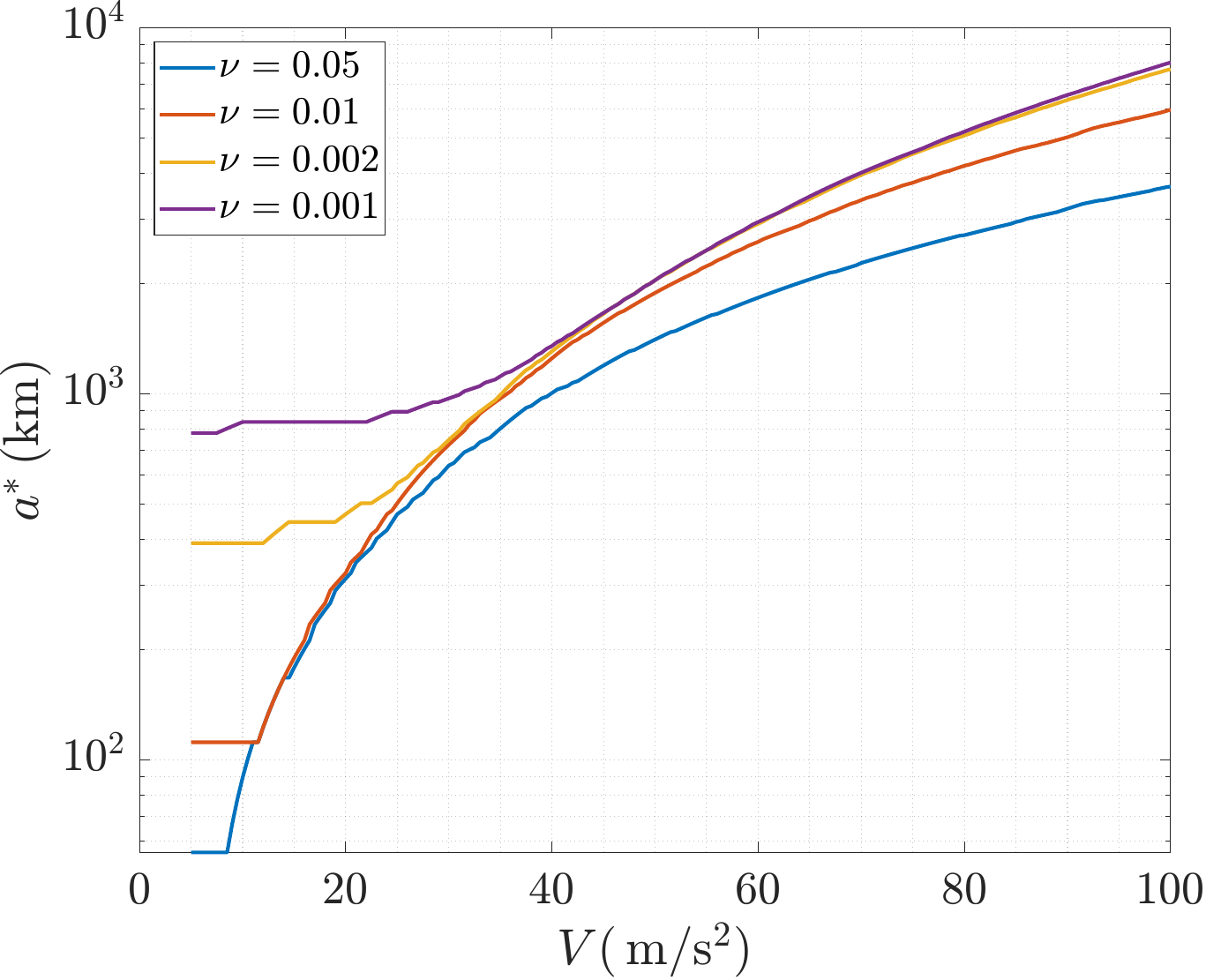}
\label{fig:result_9}}
\hfil
\subfloat[]
{\includegraphics[width = 0.23\textwidth]{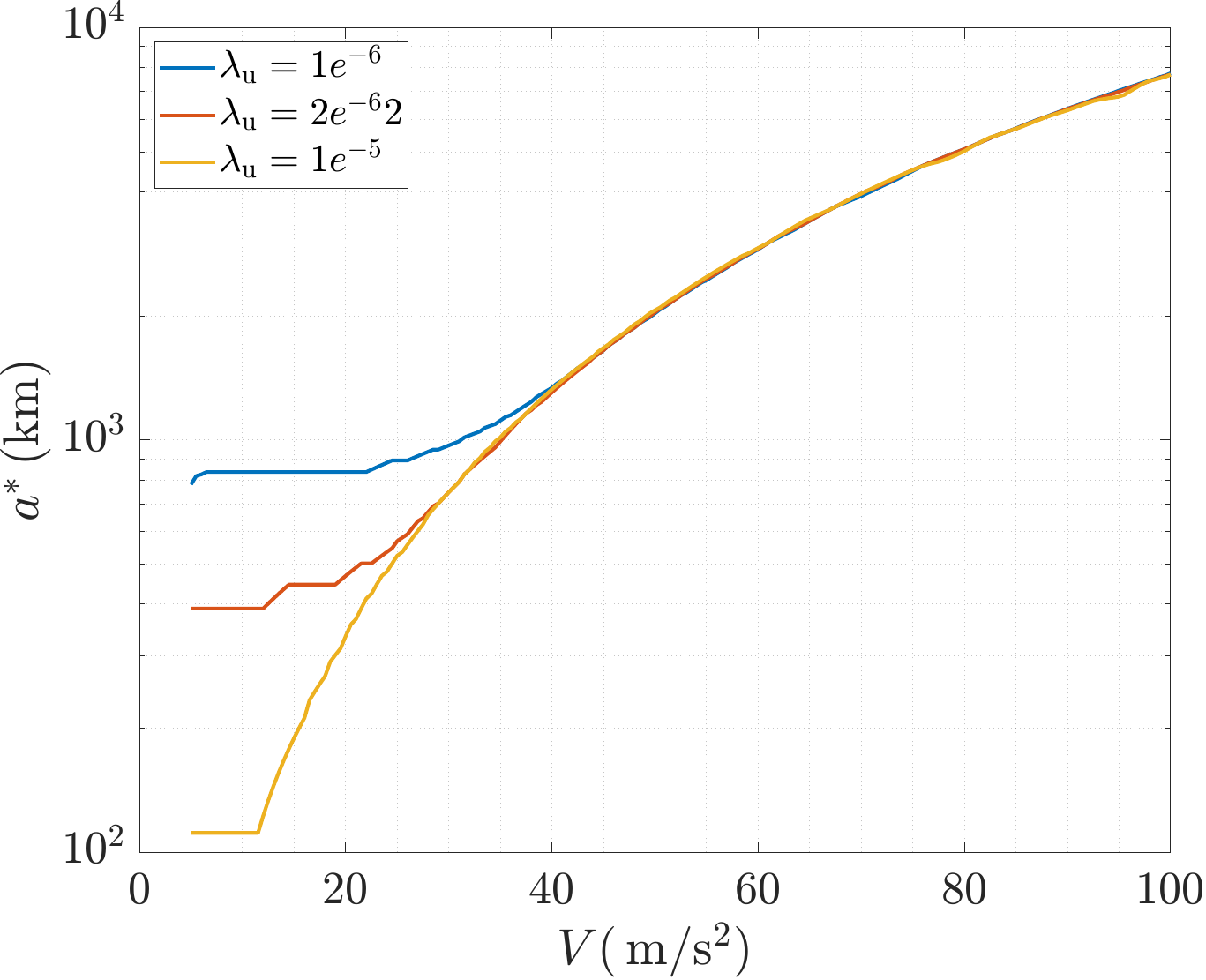}
\label{fig:result_10}}
\hfil
\subfloat[]
{\includegraphics[width = 0.23\textwidth]{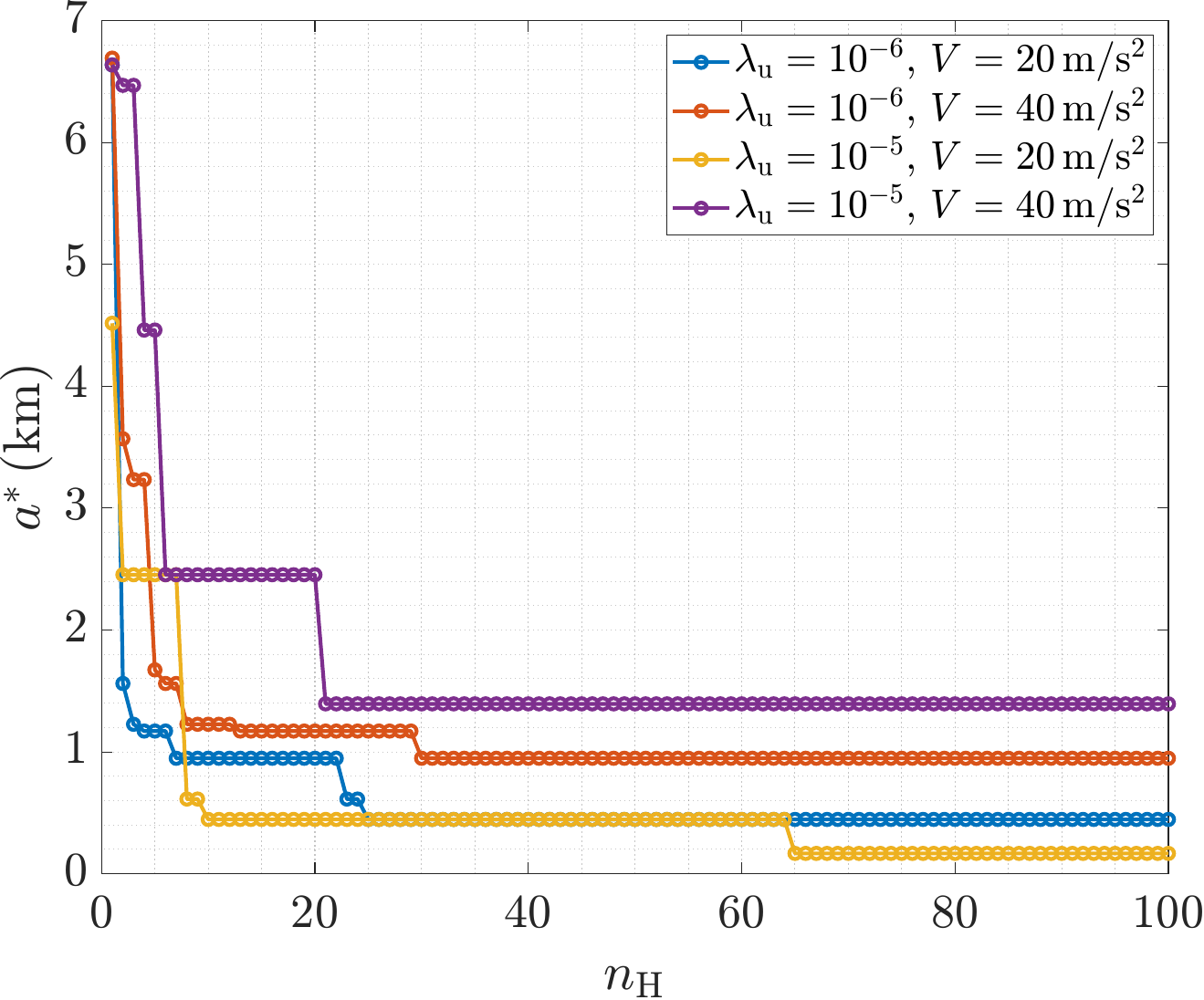}
\label{fig:opt_radius}}
\hfil
\subfloat[]
{\includegraphics[width = 0.23\textwidth]{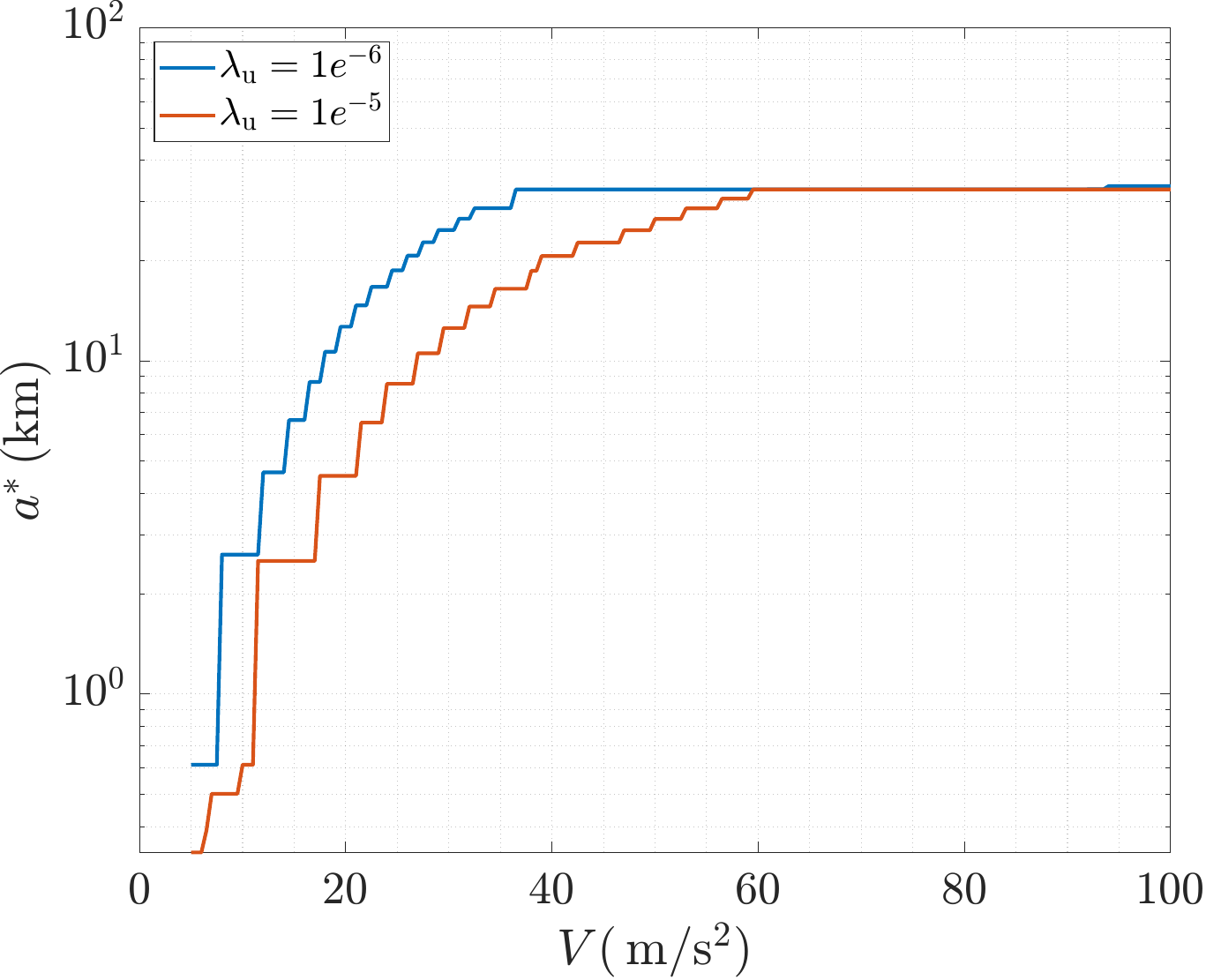}
\label{fig:opt_a_vs_velocity}}
\caption{(a) Validation of Lemma~\ref{lem:energy_opt}. (b)-(c) \ac{SCR-PCP} \ac{CEE} versus patrol radius. (d) \ac{SCR-BCP} \ac{CEE} versus patrol radius. (e)-(f) \ac{SCR-PCP} optimal patrol radius versus velocity. (g) \ac{SCR-BCP} optimal patrol radius versus $n_{\rm H}$. (h) \ac{SCR-BCP} optimal patrol radius versus velocity.}
\label{fig:r_6_to_8}
\end{figure*}

\subsection{Coverage Energy Efficiency under \ac{SCR-PCP}}
\label{subsec:numerical_cee}
We now evaluate the \ac{CEE} under \ac{SCR-PCP} by incorporating the aerodynamic flight model from Section~\ref{sec:energy}. From~\eqref{eq:cee_def}, $\eta_{\mathrm{CEE},k}(\gamma,V,P_{\mathrm{SHF}})\propto\frac{P_{\mathrm{c},k}(\gamma)}{P_{\mathrm{SHF}}\bigl(1+\kappa(\gamma)V^4\bigr)},$ so the \ac{CEE} decreases with both $P_{\mathrm{SHF}}$ and $V$, but only $V$ affects the optimizing patrol radius. Figure~\ref{fig:result_6} validates Lemma~\ref{lem:energy_opt} for the \ac{SCR-PCP} model. The intersection of the marginal coverage-gain and normalized propulsion-penalty curves coincides with the maximizing point of the numerically evaluated \ac{CEE}. For small patrol radii, the propulsion penalty dominates because tight circular trajectories require large bank angles and load factors. Consequently, the optimal patrol radius emerges from the balance between coverage improvement and aerodynamic cost.

\subsubsection{Impact of node intensities and HAP velocity}
Fig.~\ref{fig:result_7} shows the \ac{SCR-PCP} \ac{CEE} as a function of the patrol radius for different $(\lambda_{\rm u},\nu)$. In all cases, $\eta_{\rm CEE, P}$ is unimodal in $a$, combining the coverage–radius behavior described in Section~\ref{sec:numerical_results} with the propulsion penalty of small radii. Larger $\lambda_{\rm u}$ lowers the peak \ac{CEE}, and smaller $\nu$ shifts the maximizing $a^\star$ to larger values, consistent with the corresponding coverage trends.To isolate the aerodynamic contribution, Fig.~\ref{fig:result_8} plots the \ac{CEE} for $(P_{\mathrm{SHF}},V)\in\{(1,20),(2,20),(1,40),(2,40)\}$. The results confirm that $P_{\mathrm{SHF}}$ only rescales the \ac{CEE} curves and does not affect the maximizing patrol radius $a^\star$. By contrast, increasing $V$ substantially increases the propulsion cost of tight circular flight, thereby reducing the peak \ac{CEE}$,$ and shifting the optimal patrol radius toward larger values.

\subsubsection{Velocity-optimal patrol radius under \ac{SCR-PCP}}
Fig.~\ref{fig:result_9} and Fig.~\ref{fig:result_10} show the \ac{SCR-PCP} energy-optimal patrol radius $a^\star$ as a function of the cruising velocity $V$. In all cases, $a^\star$ increases monotonically with $V$, since higher flight speeds make tight circular trajectories increasingly energy-expensive. Fig.~\ref{fig:result_9} isolates the effect of $\nu$ for a fixed anchor intensity. Sparse HAP deployments require larger patrol radii across the entire velocity range to improve spatial diversity. As $\nu$ increases, this requirement weakens and the optimal radius decreases. Fig.~\ref{fig:result_10} shows the complementary effect of $\lambda_{\rm u}$ for a fixed sparse HAP intensity. Larger $\lambda_{\rm u}$ yields smaller $a^\star$, while smaller $\nu$ yields larger $a^\star$, mirroring the dependence of the coverage optimum on deployment densities.

Overall, the results show that the energy-optimal patrol geometry under \ac{SCR-PCP} should be jointly designed with the deployment density and the cruising velocity.

\subsection{Coverage Energy Efficiency under \ac{SCR-BCP}}
Figure~\ref{fig:cee_nH_variation} plots the \ac{CEE} under \ac{SCR-BCP}, $\eta_{\mathrm{CEE}}^{\rm B}$, versus patrol radius $a$ for $n_{\rm H}\in\{1,10,50,100\}$. The results illustrate the interplay between propulsion power and same-ring interference in finite-fleet deployments.

For the single-platform case, there is no same-ring interference, and the coverage probability is only weakly affected by $a$. Consequently, the \ac{CEE} increases monotonically with the patrol radius and gradually saturates as the propulsion power approaches its steady-flight minimum. For larger fleet sizes i.e. $n_{\rm H}\geq 10$, the behavior becomes weakly non-monotonic. At small radii, tight loitering trajectories incur a severe propulsion penalty due to the large load factor required for circular flight. Increasing $a$ initially improves the \ac{CEE} by reducing this aerodynamic cost. However, for moderate radii, the combined effect of larger serving distances and persistent same-ring interference causes a degradation in the \ac{CEE}. At sufficiently large radii, the spatial dispersion of the interferers partially offsets this effect, leading to a mild recovery. As $n_{\rm H}$ increases, the overall \ac{CEE} level decreases, consistent with the strong same-ring interference observed in the coverage results.

\subsubsection{Energy-optimal patrol design under \ac{SCR-BCP}}
Figure~\ref{fig:opt_radius} plots the \ac{CEE}-optimal patrol radius $a^\star$ versus $n_{\rm H}\in[1,100]$ for four configurations of $(\lambda_{\rm u},V)$. The optimal radius is obtained by maximizing $\eta_{\rm CEE,B}$ over the interval $a\in[0.05,7]$ km, which corresponds to the practically relevant loitering regime for fixed-wing \acp{HAP}. As observed in Fig.~\ref{fig:cee_nH_variation}, the \ac{CEE} under \ac{SCR-BCP} exhibits a local optimum within this range, reflecting the trade-off between propulsion power, serving distance, and same-ring interference. Although the \ac{CEE} exhibits a secondary increase at large patrol radii due to the reduced spatial concentration of same-ring interferers, such regimes correspond to increasingly diffuse patrol geometries and fall outside the relevant operating range. The optimization is therefore restricted to the regime where the interaction between coverage and propulsion power is most pronounced.

For small fleet sizes, the optimization is dominated by the propulsion term, and the optimal radius lies close to the upper bound of the admissible interval. In this regime, widening the patrol trajectory reduces the load factor while having little impact on coverage. As $n_{\rm H}$ increases, the rapid growth of same-ring interference shifts the optimum toward smaller patrol radii. Beyond moderate fleet sizes, $a^\star$ stabilizes at a compact, density-dependent value that balances serving-link quality against the propulsion cost of tight circular flight. Both the anchor density and cruising velocity further influence the optimal geometry. Higher anchor densities reduce $a^\star$ because the stronger inter-ring interference favors shorter serving distances. Similarly, larger cruising velocities increase the propulsion cost of wide turns and therefore shift the optimum toward smaller radii. These results demonstrate that the energy-efficient patrol geometry must be jointly designed with the fleet size, deployment density, and flight dynamics.

\subsubsection{Velocity-optimal patrol design under \ac{SCR-BCP}}
Figure~\ref{fig:opt_a_vs_velocity} shows the \ac{CEE}-optimal patrol radius $a^\star$ as a function of the cruising velocity $V\in[5,100]$ for $n_{\rm H}=1$ and $\lambda_{\rm u}\in\{10^{-6},10^{-5}\}$. For both deployment densities, the optimal radius increases with the cruising velocity before gradually saturating. This trend follows from the centripetal acceleration required for steady circular flight: higher velocities significantly increase the propulsion cost of tight loitering trajectories, making wider patrol radii energetically preferable. The anchor density shifts this trade-off in the expected direction. For $\lambda_{\rm u}=10^{-5}$, the network operates in a more interference-limited regime, so increasing the patrol radius and hence the serving distance becomes less desirable. Consequently, the dense deployment consistently selects smaller optimal radii than the sparse case $\lambda_{\rm u}=10^{-6}$.

\section{Conclusion}
\label{sec:conclusion}
This paper develops an \ac{SG} framework for patrol-based \ac{HAP} networks on the spherical Earth. To capture the anchor-driven and cyclic nature of realistic \ac{HAP} deployments, we introduced two small-circle ring Cox process models: the \ac{SCR-PCP}, which represents intensity-driven deployments through a Poisson intra-ring structure, and the \ac{SCR-BCP}, which captures dimensioned finite-fleet deployments through a fixed per-ring population. Building on these models, we developed spatial statistics, derived expressions for coverage probability under nearest-\ac{HAP} association, and incorporated fixed-wing flight mechanics to formulate an energy-aware performance framework.

The analysis reveals that patrol geometry plays a fundamentally different role under the two deployment regimes. In \ac{SCR-PCP}, the patrol radius directly controls both serving-link accessibility and interference generation, leading to non-trivial coverage and \ac{CEE} optima. In contrast, \ac{SCR-BCP} is primarily governed by the finite per-ring population, making network performance comparatively insensitive to the patrol radius once the fleet size is fixed. By coupling communication performance with steady-circular-flight propulsion costs, we further showed that the patrol radius should be jointly designed with platform density, fleet dimensioning, and cruising velocity to achieve energy-efficient operation.

A promising future direction is the integration of patrol-based \ac{HAP} models into multi-layer terrestrial--aerial--satellite architectures, enabling joint optimization of mobility, coverage, and spectrum coexistence in future SAGIN deployments.

\bibliography{references}
\bibliographystyle{ieeetr}

\appendices
\section{Proof of Theorem~\ref{thm:isotropy}}
\label{app:theorem1}
\begin{IEEEproof}
We prove the invariance sequentially through the three tiers of the hierarchical Cox construction.

\emph{1) Anchor Process:} We first establish the isotropy of the anchor \ac{PPP} $\Phi_u$. Let $U$ be a random point uniformly distributed on the unit sphere $\mathcal{S}_1$. It is well known that $U$ can be represented in the orthonormal basis $(e_x,e_y,e_z)$ as $U = \big(\sqrt{1-V^2}\cos\Theta, \sqrt{1-V^2}\sin\Theta, V\big)$, where $V \sim \mathrm{Uniform}(-1,1)$ and $\Theta \sim \mathrm{Uniform}(0,2\pi)$ are independent. In spherical coordinates $U = (\sin\Phi\cos\Theta, \sin\Phi\sin\Theta, \cos\Phi)$, the polar angle $\Phi = \arccos(V)$ has the \ac{CDF}
\begin{align*}
    \mathbb{P}(\Phi < x) = \mathbb{P}(V > \cos x) = \frac{1 - \cos x}{2}, \quad 0 \le x < \pi.
\end{align*}
which yields density $f_\Phi(x) = (\sin x)/2$. Together with the uniformity of $\Theta$, this implies that the intensity measure of a homogeneous \ac{PPP} on $\mathcal{S}_{R_{\rm H}}$ is proportional to the surface element $R_{\rm H}^2\sin\phi\,\mathrm{d}\phi\,\mathrm{d}\theta$, then for any rotation $Q$ of $\mathbb{R}^3$ about the origin, $Q(\Phi_u) \stackrel{d}{=} \Phi_u$. Hence $\Phi_u$ is isotropic.

\emph{2) Patrol-Ring Process:} Consider a specific anchor $u \in \Phi_u$. Since $Q$ preserves inner products and norms, the image of the tangent plane at an anchor $u$ is exactly the tangent plane at the rotated anchor, $Q(\Pi(u)) = \Pi(Q(u))$, and the image of the outward normal is $Q(n_u) = n_{Q(u)}$. Similarly, the images of the basis vectors $(e_1(u),e_2(u))$ form an orthonormal basis of $\Pi(Q(u))$. Therefore
\begin{align*}
    &Q(\mathcal{C}(u,\gamma)) \\
    &\hspace*{0.2cm} =\left\{ Q(u) + a(\cos t\, Q(e_1(u)) + \sin t\, Q(e_2(u))) : t \in [0,2\pi) \right\} \\
    &\hspace*{0.2cm} = \mathcal{C}(Q(u),\gamma),
\end{align*}
i.e., each patrol ring is mapped to the patrol ring associated with the rotated anchor $Q(u)$, with the same radius $a$ and parametrization up to a rotation. Applying $Q$ to the entire collection of rings yields
\begin{align*}
    Q(\mathcal{L}) = \big\{\mathcal{C}(Q(u),\gamma) : u \in \Phi_u\big\},
\end{align*}
which has the same distribution as $\mathcal{L}$ because $\Phi_u$ is isotropic and the mapping $u \mapsto \mathcal{C}(u,\gamma)$ is deterministic.

\emph{3) Intra-Ring \ac{HAP} Process:} Next, consider the intra-ring point processes. In the \ac{SCR-PCP} model, conditioned on $\Phi_u$, the active \ac{HAP} on ring $\mathcal{C}(u,\gamma)$ form a one-dimensional homogeneous \ac{PPP} on the circle with linear intensity $\nu$ (per unit arc length). A homogeneous \ac{PPP} on a circle is invariant in distribution under rotations of the circle, i.e., if $R_\theta$ denotes a rotation of the arc parameter by an angle $\theta$, then $R_\theta \mathcal{P}_u \stackrel{d}{=} \mathcal{P}_u$ for all $\theta$.Therefore, when the ambient space is rotated by $Q$, the induced \ac{PPP} on $Q(\mathcal{C}(u,\gamma))=\mathcal{C}(Q(u),\gamma)$ has exactly the same law as the \ac{PPP} on $\mathcal{C}(Q(u),\gamma)$ constructed directly from anchor $Q(u)$.

In the \ac{SCR-BCP} model, conditioned on anchor $u$, the associated \ac{BPP} $\mathcal{B}_u$ consists of exactly $n_{\rm H}$ i.i.d.\ uniform points on $\mathcal{C}(u,\gamma)$. A finite set of i.i.d.\ uniform points on a circle is also rotation invariant: for any rotation of the circle, the image of the configuration is again a set of $n_{\rm H}$ i.i.d.\ uniform points on the circle. Hence, after applying $Q$ to the ambient space, the point set on $\mathcal{C}(Q(u),\gamma)$ obtained from $Q(\mathcal{B}_u)$ has the same distribution as the binomial configuration $\mathcal{B}_{Q(u)}$ generated directly at $Q(u)$.

Since each rotation $Q$ acts independently and consistently on every anchor and its associated ring, the joint distribution of the collection of intra-ring processes is preserved under $Q$. In particular, for $k\in\{{\rm P,B}\}$,
\begin{align*}
    Q(\Phi_{\rm H}^{k}) &= Q\Big(\bigcup_{u\in\Phi_u} \Psi^{k}_{u}\Big) \\
    &= \bigcup_{u\in\Phi_u} Q(\Psi^{k}_{u}) \stackrel{d}{=} \bigcup_{u\in\Phi_u} \Psi^{k}_{Q(u)} = \bigcup_{v\in Q(\Phi_u)} \Psi^{k}_{v},
\end{align*}
where $\Psi^{\rm P}_u=\mathcal{P}_u$ and $\Psi^{\rm B}_u=\mathcal{B}_u$ denote the intra-ring processes. Since $Q(\Phi_u) \stackrel{d}{=} \Phi_u$, the right-hand side has the same distribution as $\Phi_{\rm H}^{k}$, and the claim follows.
\end{IEEEproof}

\section{Proof of Theorem~\ref{thm:nearest_haps_pcp}}
\label{app:theorem2}
\begin{IEEEproof}
Fix an anchor $u \in \Phi_u$ at central angle $\phi$. The in-plane distance from the anchor center to the user's projection on the tangent plane $\Pi(u)$ is $\rho(\phi) = R_\oplus \sin\phi$. By the law of cosines in $\Pi(u)$, the squared distance from this projection to a \ac{HAP} at angular position $\alpha$ on the ring is $y(\alpha)^2 = \rho(\phi)^2 + a^2 - 2a\rho(\phi)\cos\alpha.$ Adding the squared out-of-plane distance $d_{\rm plane}(\phi)^2$ gives the 3D distance squared:
\begin{align*}
    d(\phi,\alpha)^2 
    &= d_{\rm plane}(\phi)^2 + y(\alpha)^2 \\
    &= D_{\rm r}(\phi)^2 + 2a\rho(\phi) - 2a\rho(\phi)\cos\alpha \\
    &= D_{\rm r}(\phi)^2 + 2a R_\oplus \sin\phi\, (1 - \cos\alpha).
\end{align*}
The condition $d(\phi,\alpha) \le d$ becomes
\begin{align*}
    1 - \cos\alpha \le \frac{d^2 - D_{\rm r}(\phi)^2}{2a R_\oplus \sin\phi} = \Psi(\phi,d).
\end{align*}
If $\Psi(\phi,d) < 0$, the ball of radius $d$ does not intersect the ring and $L(\phi,d) = 0$. If $0 \le \Psi(\phi,d) < 2$, the intersection is a symmetric arc with angular span $2\alpha_{\max}$, where $\alpha_{\max}(\phi,d) = \arccos\left(1 - \Psi(\phi,d)\right),$ so $L(\phi,d) = 2a\, \alpha_{\max}(\phi,d)$. If $\Psi(\phi,d) \ge 2$, the ball fully engulfs the ring and $L(\phi,d) = 2\pi a$.

Conditioned on the anchor at $u$, the \acp{HAP} on $\mathcal{C}(u,\gamma)$ form a one-dimensional \ac{PPP} of intensity $\nu$ along the ring. Thus, the conditional void probability for this ring is $p_{{\rm void},{\rm P}}(\phi,d) = \exp\left(-\nu L(\phi,d)\right).$ Since $\Phi_{\rm H}^{\rm P}$ is a Cox process driven by the anchor \ac{PPP}, the event $\{d_{0,{\rm P}} > d\}$ is the event that every ring has no \ac{HAP} within distance $d$, which conditioned on $\Phi_u$ has probability $\prod_{u_i \in \Phi_u} p_{{\rm void},{\rm P}}(\phi_i,d)$. Deconditioning over $\Phi_u$ with the PGFL of the homogeneous \ac{PPP} on $\mathcal{S}_{R_{\rm H}}$ yields
\begin{align*}
    \mathbb{P}(d_{0,{\rm P}} > d) 
    &= \mathbb{E}_{\Phi_u}\Big[ \prod_{u_i \in \Phi_u} p_{{\rm void},{\rm P}}(\phi_i,d) \Big] \\
    &= \exp\left( -\lambda_{\rm u} \int_{\mathcal{S}_{R_{\rm H}}} \big(1 - p_{{\rm void},{\rm P}}(\phi,d)\big)\, {\rm d}S \right).
\end{align*}
Using ${\rm d}S = R_{\rm H}^2 \sin\phi\, {\rm d}\phi\, {\rm d}\theta$ and integrating over $\theta$ gives
\begin{align*}
    \mathbb{P}(d_{0,{\rm P}} > d) &= \exp\bigg( -2\pi R_{\rm H}^2 \lambda_{\rm u} \int_0^\pi \\
    &\left[1 - \exp(-\nu L(\phi,d))\right] \sin\phi\, {\rm d}\phi \bigg).
\end{align*}
Finally, by Lemma~\ref{lem:nearest_ring}, $L(\phi,d) > 0$ only for $\phi \in [\phi_1(d),\phi_2(d)]$, so the integral can be restricted to this belt. Taking $1 - \mathbb{P}(d_{0,{\rm P}} > d)$ gives the stated \ac{CDF}.
\end{IEEEproof}

\section{Proof of Theorem~\ref{thm:nearest_haps_bcp}}
\label{app:theorem3}
\begin{IEEEproof}
Fix an anchor $u \in \Phi_u$ at central angle $\phi$. The geometric intersection between the ball of radius $d$ and the patrol ring $\mathcal{C}(u,\gamma)$ is exactly the same as in the \ac{SCR-PCP} case. In particular, the visible arc length $L(\phi,d)$ is given by
\begin{align*}
    L(\phi,d) =
    \begin{cases}
        0, & \Psi(\phi,d) < 0,\\
        2a\, \arccos\left(1 - \Psi(\phi,d)\right), & 0 \le \Psi(\phi,d) < 2,\\
        2\pi a, & \Psi(\phi,d) \ge 2,
    \end{cases}
\end{align*}
with $\Psi(\phi,d)$ defined in Theorem~\ref{thm:nearest_haps_pcp}.

Under \ac{SCR-BCP}, conditioned on the anchor at $u$, the associated \ac{BPP} on $\mathcal{C}(u,\gamma)$ consists of exactly $n_{\rm H}$ i.i.d.\ points that are uniform on the ring. The probability that none of these $n_{\rm H}$ points falls inside the visible arc of length $L(\phi,d)$ is therefore $p_{{\rm void},{\rm B}}(\phi,d) = \left(1 - \frac{L(\phi,d)}{2\pi a}\right)^{n_{\rm H}}.$ Since $\Phi_{\rm H}^{\rm B}$ is a Cox process driven by the anchor \ac{PPP}, the event $\{d_{0,{\rm B}} > d\}$ is again the event that every ring has no \ac{HAP} within distance $d$, which conditioned on $\Phi_u$ has probability $\prod_{u_i \in \Phi_u} p_{{\rm void},{\rm B}}(\phi_i,d)$. Deconditioning over $\Phi_u$ using the PGFL of the homogeneous \ac{PPP} on $\mathcal{S}_{R_{\rm H}}$ yields
\begin{align*}
    \mathbb{P}(d_{0,{\rm B}} > d) 
    &= \mathbb{E}_{\Phi_u}\Big[ \prod_{u_i \in \Phi_u} p_{{\rm void},{\rm B}}(\phi_i,d) \Big] \\
    &= \exp\left( -\lambda_{\rm u} \int_{\mathcal{S}_{R_{\rm H}}} \big(1 - p_{{\rm void},{\rm B}}(\phi,d)\big)\, {\rm d}S \right).
\end{align*}
Substituting ${\rm d}S = R_{\rm H}^2 \sin\phi\, {\rm d}\phi\, {\rm d}\theta$ and integrating over $\theta$ gives
\begin{align*}
    \mathbb{P}(d_{0,{\rm B}} > d) &= \exp\bigg( -2\pi R_{\rm H}^2 \lambda_{\rm u} \int_0^\pi \\
    & \left[1 - \Big(1 - \tfrac{L(\phi,d)}{2\pi a}\Big)^{n_{\rm H}}\bigg] 
    \sin\phi\, {\rm d}\phi \right).
\end{align*}
As in the \ac{SCR-PCP} case, Lemma~\ref{lem:nearest_ring} implies that $L(\phi,d) > 0$ only for $\phi \in [\phi_1(d),\phi_2(d)]$, so the integral can be restricted to this belt. Taking $1 - \mathbb{P}(d_{0,{\rm B}} > d)$ yields the stated \ac{CDF}.
\end{IEEEproof}

\section{Proof of Theorem~\ref{thm:joint_cdf_bcp}}
\label{app:theorem4}
\begin{IEEEproof}
From the geometric analysis in Theorem~\ref{thm:nearest_haps_pcp}, the 3D visible arc length of a patrol ring anchored at a generic central angle $\varphi$ inside a sphere of radius $x$ is $L(\varphi,x)$. As established in Theorem~\ref{thm:nearest_haps_bcp}, the probability that \emph{no} \ac{HAP} lies within distance $x$ from this specific ring is
\begin{align}
    p_{{\rm void},{\rm B}}(\varphi,x) = \left(1 - \frac{L(\varphi,x)}{2\pi a}\right)^{n_{\rm H}}.
    \label{eq:void_bcp_ring_cdf}
\end{align}
The global void probability $\mathbb{P}(d_{0,{\rm B}} > x)$ is already fully characterized by evaluating \eqref{eq:ccdf_nearest_haps_bcp_recall} at $d=x$.

To obtain the joint CDF of $(d_{0,{\rm B}},\phi_{\rm s})$, we additionally track the angle of the serving anchor. By isotropy, the distribution of anchors is uniform over the sphere, and the serving anchor should be one of the anchors whose ring contributes at least one \ac{HAP} within distance $x$. The event $\{\phi_{\rm s}\le \phi\}$ therefore restricts the serving anchor to the spherical cap defined by $[0,\phi]$ in polar angle, while the void condition for distances exceeding $x$ still applies globally.

A convenient way to express the joint CDF is through its complementary spatial events:
\begin{align}
    \{d_{0,{\rm B}} \le x,\;\phi_{\rm s} \le \phi\}
    = \{d_{0,{\rm B}} \le x\} \setminus \{d_{0,{\rm B}} \le x,\;\phi_{\rm s} > \phi\}.
    \label{eq:event_decomposition_bcp_cdf}
\end{align}
The probability of the first term is simply $1-\mathbb{P}(d_{0,{\rm B}} > x)$. The second term corresponds to the event that the nearest \ac{HAP} lies within distance $x$, but originates from a ring whose anchor angle strictly exceeds $\phi$. Equivalently, there must be no \acp{HAP} within distance $x$ on rings with $\varphi\le\phi$, and at least one \ac{HAP} within distance $x$ on rings with $\varphi>\phi$.

Conditioned on the anchor \ac{PPP} $\Phi_u$, the probability that no \ac{HAP} within distance $x$ originates from rings in the region $\varphi\le\phi$ is $\prod_{u_i\in\Phi_u:\,\varphi_i\le\phi} p_{{\rm void},{\rm B}}(\varphi_i,x)$. Conversely, the probability that at least one \ac{HAP} within distance $x$ originates from rings in the region $\varphi>\phi$ is $1 - \prod_{u_i\in\Phi_u:\,\varphi_i>\phi} p_{{\rm void},{\rm B}}(\varphi_i,x).$ Because a \ac{PPP} is completely independent over disjoint spatial regions, the conditional probability of the joint event is the product of these two factors. Deconditioning with the PGFL over the disjoint spherical zones $[0,\phi]$ and $(\phi,\pi]$ yields
\begin{align}
    &\mathbb{P}(d_{0,{\rm B}} \le x,\;\phi_{\rm s} > \phi) = \nonumber \\
    &\exp\!\left( -2\pi\lambda_{\rm u}R_{\rm H}^2 \int_{0}^{\phi} \big(1-p_{{\rm void},{\rm B}}(\varphi,x)\big)\sin\varphi\,{\rm d}\varphi \right) \times  \nonumber\\ 
    &\bigg[ 1 - \exp\!\bigg( -2\pi\lambda_{\rm u}R_{\rm H}^2 \int_{\phi}^{\pi} \big(1-p_{{\rm void},{\rm B}}(\varphi,x)\big)\sin\varphi\,{\rm d}\varphi \bigg) \bigg],
    \label{eq:event_phi_gt_phi_cdf_repeat}
\end{align}
Finally, combining \eqref{eq:event_decomposition_bcp_cdf}, \eqref{eq:ccdf_nearest_haps_bcp_recall}, and \eqref{eq:event_phi_gt_phi_cdf_repeat} yields our result. Note that the sum of the integrals over $[0,\phi]$ and $(\phi,\pi]$ in \eqref{eq:event_phi_gt_phi_cdf_repeat} recovers the full integral in \eqref{eq:ccdf_nearest_haps_bcp_recall}, so the joint CDF is fully consistent with the marginal nearest-distance distribution in Theorem~\ref{thm:nearest_haps_bcp}.
\end{IEEEproof}

\section{Proof of Theorem~\ref{thm:coverage_pcp}}
\label{app:theorem5}
\begin{IEEEproof}
We first condition on the serving distance $d_{0,{\rm P}} = x$ and then decondition using its PDF. By the law of total probability,
\begin{align}
    &P_{\rm c, P}(\tau) = \mathbb{E}_{d_{0,{\rm P}}}\left[ \mathbb{P}\left( \frac{G_t h_0 x^{-\eta}}{I_{\rm s, P}} > \tau \,\Big|\, d_{0,{\rm P}} = x \right) \right] \nonumber\\ 
    &\hspace*{0.5cm}= \int_{D_{\min}}^\infty \mathbb{P}\left( h_0 > \frac{\tau x^\eta}{G_t} I_{\rm s, P} \,\Big|\, d_{0,{\rm P}} = x \right) f_{d_{0,{\rm P}}}(x)\, {\rm d}x.
\end{align}
Since $h_0 \sim \mathrm{Exp}(1)$ and is independent of $I_{\rm s, P}$, the inner probability can be expressed via the Laplace transform of $I_{\rm s, P}$:
\begin{align}
    &\mathbb{P}\left(h_0 > s I_{\rm s, P} \,\Big|\, d_{0,{\rm P}} = x\right) \nonumber\\
    &\hspace*{1.5cm}= \mathbb{E}_{I_{\rm s, P}}\left[ \mathbb{P}(h_0 > s I_{\rm s, P} \mid I_{\rm s, P}) \,\Big|\, d_{0,{\rm P}} = x \right] \nonumber\\
    &\hspace*{1.5cm}= \mathbb{E}_{I_{\rm s, P}}\left[ e^{-sI_{\rm s, P}} \,\Big|\, d_{0,{\rm P}} = x \right] 
    = \mathcal{L}_{I_{\rm s, P}}(s \mid x),
\end{align}
where $s = \tau x^\eta / G_t$. Substituting this identity into the expression for $P_{\rm c, P}(\tau)$ yields
\begin{align}
    P_{\rm c, P}(\tau) = \int_{D_{\min}}^\infty \mathcal{L}_{I_{\rm s, P}}\!\left(\frac{\tau x^\eta}{G_t} \,\Big|\, x\right) f_{d_{0,{\rm P}}}(x)\, {\rm d}x.
    \label{eq:Pc_step1_pcp}
\end{align}

\emph{1) Decomposition of the interference.}  
By Slivnyak’s theorem for Cox processes, conditioning on the existence of a serving \ac{HAP} at distance $x$ does not alter the distribution of the remaining points. The aggregate interference can be decomposed into $I_{\rm s, P} = I_{\rm s, P} + I_{\rm o, P}$. Conditioned on the serving ring geometry, in particular the central angle $\phi_{\rm s}$, these two components are independent, so $\mathcal{L}_{I_{\rm s, P}}(s \mid x) = \mathcal{L}_{I_{\rm s, P}}(s \mid x,\phi_{\rm s})\,\mathcal{L}_{I_{\rm o, P}}(s \mid x)$.

\emph{2) Exclusion region on a given ring.}  
Under nearest-neighbor association, no interfering \ac{HAP} can be closer than the serving distance $x$. For a ring at central angle $\phi$, the distance from the user to an interferer at local angle $\alpha$ is $d(\phi,\alpha)$ given by \eqref{eq:d_phi_alpha_cov_short}. The exclusion condition $d(\phi,\alpha) < x$ is equivalent to
\begin{align*}
    1 - \cos\alpha < \Psi(\phi,x) = \frac{x^2 - D_{\rm r}(\phi)^2}{2a R_\oplus \sin\phi},
\end{align*}
so interferers must satisfy the complementary condition $1-\cos\alpha \ge \Psi(\phi,x)$. This leads to the boundary $\alpha_0(x,\phi)$ as stated.

\emph{3) Same-ring interference.}  
On the serving ring at angle $\phi_{\rm s}$, the interfering \acp{HAP} form a one-dimensional \ac{PPP} of intensity $\nu$ along the circle of radius $a$. Conditioned on $x$ and $\phi_{\rm s}$, the interference is $I_{\rm s, P} = \sum_{\alpha \in \mathcal{P}_{u_{\rm s}}:\,\alpha \notin \text{exclusion}} G_I h_\alpha\, d(\phi_{\rm s},\alpha)^{-\eta}$. For a single interferer at distance $y$, $\mathbb{E}_h[e^{-s G_I h y^{-\eta}}] = 1/(1+s G_I y^{-\eta})$. Applying the PGFL of a 1D \ac{PPP} with intensity $\nu$ per unit arc length and using bilateral symmetry yields \eqref{eq:laplace_same_cov_pcp}.

\emph{4) Interference from other rings.}  
For a non-serving ring at angle $\phi$, the same construction gives the single-ring Laplace transform $\mathcal{L}_{I_{\rm r}}(s \mid x,\phi)$. The total interference from all non-serving rings is $I_{\rm o, P} = \sum_{u_i \in \Phi_u \setminus \{u_0\}} I_{\rm r}(u_i)$, with independent ring-wise contributions conditioned on $\Phi_u$. De-conditioning over $\Phi_u$ and using the PGFL of the homogeneous \ac{PPP} on $\mathcal{S}_{R_{\rm H}}$, together with ${\rm d}S = R_{\rm H}^2 \sin\phi\,{\rm d}\phi\,{\rm d}\theta$ and isotropy, leads to \eqref{eq:laplace_other_cov_pcp}.

Substituting $\mathcal{L}_{I_{\rm s}}$ and $\mathcal{L}_{I_{\rm o}}$ into \eqref{eq:Pc_step1_pcp} with $s = \tau x^\eta/G_t$ completes the proof.
\end{IEEEproof}

\section{Proof of Theorem~\ref{thm:coverage_bcp}}
\label{app:theorem6}
\begin{IEEEproof}
We first condition on the serving geometry $(d_{0,{\rm B}},\phi_{\rm s})=(x,\phi)$ and then average over its joint PDF. Under this conditioning, the downlink \ac{SIR} is
\begin{align*}
    \mathrm{SIR}_{\rm B} = \frac{G_t h_0 x^{-\eta}}{I_{\rm s, B} + I_{\rm o, B}},
\end{align*}
where $h_0\sim\mathrm{Exp}(1)$ is the fading gain of the serving link, $I_{\rm s, B}$ is the interference from the remaining $n_{\rm H}-1$ platforms on the serving ring, and $I_{\rm o, B}$ is the interference from all other rings. Using the memoryless property of exponential fading, the conditional success probability is
\begin{align}
    &\mathbb{P}(\mathrm{SIR}_{\rm B}>\tau \mid d_{0,{\rm B}}=x,\phi_{\rm s}=\phi) =  \nonumber \\
    &\hspace*{2cm} \mathbb{E}\left[ \exp\!\left( -\frac{\tau x^\eta}{G_t} (I_{\rm s, B}+I_{\rm o, B}) \right) \,\middle|\, x,\phi \right] \nonumber\\ 
    &\hspace*{2cm} = \mathcal{L}_{I_{\rm s, B}}\!\left(\tfrac{\tau x^\eta}{G_t}\,\middle|\,x,\phi\right) \mathcal{L}_{I_{\rm o, B}}\!\left(\tfrac{\tau x^\eta}{G_t}\,\middle|\,x\right),
    \label{eq:cond_success_bcp_cov}
\end{align}
where conditional independence of $I_{\rm s, B}$ and $I_{\rm o, B}$ given $(x,\phi)$ has been used. Averaging \eqref{eq:cond_success_bcp_cov} with respect to $f_{d_{0,{\rm B}},\phi_{\rm s}}(x,\phi)$ yields \eqref{eq:Pc_bcp_final}. It remains to derive the Laplace transforms.

\emph{1) Same-ring Laplace transform.}
Conditioned on the global serving distance $d_{0,{\rm B}}=x$, every non-serving ring at angle $\varphi$ must have its nearest \ac{HAP} strictly farther than $x$, i.e., $R_{\min}(\varphi)>x$, where $R_{\min}(\varphi) = \min_{1 \leq k \leq n_{\rm H}} l(k,\varphi),$ and $l(k,\varphi)$ is the distance from the user to the $k$-th \ac{HAP} on that ring. Under \ac{SCR-BCP}, the $n_{\rm H}$ angular positions $\{\alpha_k\}_{k=1}^{n_{\rm H}}$ on that ring are i.i.d.\ $\mathrm{Unif}[0,2\pi)$ unconditionally. The per-ring void probability at threshold $x$ is
\begin{align}
    \mathbb{P}\big(R_{\min}(\phi) > x \mid \phi\big) = \left(1 - \frac{L(\phi,x)}{2\pi a}\right)^{n_{\rm H}},
    \label{eq:per_ring_void_bcp}
\end{align}
where $L(\phi,x)$ is the visible arc length from Theorem~\ref{thm:nearest_haps_bcp}. Equivalently, by the exclusion boundary geometry in the \ac{SCR-PCP} coverage analysis, and given $R_{\min}(\varphi)>x$, all points must lie in the safe arc $\{\alpha:\,d(\varphi,\alpha)>x\}=[\alpha_0(x,\varphi),\,2\pi-\alpha_0(x,\varphi)],$ so the conditional angular density is
\begin{align}
    &f_{\alpha\mid D>x}(\alpha\mid\varphi,x) = \frac{{\bf 1}\{d(\varphi,\alpha)>x\}} {\displaystyle \int_0^{2\pi} {\bf 1}\{d(\varphi,\beta)>x\}\,{\rm d}\beta} \nonumber\\ 
    &\hspace*{-0.3cm}= \frac{1}{2\big(\pi-\alpha_0(x,\varphi)\big)}\, {\bf 1}\{\alpha\in[\alpha_0(x,\varphi),\,2\pi-\alpha_0(x,\varphi)]\},
    \label{eq:trunc_density_alpha_bcp_alpha0}
\end{align}
and the $n_{\rm H}$ angles remain i.i.d.\ under this conditioning.

For a non-serving ring at angle $\varphi$, define its aggregate interference as $I_{\rm ring}(\varphi) = \sum_{k=1}^{n_{\rm H}} G_I h_k d(\varphi,\alpha_k)^{-\alpha},$ where the angles $\alpha_k$ follow the truncated density \eqref{eq:trunc_density_alpha_bcp_alpha0}. The conditional Laplace transform of the interference generated by a \emph{single} platform on this ring is
\begin{align*}
    &\mathbb{E}\left[\frac{1}{1+s G_I d(\varphi,\alpha)^{-\eta}} \,\Big|\, d(\varphi,\alpha)>x\right] =\\
    &\frac{1}{2\big(\pi-\alpha_0(x,\varphi)\big)} \int_{\alpha_0(x,\varphi)}^{2\pi-\alpha_0(x,\varphi)} \frac{1}{1+s G_I d(\varphi,\alpha)^{-\eta}}\,{\rm d}\alpha \\
    &\overset{(a)}{=} \frac{1}{\pi-\alpha_0(x,\varphi)} \int_{\alpha_0(x,\varphi)}^{\pi} \frac{1}{1+s G_I d(\varphi,\alpha)^{-\eta}}\,{\rm d}\alpha.
\end{align*}
Step (a) follows directly from the symmetry relation in \eqref{eq:safe_arc_symmetry_bcp}, which states that any integral of a function of $d(\varphi,\alpha)$ over the safe arc $[\alpha_0(x,\varphi),\,2\pi-\alpha_0(x,\varphi)]$ can be written as twice the integral over $[\alpha_0(x,\varphi),\,\pi]$. We therefore define
\begin{align}
    \mathcal{Q}(s\mid x,\varphi) &= \frac{1}{\pi - \alpha_0(x,\varphi)} \int_{\alpha_0(x,\varphi)}^\pi \frac{1}{1+s G_I d(\varphi,\alpha)^{-\eta}} \, {\rm d}\alpha,
    \label{eq:Q_single_bcp_repeated}
\end{align}
which is the conditional Laplace transform of the interference contribution of a single truncated platform on a ring at angle $\varphi$.

On the serving ring at angle $\phi$, one platform is pinned at distance $x$ as the serving \ac{HAP}, and the remaining $n_{\rm H}-1$ platforms are distributed independently on the same safe arc with density \eqref{eq:trunc_density_alpha_bcp_alpha0}. By Rayleigh fading and conditional independence, the Laplace transform of the same-ring interference is $\mathcal{L}_{I_{\rm s, B}}(s\mid x,\phi) = \big(\mathcal{Q}(s\mid x,\phi)\big)^{n_{\rm H}-1},$ which gives \eqref{eq:LIs_bcp_final}.

\emph{2) Aggregate other-ring Laplace transform.}
For any non-serving ring at angle $\varphi$, conditioned on $R_{\min}(\varphi)>x$, all $n_{\rm H}$ platforms follow the truncated law in \eqref{eq:trunc_density_alpha_bcp_alpha0}, so the per-ring Laplace transform of its interference is $\big(\mathcal{Q}(s\mid x,\varphi)\big)^{n_{\rm H}}$. Such a ring appears in the post-conditioning configuration only if it satisfies the void event $R_{\min}(\varphi)>x$, which occurs with probability $p_{{\rm void},{\rm B}}(\varphi,x) = \big(1-L(\varphi,x)/(2\pi a)\big)^{n_{\rm H}}$ as in \eqref{eq:per_ring_void_bcp}. Under Palm conditioning with respect to the serving \ac{HAP}, the non-serving anchors form a homogeneous \ac{PPP} on $\mathcal{S}_{R_{\rm H}}$ with intensity $\lambda_{\rm u}$.

The interference from all other rings is $I_{\rm o, B} = \sum_{u\in\Phi_u\setminus\{u_{\rm s}\}} I_{\rm ring}(\phi_u),$ and its conditional Laplace transform given $d_{0,{\rm B}}=x$ is obtained via the PGFL of a \ac{PPP} of independent marked rings:
\begin{align*}
    &\mathcal{L}_{I_{\rm o, B}}(s\mid x) =\\
    &\exp\!\left( -\lambda_{\rm u} \int_{\mathcal{S}_{R_{\rm H}}} \big(1-\big(\mathcal{Q}(s\mid x,\varphi)\big)^{n_{\rm H}}\big) p_{{\rm void},{\rm B}}(\varphi,x)\, {\rm d}S \right).
\end{align*}
Substituting ${\rm d}S = R_{\rm H}^2\sin\varphi\,{\rm d}\varphi\,{\rm d}\theta$ and integrating over $\theta\in[0,2\pi)$ yields \eqref{eq:LIo_bcp_final}. Finally, substituting \eqref{eq:LIs_bcp_final} and \eqref{eq:LIo_bcp_final} into \eqref{eq:cond_success_bcp_cov}, and then averaging over $f_{d_{0,{\rm B}},\phi_{\rm s}}(x,\phi)$, completes the proof.
\end{IEEEproof}

\section{Proof of Lemma~\ref{lem:energy_opt}}
\label{app:lemma5}
\begin{IEEEproof}
By definition, $\eta_{{\rm CEE}, k}(\gamma) = \frac{P_{{\rm c}, k}(\tau;\gamma)}{P_{\mathrm{SCF}}(\gamma)}$. On the interior $(0,\phi_{\max})$, any stationary point $\gamma_{\mathrm{EE}}^\star$ of $\eta_{{\rm CEE}, k}$ satisfies $\frac{\partial}{\partial \gamma} \ln \eta_{{\rm CEE}, k}(\gamma) \Big|_{\gamma = \gamma_{\mathrm{EE}}^\star} = 0$, which yields
\begin{align}
    \frac{1}{P_{{\rm c}, k}(\tau;\gamma_{\mathrm{EE}}^\star)} \left. \frac{\partial P_{{\rm c}, k}(\tau;\gamma)}{\partial \gamma} \right|_{\gamma = \gamma_{\mathrm{EE}}^\star} \!\!\!\!\!=\! \frac{1}{P_{\mathrm{SCF}}(\gamma_{\mathrm{EE}}^\star)} \left. \frac{\partial P_{\mathrm{SCF}}(\gamma)}{\partial \gamma} \right|_{\gamma = \gamma_{\mathrm{EE}}^\star}.
    \label{eq:log_stationarity}
\end{align}
From Lemma~\ref{lem:power}, we can write
\begin{align*}
    P_{\mathrm{SCF}}(\gamma) = P_{\mathrm{SHF}}\big(1 + A \tan^{-2}\gamma\big), \quad A = \frac{V^4}{g^2 R_{\rm H}^2}.
\end{align*}
Differentiating with respect to $\gamma$ gives $\frac{\partial P_{\mathrm{SCF}}(\gamma)}{\partial \gamma} = P_{\mathrm{SHF}} \left[ -2 A \tan^{-3}\gamma \,\sec^2\gamma \right]$. Thus
\begin{align*}
    \frac{1}{P_{\mathrm{SCF}}(\gamma)} \frac{\partial P_{\mathrm{SCF}}(\gamma)}{\partial \gamma} &= \frac{-2 A \sec^2\gamma / \tan^3\gamma} {1 + A / \tan^2\gamma} \\ 
    &= \frac{-2 A \sec^2\gamma} {\tan^3\gamma + A \tan\gamma}.
\end{align*}
Substituting $A = V^4/(g^2 R_{\rm H}^2)$ into \eqref{eq:log_stationarity} yields \eqref{eq:optimal_gamma_condition_simple}, which completes the proof.
\end{IEEEproof}

\end{document}